\PassOptionsToPackage{table,svgnames}{xcolor}
\PassOptionsToPackage{greek,english}{babel}
\PassOptionsToPackage{scaled=.8}{beramono}

\documentclass[acmsmall,screen]{acmart}

\setcopyright{acmlicensed}
\copyrightyear{2026}
\acmYear{2026}
\acmDOI{XXXXXXX.XXXXXXX}
\acmJournal{TOPLAS}
\acmISBN{978-1-4503-XXXX-X/18/06} 

\usepackage{mathtools}

\usepackage{float}
\usepackage[noteubner, frozen]{showcode}
\usepackage{amsthm}
\usepackage{amsmath}
\usepackage{stmaryrd}
\usepackage{xspace}
\usepackage{paralist}
\usepackage{algorithm}
\usepackage{algpseudocode}
\usepackage{setspace}
\usepackage{wrapfig}
\usepackage{cleveref}

\usepackage{tcolorbox}
\colorlet{baseboxcolor}{green}
\colorlet{shadeboxcolor}{yellow}
\newtcolorbox{attributebox}[2][]{
    enhanced,
    breakable,
    colbacktitle=baseboxcolor!25!white,
    coltitle=baseboxcolor!10!black,
    colframe=baseboxcolor!35!black,
    boxrule=0.4pt,
    interior style={top color=shadeboxcolor!10!white,bottom color=baseboxcolor!10!white},
    title={\footnotesize #2},
    attach boxed title to top*,
    fontupper=\footnotesize,
    left=2pt,
    right=2pt,
    top=2pt,
    bottom=2pt,
    #1
}

\colorlet{baseviolationcolor}{red}
\colorlet{shadeviolationcolor}{yellow}
\newtcolorbox{violationbox}[2][]{
    enhanced,
    breakable,
    colbacktitle=baseviolationcolor!25!white,
    coltitle=baseviolationcolor!10!black,
    colframe=baseviolationcolor!35!black,
    boxrule=0.4pt,
    interior style={top color=shadeviolationcolor!10!white,bottom color=baseviolationcolor!10!white},
    title={\footnotesize #2},
    attach boxed title to top*,
    fontupper=\footnotesize,
    left=2pt,
    right=2pt,
    top=2pt,
    bottom=2pt,
    #1
}

\usepackage{bm}

\newcommand{\sect}[1]{\S\ref{#1}}

\usepackage{hyphenat}

\usepackage{url}
\usepackage{pifont}
\usepackage{centernot}
\usepackage[inline,shortlabels]{enumitem}
\usepackage{doi,url}
\usepackage{xcolor}

\newcommand{\tool}[1]{\texttt{#1}}
\newcommand{\nlgcheck}[0]{\tool{nlgcheck}}
\newcommand{\mypar}[1]{\vspace{10pt}\noindent\textbf{#1.}\quad}

\usepackage{tikz-cd}
\tikzstyle{every picture}+=[remember picture]
\tikzstyle{na} = [baseline=-2.5pt]
\usetikzlibrary{shapes.arrows}
\usetikzlibrary{positioning}

\newtheorem{property}{Lemma} 

\newenvironment{propertyproof}[1][Proof]{%
  \begin{proof}[#1]%
}{%
  \end{proof}%
}

\usepackage{soul}

\begin{document}

\title{From Separate Compilation to Sound Language Composition}

\author{Federico Bruzzone}
\email{federico.bruzzone@unimi.it}
\orcid{0009-0004-6086-8810} 
\affiliation{%
   \institution{Universit\`a degli Studi di Milano}
   \department{Computer Science Department}
   \city{Milan}
   \state{Italy}
   \country{Italy}
}
\author{Walter Cazzola}
\authornote{Corresponding author.}
\authornotemark[0]
\email{cazzola@di.unimi.it}
\orcid{0000-0002-4652-8113} 
\email{cazzola@di.unimi.it}
\affiliation{%
   \institution{Universit\`a degli Studi di Milano}
   \department{Computer Science Department}
   \city{Milan}
   \state{Italy}
   \country{Italy}
}
\author{Luca Favalli}
\email{luca.favalli@unimi.it}
\orcid{0000-0001-7452-2440} 
\affiliation{%
   \institution{Universit\`a degli Studi di Milano}
   \department{Computer Science Department}
   \city{Milan}
   \state{Italy}
   \country{Italy}
}

\begin{abstract}
    The development of programming languages involves complex theoretical and practical challenges, particularly when addressing modularity and reusability through language extensions. While language workbenches aim to enable modular development under the constraints of the language extension problem, one critical constraint—separate compilation—is often relaxed due to its complexity. However, this relaxation undermines artifact reusability and integration with common dependency systems. A key difficulty under separate compilation arises from managing attribute grammars, as extensions may introduce new attributes that invalidate previously generated abstract syntax tree structures. Existing approaches, such as the use of dynamic maps in the \tool{Neverlang} workbench, favor flexibility at the cost of compile-time correctness, leading to potential runtime errors due to undefined attributes.
    This work addresses this issue by introducing \nlgcheck{}, a theoretically sound static analysis tool based on data-flow analysis for the \tool{Neverlang} language workbench.\ \nlgcheck{} detects potential runtime errors—such as undefined attribute accesses—at compile time, preserving separate compilation while maintaining strong static correctness guarantees. Experimental evaluation using mutation testing on \tool{Neverlang}-based projects demonstrates that \nlgcheck{} effectively enhances robustness without sacrificing modularity or flexibility and with a level of performance that does not impede its adoption in daily development activities.
\end{abstract}

\begin{CCSXML}
   <ccs2012>
   <concept>
   <concept_id>10003752.10010124.10010138.10010143</concept_id>
   <concept_desc>Theory of computation~Program analysis</concept_desc>
   <concept_significance>500</concept_significance>
   </concept>
   <concept>
   <concept_id>10011007.10011006.10011041</concept_id>
   <concept_desc>Software and its engineering~Compilers</concept_desc>
   <concept_significance>300</concept_significance>
   </concept>
   </ccs2012>
\end{CCSXML}

\ccsdesc[500]{Theory of computation~Program analysis}
\ccsdesc[300]{Software and its engineering~Compilers}

\keywords{language composition, separate compilation, modular language design, sound compositional semantics, domain-specific languages, language frameworks, attribute grammars, static analysis}

\received{20 February 2007}
\received[revised]{12 March 2009}
\received[accepted]{5 June 2009}

\maketitle

\section{Introduction}\label{sec:intro}
Development of programming languages is a complex activity that requires mastery of both theoretical and practical concepts spanning aspects such as grammars, parsers, optimization techniques, architectures, and software engineering. This is true for both general-purpose languages (GPL) and domain-specific languages (DSL) although with varying degrees of complexity and different challenges: where GPLs are usually larger and more complex projects overall, DSLs require deep knowledge of the application domain and often involve more interaction with domain experts that may or may not coincide with the end users.

Although each programming language is unique and has been designed for a specific purpose~\cite{Bruzzone26-preprint}, it has been observed~\cite{Mendez-Acuna16} that many provide similar language constructs among them, due to recurring modeling patterns and abstractions~\cite{Cazzola16c}.
In this context, the challenge for language designers is to take advantage of such commonalities by reusing any existing software artifacts as much as possible. In this regard, the \textit{language extension problem}~\cite{Leduc20} is a popular variant of the even more popular \textit{expression problem}~\cite{Wadler98}, modified to suit the needs and constraints of programming languages and their extension. Among the tools for language implementation, \textit{language workbenches}~\cite{Fowler05} strive to enable purely modular development and composition techniques by adhering to the constraints of the language extension problem and by providing the means to perform language composition at several levels of granularity~\cite{Erdweg12, Cazzola23b}.

Despite the efforts made to achieve this result, one among the constraints of the language extension problem is often (if not always) disregarded. In the authors' own words~\cite{Leduc20}: «the constraint of \textit{separate compilation} usually impacts other non-functional properties (\dots). Consequently, it can be worthwhile to relax the separate compilation constraint in order to comply with other non-functional properties». While the increase in complexity related to separate compilation is undeniable, we believe that relaxing this constraint can severely hinder the actual reusability of language artifacts: for instance, without separate compilation, language extension is only possible when source code is available and some among the most popular repositories for dependency management such as \tool{Maven} and \tool{Cargo} only ship binaries~\cite{Bruzzone26-preprint}, thus limiting the possibility to reuse language libraries from the web. Among the complexities brought by separate compilation, we believe the most pressing one to be the implementation of attribute grammars~\cite{Knuth68}: in most language workbenches, the compiler generates data structures for the abstract syntax tree (AST) nodes that may appear in the language under development. Then, semantic actions~\cite{Aho06} can be associated with each node to implement the semantics. However, to properly create the data structure for each AST node, it is necessary to foresee all the grammar attributes that may be \textit{inherited} or \textit{synthesized} (\sect{sec:attribute-grammars}) by that node, as each attribute will need a corresponding field within that data structure. It is apparent that when separate compilation is permitted, a language construct may be extended with more semantic actions at a later time, possibly adding more attributes, thus rendering the AST data structures obsolete. Some language workbenches with separate compilation such as \tool{Neverlang}~\cite{Cazzola12c,Cazzola13e,Cazzola15c} simply ignore this problem by using generic and dynamic maps to implement grammar attributes instead of generating specific data structures for its AST\@. While this solution greatly improves the composability among artifacts thanks to loose coupling, it denies any compile-time composition correctness guarantee, reducing the robustness of programming languages developed this way and increasing the likelihood of defects in the composition to go unnoticed and become evident only at runtime.
Since the 2000s, as \citet{DeMoor00} pointed out, ``the use of an attribute that has not been defined'' is a common source of runtime errors in attribute grammar-based systems---a problem that could be mitigated by an appropriate type system or static analysis (\sect{sec:program-dependence-analysis}).

With this in mind, in the \tool{Neverlang} world, an attempt had already been made by developing a type system and its type inference algorithm~\cite{Cazzola16h}.
This approach suffered from two main limitations: \begin{inparaenum}[(i)]
    \item separate compilation was not supported, and
    \item it was not fully modularizable.
\end{inparaenum}
These issues arose because, in both cases, it needed to see the entire composition of language constructs.
It is therefore of particular interest to determine to which extent the runtime errors which can arise in a \tool{Neverlang}-based project can also be detected statically, without losing the separate compilation constraint and without tightening the coupling among language constructs.
Thus, the development of a tool that performs such static checks and warns developers about potentially erroneous runtime instructions becomes a natural objective.
With such a tool, we believe language workbenches can achieve the best of both worlds by keeping a loose level of coupling while maintaining static correctness and without relaxing the separate compilation constraint of the language extension problem.

In this work, we present a generic approach to achieve this goal based on data-flow analysis and an implementation thereof for the \tool{Neverlang} language workbench, dubbed \nlgcheck{}. We show that, for realistic scenarios, \nlgcheck{} can bridge the gap with the static analysis capabilities offered by compilers without separate compilations by navigating a representation of all possible ASTs that can be generated for a given grammar. Meanwhile, \nlgcheck{} can even outmatch the capabilities typically offered by compilers because it can prevent any null pointer exceptions that may occur when an attribute is generated within conditional statements and specific code branches and then retrieved elsewhere.
Finally, we complement our contribution with an experiment performed on \tool{Neverlang}-based projects by estimating the capability of \nlgcheck{} to detect errors through mutation testing.

The remainder of this paper is structured as follows.
Section~\ref{sec:background} provides the necessary background, introducing the attribute grammars, the program dependence analysis, and the \tool{Neverlang} language workbench.
Section~\ref{sect:contribution} presents our main contribution: a generic, data-flow-based approach for statically checking attribute usage in the context of language composition with separate compilation, detailing the theoretical underpinnings and the rationale behind our method.
Section~\ref{sect:implementation} describes the implementation of our approach in the context of the \tool{Neverlang} language workbench.
Section~\ref{sec:evaluation} evaluates the effectiveness of \nlgcheck{} through experiments on real-world \tool{Neverlang}-based projects, including a mutation testing campaign to assess its error-detection capabilities.
Section~\ref{sec:related-work} discusses related work, comparing our approach to existing solutions.
Finally, Section~\ref{sec:conclusion} concludes the paper, summarizing our findings and outlining directions for future research.
An appendix provides additional technical details and supporting material.

\section{Background}\label{sec:background}

\subsection{Attribute Grammars}\label{sec:attribute-grammars}
A formal extension of context-free grammars is the concept of \textit{attribute grammars} (AGs), introduced by \citet{Knuth68} as a way to integrate semantics into the syntactic structure of a language. In his seminal paper ``Semantics of Context-Free Languages'', the productions of a grammar are enriched with attributes and semantic equations that evaluate these attributes based on parse tree context---enabling systematic semantic analysis and evaluation within grammar-based systems. Attributes allow the transmission of information between different parts of the abstract syntax tree (AST).
Just as parsing strategies are typically divided into \textit{bottom-up} and \textit{top-down} approaches, attributes are categorized according to the direction in which their values propagate: \textit{synthesized} or \textit{inherited}~\cite{Aho06, Knuth90}, respectively.
To formally define the attributes, let \(G = \langle \Sigma, N, S, \Pi \rangle\) be a grammar, where
\begin{itemize}
   \item \(\Sigma\) is a set of terminal symbols,
   \item \(N\) is a set of non-terminal symbols,
   \item \(S \in N\) is the distinguished start symbol.
   \item \(\Pi\) is a set of productions of the form \(A \rightarrow \alpha\), where \(A \in N\) and \(\alpha \in {(N \cup T)}^*\), and
\end{itemize}

\smallskip\noindent\textbf{Synthesized Attributes.}\quad A \textit{synthesized attribute} is an attribute whose value is computed from the values of the attributes of its children nodes and propagated upward in the tree. Formally, if \(A \rightarrow \alpha_1 \dots \alpha_n \in P\) is a production, then a synthesized attribute \(s\) of \(A\) can be defined as:
\begin{equation*}
   A.s = f(\alpha_{j_1}.s_1, \alpha_{j_2}.s_2, \dots, \alpha_{j_m}.s_m)
\end{equation*}
where \(f\) is a function that computes the value of \(A.s\) based on the synthesized attributes \(s_1, s_2, \dots, s_m\) of its children nodes \(\{\alpha_{j_1}, \alpha_{j_2}, \dots, \alpha_{j_m}\} \subseteq \{\alpha_1, \alpha_2, \dots, \alpha_n\}\).

\smallskip\noindent\textbf{Inherited Attributes.}\quad As stated by \citet{Knuth90}, P. Wegner introduced the \textit{inherited attribute} concept. Its value is computed from the values of the attributes of its parent and/or siblings nodes and propagated downward in the tree.
Formally, for a production $A \rightarrow \alpha_1 \dots \alpha_n \in P$, an inherited attribute $i$ of $\alpha_k$ ($1 \le k \le n$) can be defined as:
\begin{equation*}
   \alpha_k.i = g(A.i_1, \dots, A.i_p, \alpha_{l_1}.s_1, \dots, \alpha_{l_q}.s_q)
\end{equation*}
where $g$ is a function that computes $\alpha_k.i$ based on the inherited attributes $i_1,\dots,i_p$ of the parent node $A$ and the synthesized attributes $s_1, \dots, s_q$ of some sibling nodes $\{\alpha_{l_1}, \dots, \alpha_{l_q}\} \subseteq \{\alpha_1, \dots, \alpha_n\} \setminus \{\alpha_k\}$.

\textit{Right-attributed grammars} (R-AGs) are a class of AGs where all attributes are synthesized---no inherited attributes are used. They can be incorporated into both bottom-up and top-down parsing strategies. On the other hand, \textit{left-attributed grammars} (L-AGs) are a strict superset of R-AGs that allows the evaluation of attributes in a single left-to-right traversal of the AST\@. Any R-AG can be transformed into an equivalent L-AG, but the reverse is not always true~\cite{Wilhelm13}.

Over the years, several comprehensive works have shaped the study and application of AGs. Early collections such as ``Attribute Grammars: Definitions, Systems and Bibliography'' provided formal foundations, implementation frameworks, and an extensive bibliography~\cite{Deransart88}. A comprehensive survey by \citet{Paakki95} classified modern extensions of attribute grammars---structured, modular, object-oriented, logical, and functional variants---and reviewed techniques for parallel and incremental evaluation.
In parallel, foundational research addressed decidability and efficiency issues.\ \citet{Jazayeri75} established the inherent exponential complexity of circularity tests, while later work by \citet{Deransart84} developed more efficient algorithms to improve circularity checking and ensure consistent attribute evaluation.

\subsubsection{Reference Attribute Grammars}\label{sec:reference-attribute-grammars}
A major limitation of canonical AGs is their handling of \textit{non-local dependencies}, i.e., situations where a property of one syntax tree node depends on properties of nodes arbitrarily far away in the tree~\cite{Hedin00}. Before that \citet{Hedin00} introduced \textit{reference attribute grammars} (RAGs), several extensions to AGs were proposed to address this limitation~\cite{Hedin94, Hoover87, Beshers85, Boyland98, Johnson82, Johnson85, Kaiser85, Vorthmann93}.
The necessity for such extensions arises from the fact that many semantic analyses, such as \textit{name resolution} and \textit{type checking}, inherently involve non-local dependencies. RAGs address this by allowing attributes to hold references to other nodes in the syntax tree, enabling direct access to their attributes regardless of their position in the tree. This capability is particularly useful for implementing complex language features, such as \textit{scoping rules} and \textit{inheritance hierarchies} in object-oriented languages.

\subsection{Program Dependence Analysis}\label{sec:program-dependence-analysis}
Foundational to many static program analyses is the concept of \textit{program dependence analysis} (PDA). The major seminal work in this area are the \citet{Towle76}'s PhD thesis, \citet{Banerjee79}, and \citet{Allen83b}, and the papers by \citet{Wolfe82} and \citet{Banerjee88}. PDA identifies dependencies between different parts of a program, that is, a statement \(A\) is said to \textit{depend} on another statement \(A\) if the execution of \(A\) requires information from \(A\).
These dependencies can be classified into two main categories: \textit{control dependencies} (\sect{sec:control-dependencies}) and \textit{data dependencies} (\sect{sec:data-dependencies}).

\subsubsection{Dominance}\label{sec:dominance}

F. Allen~\cite{Allen70} introduced the concept of \textit{control flow graph} (CFG) as a way to represent all possible paths that a program can take during its execution.
Formally, a CFG is a directed graph \(G = (V, E, e_y, e_t)\), where:
\begin{itemize}
   \item \(V\) is a set of vertices (or nodes) representing basic blocks of code,
   \item \(E\) is a set of edges representing control flow between nodes,
   \item \(e_y\) is the entry node, and
   \item \(e_t\) is the exit node.
\end{itemize}
In a CFG, a node \(A\) is said to \textit{(pre)dominate} another node \(B\) if every path from \(e_y\) to \(B\) must pass through \(A\). Formally, \(A \ \text{dom} \ B\) \textit{iif}:
\begin{equation*}
   \forall P \in \text{Paths}(e_y, B), \ A \in P
\end{equation*}
where \(\text{Paths}(e_y, B)\) is the set of all paths from the \(e_y\) to \(B\).
The \textit{strictly dominates}  relation (\(A \ \overline{\text{dom}} \ B\)) is defined similarly, but it requires that \(A \neq B\).
The \textit{immediate dominator} of a node \(B\) is the unique node \(A\) such that \(A\) strictly dominates \(B\) and does not strictly dominate any other node that also dominates \(B\); every node has an immediate dominator, except \(e_y\)~\cite{Lengauer79}.
Intuitively, \(A\) is the last required ``checkpoint'' before reaching \(B\) in the control flow.
Formally, \(A \ \text{idom} \ B\) \textit{iif}:
\begin{equation*}
   A \ \overline{\text{dom}} \ B \land \forall C \in V : (C \ \text{dom} \ B \land C \neq A) \Rightarrow (A \not\overline{\text{dom}} \ C).
\end{equation*}
The \textit{dominance frontier} (DF) of a node \(A\) is the set of all nodes \(B\) such that \(A\) dominates an immediate predecessor of \(B\), but \(A\) does not strictly dominate \(B\).
Formally, the DF of \(A\) is defined as:
\begin{equation*}
   \text{DF}(A) = \{ B \in V \mid \exists C \in \text{Pred}(B) : A \ \text{dom} \ C \land A \not\overline{\text{dom}} \ B \}
\end{equation*}
\begin{wrapfigure}{r}{0.35\linewidth}
   \centering
   \captionsetup{type=listing,skip=2pt}
    \showc[0.5\linewidth]{dominance-frontier-example.c}
    \includegraphics[width=1\linewidth]{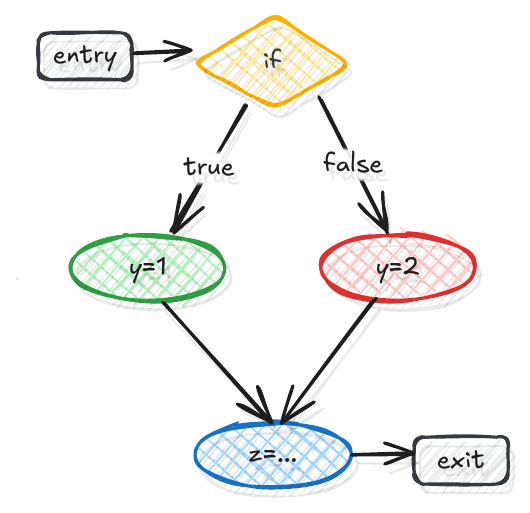}
   \caption{Dominance frontier.}%
   \label{lst:dominance-frontier-example}
\end{wrapfigure}
where \(\text{Pred}(B)\) is the set of immediate predecessors of \(B\) in the CFG\@.
Intuitively, the DF of \(A\) is the set of nodes where the control exercised by \(A\) ends.
In the Listing~\ref{lst:dominance-frontier-example}, the DF of the node \colorbox{green!20}{y=1} is the set \(\{\text{\colorbox{blue!20}{z=\ldots}}\}\) because:
\begin{enumerate*}
    \item the node \colorbox{green!20}{y=1} dominates a predecessor of \colorbox{blue!20}{z=\ldots} (itself), and
    \item the node \colorbox{green!20}{y=1} does not strictly dominate \colorbox{blue!20}{z=\ldots} (since \colorbox{red!20}{y=2} is another predecessor of \colorbox{blue!20}{z=\ldots} that is not dominated by \colorbox{green!20}{y=1}).
\end{enumerate*}

\textit{Dominator trees} are a tree-like structure rooted at \(e_y\), where each node \(A\) has a child \(B\) if \(A\) is the immediate dominator of \(B\).
The \textit{post-dominance} relation is the dual of dominance. \(A \ \text{dom}_{post} \ B\) \textit{iif} \(\forall P \in \text{Paths}(B, e_t), A \in P\) and so on.
\textit{Dominance} was first introduced by \citet{Prosser59}  in a 1959 study on the analysis of flow diagrams. However, Prosser did not provide an algorithm for computing dominance; a practical method for its computation was proposed ten years later by \citet{Lowry69}.\ \citet{Lengauer79}, by using \textit{union-find} with path compression, developed a more efficient algorithm for computing dominator trees in \textit{almost} linear time improving existing solutions~\cite{Aho72, Purdom72b}, then adopted by GCC\@. Twenty year later, \citet{Cooper01} presented a more engineered algorithm for computing them relying on the \textit{reverse post-order} (RPO) traversal of the CFG\@. LLVM, on the other hand, uses the algorithm by \citet{Georgiadis06} which is more efficient in practice but has a worst-case time complexity of \(O(n^2)\); it also supports incremental updates~\cite{Georgiadis12}.

\subsubsection{Control Dependencies}\label{sec:control-dependencies}
\textit{Control Dependencies} are post-dominance frontier in the CFG~\cite{Cytron89}. They refer to the dependencies that arise from the control flow of a program, such as conditional statements and loops. A statement \(A\) is said to be control dependent on another statement \(B\) if the execution of \(A\) depends on the outcome of \(B\).
Formally, \(A \overset{c}{\rightarrow} B\) if the following conditions hold:
\begin{equation*}
   \exists P \in \text{Paths}(A, B) : \forall C \in P, C \neq B \Rightarrow C \rightsquigarrow A \text{ in each } P \in \text{Paths}(e_y, e_t)
\end{equation*}
\begin{equation*}
   \exists P\in \text{Paths}(B, e_y) : A \notin P
\end{equation*}
where \(\rightsquigarrow\) means that a path from \(C\) to \(A\) exists in the CFG\@.
That is, \(\forall C \in P, A \ \text{dom}_{post} \ C\) and \(B \ \not\text{dom}_{post} \ A\).
Due to the impact of data dependencies on control dependencies, the two concepts are often intertwined.
J. Allen et al.~\cite{Allen83} show how control dependencies can be transformed into data dependencies using the \textit{if-conversion} technique. This approach eliminates conditional execution by evaluating both control-flow paths unconditionally and then using a predicate to select the appropriate result, which is finally assigned to the target variable.
Listing~\ref{lst:control-dependency-example} illustrates a simple example of control dependency, where the statement \texttt{x = -1} is control dependent on the if-statement.

\subsubsection{Data Dependencies}\label{sec:data-dependencies}
\citet{Banerjee88} formalized the concept of \textit{data dependencies} within the context of program analysis. Data dependencies arise when the execution of one statement depends on the data produced by another statement.
\textit{Data hazards} are a specific type of data dependency that can occur in pipelined processors, where the execution of one instruction depends on the result of a previous instruction that has not yet completed; they can lead to \textit{race conditions}. Three main types of data hazards are identified: \textit{read-after-write} (RAW), \textit{write-after-read} (WAR), and \textit{write-after-write} (WAW).
In accordance with Bernstein~\cite{Bernstein66}, by contrast, data dependencies can be classified into: \begin{enumerate*}
   \item \textit{(true)flow dependencies},
   \item \textit{anti-dependencies}, and
   \item \textit{output dependencies}.
\end{enumerate*}
In the following, we use the notation \(s_i\) to refer to the \(i\)-th statement in the Listing~\ref{lst:data-dependency-example}.
\begin{wrapfigure}{r}{0.35\linewidth}
   \captionsetup{type=listing,skip=0pt}
   \centering
   \showc*[0.9\linewidth]{control-dependency-example.c}
   \caption{Control dependency.}%
   \label{lst:control-dependency-example}
    \showc*[0.9\linewidth]{data-dependency-example.c}
   \caption{Data dependency.}%
   \label{lst:data-dependency-example}
\end{wrapfigure}
\smallskip\noindent\textbf{Flow Dependency.}\quad  A statement \(s_3\) is said to be \textit{flow dependent} on another statement \(s_2\) (denoted by \(s_3 \ \delta^f \ s_2\)) if \(s_3\) uses a value produced by \(s_3\). RAW hazards occur when a violation of this dependency happens.

\smallskip\noindent\textbf{Anti-Dependency.}\quad An \textit{anti-dependency} takes place when a statement \(s_4\) writes to a variable that is read by another statement \(s_3\) (denoted by \(s_4 \ \delta^a \ s_3\)). This can lead to WAR hazards if the order of execution is not preserved.

\smallskip\noindent\textbf{Output Dependency.}\quad An \textit{output dependency} is symmetric and occurs when two statements \(s_4\) and \(s_5\) write to the same variable (denoted by \(s_4 \ \delta^o \ s_5\)). WAW hazards happens if a violation of this dependency occurs.

\vspace{3pt}\noindent An additional type of data dependency is the \textit{input dependency}, which occurs when two statements read from the same variable. Anti-dependencies and output dependencies are often grouped together as \textit{storage-related dependencies}~\cite{Aho06, Hennessy11}---they may be removed by renaming variables.

\section{Software Frameworks}\label{sec:software-frameworks}

This section introduces the two main software frameworks that underpin the work described in this paper: the \tool{Neverlang} language workbench and the \tool{Soot} static analysis framework. Neverlang provides a modular infrastructure for language engineering, enabling the definition, composition, and extension of programming languages through reusable components. Soot, on the other hand, is a mature and widely used framework for analyzing and transforming Java programs at the bytecode and intermediate representation levels. Together, these tools form the foundation for our approach to integrating advanced program analysis capabilities into language development workflows.

\subsection{Neverlang in a Nutshell}\label{sec:neverlang-in-a-nutshell}
The \texttt{Neverlang}~\cite{Cazzola12c, Cazzola13e, Cazzola15c} language workbench\footnote{
    The term \textit{language workbench} was coined by Martin Fowler in his 2005 article ``Language Workbenches: The Killer-App for Domain Specific Languages''~\cite{Fowler05}. They are software tools that provide an integrated environment for designing, implementing, and maintaining programming languages. Language workbenches typically offer features such as syntax definition, semantic analysis, code generation, and editing support.
} promotes code reusability and separation of concerns leveraging the \textit{language-feature} approach.

\begin{wrapfigure}{r}{0.575\linewidth}
   \setminted[neverlang]{escapeinside=``}
   \captionsetup{type=listing,skip=7pt}
   \showneverlang*{Backup.nl}\vskip -10pt%
   \caption{Syntax and semantics for the backup task.}\label{lst:backup}
\end{wrapfigure}
The basic development unit is the \inlineneverlang{module}, as shown in line 1 of Listing~\ref{lst:backup}.
A module may contain a \inlineneverlang{reference syntax} and could have zero or multiple \inlineneverlang{role}s. A role, used to define the semantics, is a composition unit that defines actions that should be executed when some syntax is recognized, as defined by \textit{syntax-directed translation}~\cite{Aho86}.
Syntax definitions are defined using \textit{Backus-Naur form} (BNF) grammars, represented as sets of \textit{productions} and \textit{terminals}.
Syntax definitions and semantic \inlineneverlang{role}s are tied together using \inlineneverlang{slice}s.
Listing~\ref{lst:backup} shows a simple example of a Neverlang module implementing a backup task of the \texttt{LogLang} LPL\@. Reference syntax is defined in lines 2--6; the \textit{categories} (line 5) are used to generate the syntax highlighting for the IDEs.
Semantic actions may be attached to a non-terminal using the production's label as a reference, or using the position of the non-terminal in the grammar, as shown in line 8, numbering start with 0 from the top left to the bottom right.
The two \texttt{String} non-terminals on the right-hand side of the \texttt{Backup} production are referenced using 1 and 2, respectively.
Each \inlineneverlang{role} is a compilation phase that can be executed in a specific order, as shown in line 24.
In contrast, the \texttt{BackupSlice} (lines 14--18) reveals how the syntax and semantics are tied together; choosing the \inlineneverlang{concrete syntax} from the Backup module (line 15), and two \inlineneverlang{role}s from two different modules (lines 16--17).
Finally, the \inlineneverlang{language} can be created by composing multiple \inlineneverlang{slices} (line 20).
The composition in Neverlang is twofold: between modules and between slices. Thus, the grammars are merged to generate the complete language parser. On the other hand, the semantic actions are composed in a pipeline, and each \inlineneverlang{role} traverses the syntax tree in the order specified in the \inlineneverlang{roles} clause (line 24).

Since 2019~\cite{Cazzola19}, Neverlang has always had an interest in providing the \textit{editing support} for the languages developed with it. Initially, the editing support was provided through the \textit{Eclipse IDE} plugin, which was recently replaced by the \textit{de facto} standard LSP and DAP~\cite{Cazzola25b}. To provide LSP services, Neverlang leverages the \texttt{Typelang} DSL---a language for type systems definition---to generate \textit{language servers} that can be used by any editor that supports the LSP\@.

\subsection{Soot Library}\label{sec:soot-library}
\texttt{Soot}~\cite{Vallee-Rai99, Vallee-Rai00} is a long-lived Java analysis and transformation framework. Initially conceived for bytecode optimization, it has evolved into a comprehensive tool for static analysis~\cite{Vallee-Rai99, Sundaresan00, Pominville01}.
A recent effort (``SootUp'') completely reimplements Soot with a more modular design, modernizing the original tool while preserving its familiar APIs~\cite{Karakaya24}.

Soot's main strength lies in its intermediate representations (IRs), in particular \textit{Jimple}, a typed three-address code designed to make control flow and data flow explicit and thus simplify static analyses. Other IRs (\textit{Shimple} in SSA form, \textit{Baf} for bytecode-level rewriting~\cite{Rudys02}, and \textit{Grimp} for decompilation~\cite{Miecznikowski03}) extend its flexibility~\cite{Lam11}.
On top of these IRs, Soot offers standard analyses such as call-graph construction~\cite{Karakaya24}, pointer analysis (\textit{Spark})~\cite{Lhotak03}, and data-flow analysis frameworks (\textit{Heros})~\cite{Bodden12,Reps95,Sagiv96}. These enable both whole-program analyses and program transformations, with outputs ranging from optimized bytecode to instrumented programs.

Soot has been applied to a wide range of program analysis and transformation tasks. Early on it was used for Java optimizations and class file instrumentation~\cite{Dann19}, but it quickly found broad academic use. It has underpinned research in pointer analysis, concurrency analysis, symbolic execution, and the static portions of hybrid analysis techniques~\cite{Lam11}.

\section{Contribution}\label{sect:contribution}
This section presents the concept behind \nlgcheck{}, which represents the bulk of our contribution.

\subsection{Notation}\label{ssect:notation}
A \textbf{grammar} is defined as a quadruple $(\Sigma, N, S, \Pi)$, in which $\Sigma$ is a set of terminal symbols (not interesting in this context, since attributes are only attached to nonterminal symbols), $N$ is a set of nonterminal symbols, $S \in N$ is the start nonterminal symbol (which in \tool{Neverlang} is always named \texttt{Program} by convention), and $\Pi$ is a set of production rules.

A \textbf{production} $p \in \Pi$ is written in the form $X_0 \leftarrow X_1 \cdots X_n$, with $X_i \in N$ all being nonterminal symbols, and has the meaning that during the parsing process, $X_0$ (which is called the \textbf{left-hand side} of the production) can be replaced by the $X_1 \cdots X_n$ sequence (called the \textbf{right-hand side} of the production). The notation $|p|$ is used to indicate the number $n + 1$ of nonterminals forming the production (one on the left-hand side and $n$ on the right-hand side) and $p[i]$ to indicate the $i$-th nonterminal $X_i$.\footnote{Note that this is a simplification of the definition of production, in which the right-hand side is a sequence that can contain both terminal and nonterminal symbols. However, this does not pose a limitation in this context, since our approach focuses on the analysis of attributes attached to nonterminals.}

Given a production $p \in \Pi$, we will refer to the attribute named $a$ on the $i$-th nonterminal of $p$ with the notation $p[i].a$: with $p[0].a$, we indicate an attribute on the left-hand nonterminal of the production, and with $p[i].a$ (with $i \ge 1$), an attribute on one of the nonterminals on the right-hand side.

A \textbf{semantic action}, for the scope of this section, can be defined as a sequence of executable code attached to a production, some of which can set or access attributes associated with the production's nonterminals.\footnote{This is also a simplification, but it suffices to discuss anything relevant to the problem we are dealing with.}

Any attribute set or accessed by an instruction has a \textbf{type}, which is usually known during the compilation of the semantic action code. We call $T$ the set of all available types.
$T$ is partially ordered since, given two types $t_1, t_2 \in T$, exactly one of the following is true:
\begin{compactitem}
    \item $t_1$ and $t_2$ are the same type ($t_1 = t_2$);
    \item $t_1$ is a subtype of $t_2$ ($t_1 <: t_2$);
    \item $t_1$ is a supertype of $t_2$ ($t_1 :> t_2$);
    \item none of the above ($t_1$ and $t_2$ are unrelated).
\end{compactitem}

To refer to the type of an attribute $p[i].a$, we use the notation $\tau(p[i].a)$. More precisely, given $\mathbb{N}$ as the set of natural numbers and $I$ as the set of valid attribute names,\footnote{Also called \textit{identifiers}; for all intents and purposes, we can assume that the set of valid identifiers is a subset of all ASCII strings.} we can define $\tau$ as a partial function $\tau: \Pi \times \mathbb{N} \times I \rightarrow T$ that maps the triplet (production, index of a nonterminal in the production, and attribute name) to the type of the corresponding attribute. $\tau(p[i].a) \coloneq \tau(p, i, a)$ is not defined if $p$ has less than $i$ nonterminals or if $p[i]$ has no attribute named $a$.

\subsection{Verification model}\label{ssect:complete}
After trying to solve the verification problem with a simplified model (see \sect{ssect:postorder}), we opted for a more complete model that closely represents the executable action. The go-to choice in this regard is the \textit{control-flow graph}~\cite{Allen70} (CFG).

The CFG for a semantic action we will be using in this work is an instance of the CFG presented in \sect{sec:dominance}. We define a directed graph $G=(V,E)$ with $E \subseteq V \times V$ in which every vertex (or node) has a unique identifier (for the sake of simplicity, we will assume $V \subset \mathbb{N}$) and is associated with an \textit{action}, each being a sequence of statements that forms an atomic subset of a more complex semantic action.
We also define an \textbf{entry node} and an \textbf{exit node} of the CFG, representing the first and last action of the semantic action execution; the execution can follow different paths but always begins at the entry node and terminates at the exit node. The entry node and exit node are denoted as $v_{entry}(G)$ and $v_{exit}(G)$, respectively.

The semantics of each node are described through their mapping to an action by means of the $a_{node}:V \rightarrow A$ function, where $A$ is the set of all possible instructions that can be executed in a semantic action.
In this work, we consider the following instructions, grouped by category. Most instructions can be parametrized, making $A$ an infinite set.
We adopt the following notation for the parameters and their types: $\mathbb{N}$ is the set of natural numbers, $I$ is the set of valid attribute identifiers, and $T$ is the set of types.\footnote{For the purposes of this work, $T$ coincides with the set of all the types in the \tool{JVM}.}

\mypar{Attribute actions}
Attribute actions model access and manipulation of AST node attributes. We consider the following attribute actions (and their respective CFG nodes):
\begin{attributebox}{$WriteAttr(i, n, t)$, with $i \in \mathbb{N}, n \in I, t \in T$}
    The action sets an attribute named $n$ of type $t$ on the $i$-th nonterminal of the production.
    If $i = 0$, then the attribute is written on the node currently being visited; otherwise, it is written to its $i$-th child.
\end{attributebox}
\begin{attributebox}{$ReadAttr(i, n, t)$, with $i \in \mathbb{N}, n \in I, t \in T$}
    The action accesses an attribute named $n$ from the $i$-th nonterminal of the production and casts its value to type $t$.
    Please note that the previous restriction about the ability to read attributes only from child nodes and write only to the current node (see \sect{ssect:postorder}) is lifted, as this model does not assume postorder execution.
\end{attributebox}
\begin{attributebox}{$CopyAttr(i_d, n_d, i_s, n_s)$, with $i_d, i_s \in \mathbb{N}, n_d, n_s \in I$}
    The action accesses an attribute named $n_s$ from the $i_s$-th nonterminal of the production and sets an attribute named $n_d$ on the $i_d$-th nonterminal of the same production to the retrieved value.\footnote{The $s$ and $d$ subscripts stand for \textit{source} and \textit{destination}.} Please note that while the runtime behavior is identical to performing a $ReadAttr(i_s, n_s, t)$ followed by a $WriteAttr(i_d, n_d, t)$ with the same $t$ and value, the difference is relevant in this context, as copying an attribute does not involve interacting with its type, and thus the type is kept even if not explicitly mentioned: using the read-write sequence of actions while lacking type information would result in using $t = Object$, thus losing relevant type information.
\end{attributebox}
\begin{attributebox}{$CheckAttrType(i, n, t)$, with $i \in \mathbb{N}, n \in I, t \in T$}
    The action accesses an attribute named $n$ from the $i$-th nonterminal of the production and compares its type to $t$.\footnote{in \tool{Java}, this is done with the \inlinejava{instanceof} operator; similar operators are available in or equivalent in other languages.} This action does not modify the state nor can it cause runtime errors, but it is used to raise warnings: using \inlinejava{instanceof} on attribute values is considered an anti-pattern in \tool{Neverlang}, as the same check could be performed more idiomatically with dedicated type guards.
\end{attributebox}
\begin{attributebox}{$PropagateAttrs()$}
    Assuming that the semantic action is attached to a production with a single nonterminal on the right-hand side, thus has the form $X_0 \leftarrow X_1$, each attribute present on $X_1$ is copied to $X_0$ with the same name and type, unless already present. The code that performs such a copy is not written by the developers; instead, the copy is performed automatically by the \tool{Neverlang} runtime. Thus, we attach additional nodes to the CFG representing this implicit behavior.
\end{attributebox}

\begin{Listing}[t]
    \centering
    \captionsetup{type=listing}
    \showneverlang*{LangLibTernaryOp.nl}
    \caption{\tool{Neverlang} code implementing a \tool{C}-like ternary expression language feature.}\label{lst:ternary}
\end{Listing}

\mypar{Evaluation actions} Evaluation actions model the control that the semantic action has over the tree-traversal order.
As exemplified in Listing~\ref{lst:ternary} in the context of the implementation of the ternary operator language feature, a semantic action can contain the \texttt{eval \$i} construct (lines 7, 9, 12) to instruct the \tool{Neverlang} runtime to suspend the execution of the semantic action, instead proceeding by evaluating the $i$-th child of the current AST node according to the same role. Once the \texttt{eval} is complete, the execution of the current semantic action is resumed (as is typically done when invoking functions in most programming languages).
We consider the following attribute actions (and their respective CFG nodes):
\begin{attributebox}{$BeginEvalMetaAction(i)$, with $i \in\mathbb{N}$}
   The action evaluates the $i$-th nonterminal of the production, according to the role currently being executed.
\end{attributebox}
\begin{attributebox}{$EndEvalMetaAction()$}
    The action is resumed after visiting another node.
\end{attributebox}

Concretely, these two nodes always appear jointly in the CFG\@: a node mapped to a $BeginEvalAction$ always has exactly one outgoing edge, connected to an $EndEvalMetaAction$, which is its only incoming edge. The operation is atomic and spans over two nodes for purely practical CFG manipulation reasons (please refer to the next section for further details).
\tool{Neverlang} provides syntactic sugar for common behaviors, such as making a single semantic action behave as if it were used in a postorder role: this is denoted with the \texttt{@\{\ldots\}.} operator.
Such syntactic sugar does not impact the creation of the CFG, as it is equivalent to performing \texttt{eval} statements for each of the right-hand side nonterminals of the production before proceeding further.

\mypar{Other actions} Other actions unrelated to previous categories are the following:
\begin{attributebox}{$NopAction()$}
    The action does not represent any real operation; thus, the node is within the CFG for structural reasons (such as denoting branch points).
\end{attributebox}
\begin{attributebox}{$IgnoreBranchAction()$}
    The action throws a runtime exception that would interrupt the execution of the interpreter/compiler. Since the visit cannot proceed further after throwing an exception, the branch is ignored.
\end{attributebox}
Fig.~\ref{fig:ternaryCFG} shows the CFG obtained from the semantic action presented in Listing~\ref{lst:ternary}, according to the concepts presented thus far. The label on top of Fig.~\ref{fig:ternaryCFG} lists the production and the role to which the semantic action is attached. Each node is uniquely identified with a hexadecimal number.
For the implementation, the CFG for a semantic action is constructed by analyzing the compiled bytecode of the \tool{Java} classes generated by the \tool{Neverlang} compiler using \tool{Soot}, as detailed later in this section.

\subsection{Modeling roles}\label{sect:contrib:roles}
The concept of a CFG is not limited to individual semantic actions, equivalent to \textit{intra-procedural control-flow graphs} in traditional CFGs.

Inspecting the behavior of the whole compiler/interpreter, thus simulating the entire AST visit (role, in \tool{Neverlang}), requires taking inspiration from static analysis techniques based on the \textit{inter-procedural control-flow graph} instead.
The natural choice is to combine the CFGs for the individual semantic actions---connecting their entry, exit, and eval nodes---into a unique CFG for the entire role through adequate edges.

\begin{figure}
    \centering
    \includegraphics[width=.8\linewidth]{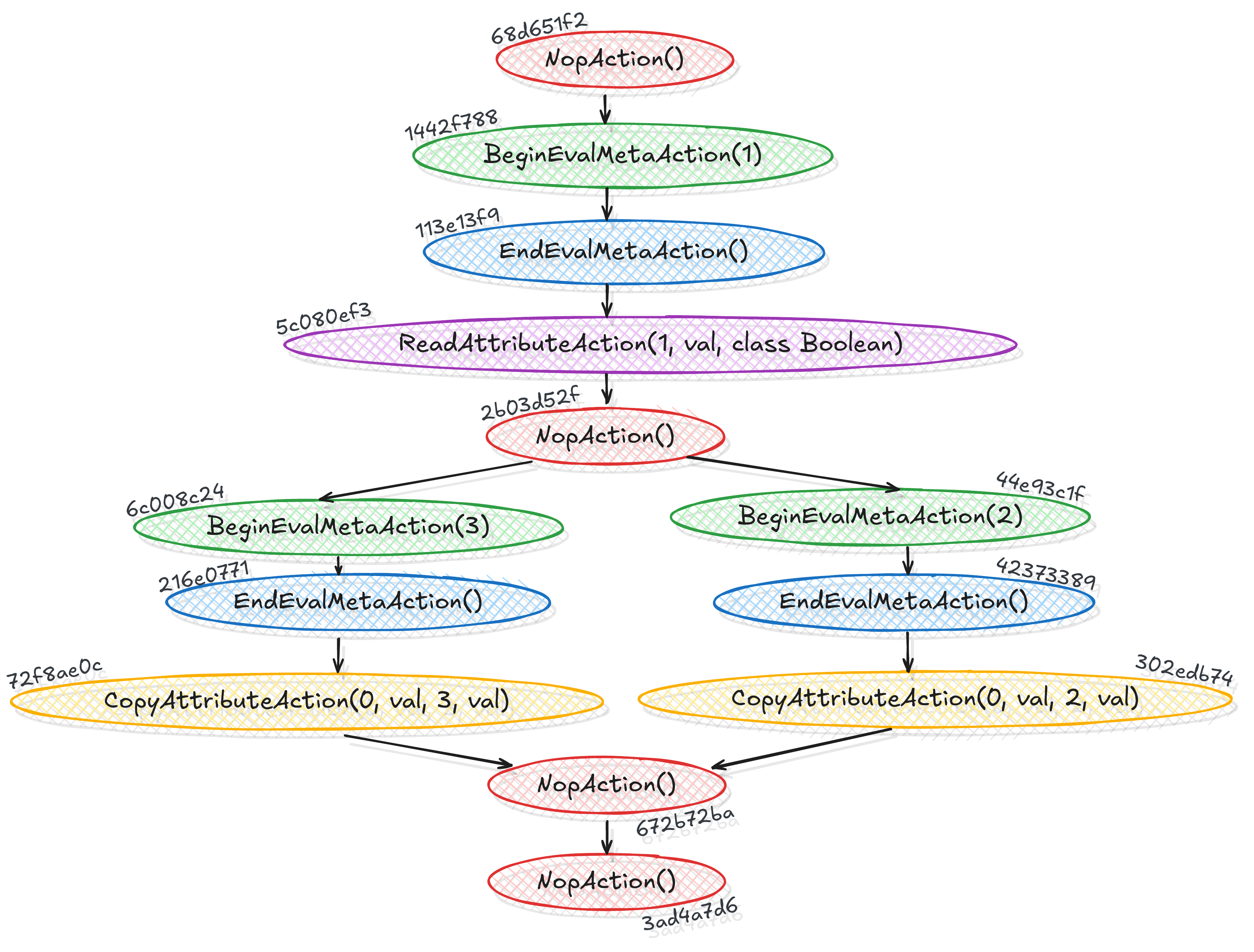}
    \caption{Control flow graph generated by \nlgcheck{} for the semantic action shown in Listing~\ref{lst:ternary}.}\label{fig:ternaryCFG}
\end{figure}

Such information is obtained from the language grammar---\emph{e.g.}, the semantic action for \inlineneverlang{Stmt <-- "if" Expr "then" Stmt}.
To indicate that the execution flow may continue towards a semantic action whose production's head is the \inlineneverlang{Expr} nonterminal a $BeginEvalMetaAction(1)$ node is used---\emph{i.e.}, a production in the \texttt{Expr $\leftarrow$ \ldots} form; then, the semantic actions can be connected accordingly.

There are a few factors to consider when navigating the CFG built in this way.

\mypar{Returning to the correct call site}
In general-purpose program analysis, a given procedure may potentially be called by many others (i.e., it has many \textit{call sites}).
In our model, each semantic action equates to a procedure; thus, the call site of a semantic action $G$ attached to a production $p: Y \leftarrow \cdots$ may reside in any of the $G_c$ semantic actions attached to productions $p_c$ having $Y$ on their right-hand side.

Moreover, if $p_c$ has multiple instances of $Y$ on the right-hand side, then the call site of $G$ may correspond to each instance of $Y$ in $p_c$.
Let us consider a ``caller'' semantic action $G_i=(V_i,E_i)$ attached to a production $p_i$ and a ``called'' semantic action $G_j=(V_j,E_j)$ attached to a production $p_j$. We take $v_{beginEval}, v_{endEval} \in V_i$ and $k \in \mathbb{N}$ such that
    \[p_i[k] = p_j[0],\]
    \[a_{node}(v_{beginEval})=BeginEvalMetaAction(k),\]
    \[a_{node}(v_{endEval})=EndEvalMetaAction(),\]
    \[(v_{beginEval}, v_{endEval}) \in E_i\]
Then, given the edge set $E_r$ of the GCF $G_r$, the control flow being transferred from $G_i$ to $G_j$ is performed as follows:
\begin{compactenum}
    \item remove $(v_{beginEval}, v_{endEval})$ from $E_r$ (as the execution of $G_j$ cannot be skipped);
    \item add $(v_{beginEval}, v_{entry}(G_j))$ to $E_r$;
    \item add $(v_{exit}(G_j), v_{endEval})$ to $E_r$.
\end{compactenum}

Since the language grammar potentially comprises many other productions $p_l \neq p_i$ such that $\exists k': p_l[k'] = p_j[0]$, these steps are performed on each $p_i, p_l, \ldots$, leading to several pairs of vertices in the form $v_{beginEval}', v_{endEval}' \in V_l \subseteq V_r$ such that $(v_{beginEval}', v_{entry}(G_j)) \in E_r$ and $(v_{exit}(G_j), v_{endEval}') \in E_r$.
Some of these vertices may never be navigated in a real interpreter; thus, we introduce \textit{tagged} \textit{entry} and \textit{exit edges}.
We define a set $L$ of tags, two sets $E_{entry},E_{exit} \subset E_r$ such that $E_{entry} \cap E_{exit} = \emptyset$ and a function $l_{edge}: (E_{entry} \cup E_{exit}) \rightarrow L$.
We define a \textit{tagged entry edge} as an edge $e \in E_{entry}$ such that $l_{edge}(e)=l$ (with $l \in L$) and a \textit{tagged exit edge} as an edge $e \in E_{exit}$ such that $l_{edge}(e)=l$.
Then, we can imagine paths as well-parenthesized expressions, considering valid only those paths in which every exit edge (closing bracket) has a corresponding entry edge (opening bracket) with the same tag.

This leads to a formal definition for a well-formed CFG path:
\begin{compactitem}
    \item the empty path $\epsilon$ is a well-formed path;
    \item given a pair of vertices $v_a, v_b \in V_r$ connected by an untagged edge $e=(v_a, v_b)$, such that $e \in E_r$ and $e \notin (E_{entry} \cup E_{exit})$, $v_a v_b$ is a well-formed path;
    \item given the vertices $v_a, v_b, v_c, v_d \in V_r$ and a well-formed path $w=v_b \ldots v_c$, let $e_{entry}=(v_a, v_b)$ and $e_{exit}=(v_c, v_d)$; then $v_a w v_d$ is a well-formed path if
        \[e_{entry} \in E_{entry} \land e_{exit} \in E_{exit} \land l_{edge}(e_{entry}) = l_{edge}(e_{exit})\]
    \item given two well-formed paths $w_a$ and $w_b$, their concatenation $w_a w_b$ is a well-formed path.
\end{compactitem}
Finally, any prefix of a well-formed path is a \textit{valid path}.

Two semantic actions $G_i$ and $G_j$ are therefore connected as follows:
\begin{enumerate}
    \item remove $(v_{beginEval}, v_{endEval})$ from $E_r$;
    \item add $e_1=(v_{beginEval}, v_{entry}(G_j))$ to $E_{entry}$ (and thus to $E_r$);
    \item add $e_2=(v_{exit}(G_j), v_{endEval})$ to $E_{exit}$ (and thus to $E_r$);
    \item let $l$ be a new tag ($l \notin L$) and add it to $L$;
    \item let $l_{edge}(e_1)=l_{edge}(e_2)=l$.
\end{enumerate}

When the visit reaches $v_{exit}(G_j)$, only one among many outgoing edges matches the tag followed when entering the semantic action. Following such an edge returns control to the correct call site.

\mypar{Context switching}
Execution simulation involves keeping the state in proper data structures.
The state comprises pieces of information that will be detailed later; for now, consider that the behavior of a semantic action depends on contextual information.
In fact, several of the actions introduced earlier involve $i$ parameters that refer to the AST node being visited.
Thus, we extend the model with two more actions: $EnterCtx(i)$ and $LeaveCtx()$ (with $i \in \mathbb{N}$).
The semantics are akin to procedures in low-level programming, which often involve a \textit{prologue} and an \textit{epilogue}: additional instructions surrounding the actual body, often used to allocate and free stack frames.
Likewise, we can use a stack data structure: each item on the stack, which we call a \textit{production context}, stores information (name and type) of the attributes of the current node's children, with the base of the stack representing the root AST node, which can also store its own attributes.

When the current semantic action sets or accesses an attribute on a right-hand-side nonterminal ($i \ge 1$) of its attached production, the operation is performed directly on the production context on top of the stack.
When the current semantic action sets or accesses the left-hand-side nonterminal ($i=0$), the operation is performed on the second-to-top production context of the stack with position $i_c$.
This model reflects how attributes are passed at runtime between semantic actions depending on their position within the productions.

$EnterCtx(i)$ and $LeaveCtx()$ are therefore equivalent to procedure prologues and epilogues, respectively: the former pushes a new production context with $i_c=i$ to the stack, while the latter pops the production context from the top of the stack.

\mypar{CFG completion}
Entry and exit nodes must be defined for the entire CFG $G_r$. This is done through two additional nodes, both mapped to the $Nop()$ action ($a_{node}(v_{entry}(G_r)) = a_{node}(v_{exit}(G_r)) = Nop()$), and then adding proper edges to connect them, respectively, with the entry and exit nodes of each production having \texttt{Program} (the start symbol) as its right-hand side.
\Cref{AlgBuildRoleCFG} in \sect{sec:app:algofull} summarizes the contents of this section by showing the complete CFG construction algorithm from a set of productions and the control-flow graphs of the respective semantic actions.

\subsection{Visiting the CFG}\label{sect:contrib:visit}
The control-flow graph is a representation of all possible execution paths for a role: for each parse tree that can be produced according to the grammar, the CFG contains a corresponding valid path, representing the sequence of operations performed during the tree visit.

\subsubsection{Execution state model}
The simulated execution state is a tuple $(v, w, S, C)$ such that
\begin{compactitem}
    \item $v \in V_r$ is the node currently being visited;
    \item $w \in V_r^*$ is the path followed to reach $v$;
    \item $S$ is the stack of edge tags;
    \item $C \in \mathcal{C}$ is the stack of production contexts, where $\mathcal{C}$ is used to refer to the (infinite) set of all possible production context stacks.
\end{compactitem}

Items of the edge tag stack $S=s_0 s_1 \ldots s_n$ are edge tags ($s_i \in L$); moreover, if $S$ is empty, we say that $\nexists pop(S)$ and $\nexists top(S)$.
Instead, the production context stack $C=c_0 c_1 \ldots c_n$ always contains at least one item $c_0$ (representing the AST root), and the generic $c_i$ item represents a virtual AST node defined as $c_i = (k_i, m_i)$, a tuple where $k_i \in \mathbb{N}_{>0}$ indicates the position of the node in the child list of its parent, and $m_i: \mathbb{N}_{>0} \times I \rightarrow T$ is an attribute mapping function just like $m_0$, with the only difference being that $m_i: \mathbb{N} \times I \rightarrow T$ maps a child node (represented by its index) and an attribute to the attribute type.

The traditional operations offered by the stack data structure are sufficient to express most of the actions associated with a CFG node as context-stack transformations.
One notable exception is the $PropagateAttrs()$ action, which needs to enumerate all the attributes within a node.
We therefore extend the instruction set with a $getAll$ operation defined as
    \[
    getAll(C, i) =
        \begin{cases}
            \{(\eta, t): ((i, \eta) \mapsto t) \in m_0\} & \text{if } C=c_0\\
            \{(\eta, t): ((i, \eta) \mapsto t) \in m_n\} & \text{if } C=c_0 \ldots c_n \text{ and } i \ge 1\\
            \{(\eta, t): ((k_n, \eta) \mapsto t) \in m_{n-1}\} & \text{if } C=c_0 \ldots c_n \text{ and } i = 0
        \end{cases}
    \]
$getAll$ computes a set $X \subset (I \times T)$ of name-type pairs of the attributes at the $i$-th position of the current node.
$PropagateAttrs()$ also sets the attribute name-type pairs into the current node, without replacing pre-existing mappings; this behavior is represented by the $merge$ and $putAll$ operations:
    \[
    merge(m, X, i) = m \cup \{(i, \eta) \mapsto t: (\eta, t) \in (X \setminus \{(\eta', t'): ((i, \eta') \mapsto t') \in m\})\}
    \]
    \[
    putAll(C, i, X) =
        \begin{cases}
            merge(m_0, X, i) & \text{if } C=c_0\\
            c_0 \ldots c_{n-1} (k_n, merge(m_n, X, i)) & \text{if } C=c_0 \ldots c_n \text{ and } i \ge 1\\
            c_0 \ldots (k_{n-1}, merge(m_{n-1}, X, k_n)) c_n & \text{if } C=c_0 \ldots c_n \text{ and } i = 0
        \end{cases}
    \]
Since the $put$ only modifies at most the top of the context stack, let us introduce two more functions for brevity:
    \[
    c_{put1}(c_{n-1}, k_n, i, \eta, t)=
        \begin{cases}
            c_{n-1} & \text{if } i \ge 1\\
            (k_{n-1}, m_{n-1} \cup \{(k_n, \eta) \mapsto t\}) & \text{if } i = 0
        \end{cases}
    \]
    \[
    c_{put2}(c_n, i, \eta, t)=
        \begin{cases}
            (k_n, m_n \cup \{(i, \eta) \mapsto t\}) & \text{if } i \ge 1\\
            c_n & \text{if } i = 0
    \end{cases}
    \]
Notice that the $push$ can be defined in terms of these two functions:
    \[
    put(C, i, \eta, t) =
    \begin{cases}
            m_0 \cup \{(i, \eta) \mapsto t\} & \text{if } C=c_0\\
            c_0 \ldots c_{put1}(c_{n-1}, k_n, i, \eta, t) c_{put2}(c_n, i, \eta, t) & \text{if } C=c_0 \ldots c_n
    \end{cases}
    \]

\subsubsection{Violations}\label{sect:contrib:violations}
We are now ready to define the kinds of incorrect or undesirable execution that can be detected while visiting the CFG, which we will call the set $\mathcal{V}$ of \textit{violations}.

\begin{violationbox}{$MissingAttr(i, \eta)$, with $i \in \mathbb{N}$ and $\eta \in I$}
 This violation indicates that a semantic action accesses an attribute named $\eta$ on the $i$-th nonterminal of its attached production that was not previously set.
\end{violationbox}
\begin{violationbox}{$BadAttrT(i, \eta, t_r, t_p)$, with $i \in \mathbb{N}$, $\eta \in I$ and $t_r, t_p \in T$}
    This violation indicates that a semantic action accesses an attribute named $\eta$ on the $i$-th nonterminal of its attached production and casts it to $t_r$ (\textit{required} type), which was previously set to a value of type $t_p$ (\textit{provided} type) not assignable to $t_r$ ($t_p \nleq: t_r$).
\end{violationbox}
\begin{violationbox}{$AttrTypeDynamicallyChecked(i, \eta, t)$, with $i \in \mathbb{N}$, $\eta \in I$ and $t \in T$}
   This violation indicates that a semantic action performs an \texttt{instanceof} (or equivalent) check against $t$ of the type of the attribute named $\eta$ on the $i$-th nonterminal of its attached production. This violation does not cause a runtime error but is reported as a warning to suggest that the language developer use \tool{Neverlang}'s type guards instead.
\end{violationbox}

\subsubsection{Mapping CFG actions to state transformations}
Each of the aforementioned actions can be formally defined in terms of the transformations it performs on the state and the violations it detects: an action $a \in A$ does not manipulate or access the entire visit state, but rather just the production context stack $C$. We can therefore define $f_{action}$ as a function
$(\mathcal{C}, A) \rightarrow (\mathcal{C}, \mathcal{P}(\mathcal{V}))$
mapping a production context stack and an action to a new production context stack and a set of violations.

Here follows a list of the $f_{action}$ with respect to the actions that can appear in the CFG\@:
\begin{itemize}
    \item $f_{action}(C, WriteAttr(i, \eta, t)) = (put(C, i, \eta, t), \emptyset) $
    \item $f_{action}(C, ReadAttr(i, \eta, t)) =
        \begin{cases}
            (C, \{MissingAttr(i, \eta)\}) & \text{if } \nexists get(C, i, \eta) \\
            (C, \{BadAttrT(i, \eta, t, get(C, i, \eta))\}) & \text{if } get(C, i, \eta) \nleq: t\\
            (C, \emptyset) & \text{otherwise}
        \end{cases}$
    \item $f_{action}(C, CopyAttr(i_d, \eta_d, i_s, \eta_s)) =
        \begin{cases}
            (C, \{MissingAttr(i_s, \eta_s)\}) & \text{if } \nexists get(C, i_s, \eta_s) \\
            (put(C, i_d, \eta_d, get(C, i_s, \eta_s)), \emptyset) & \text{otherwise}
        \end{cases}$
    \item $f_{action}(C, CheckAttrType(i, \eta, t)) = (C, \{AttrTypeDynamicallyChecked(i, \eta, t)\})$
    \item $f_{action}(C, PropagateAttrs()) = (putAll(C, 0, getAll(C, 1)), \emptyset)$
    \item $f_{action}(C, EnterCtx(i)) = (push(C, (i, \emptyset)), \emptyset)$
    \item $f_{action}(C, LeaveCtx()) = (pop(C), \emptyset)$
    \item $f_{action}(C, Nop()) = (C, \emptyset)$
\end{itemize}
The result of an $f_{action}$ is made of two elements. We refer to the "next context" element of $f_{action}$ with $f_{aCtx}$: 
\[f_{aCtx}(C, a)=C' \iff \exists U: \: f_{action}(C, a)=(C',U)\]
We refer to the "violations" part of $f_{action}$ with $f_{aErr}$: 
\[f_{aErr}(C, a)=U \iff \exists C': \: f_{action}(C, a)=(C',U)\]
Finally, we define $f_{pCtx}: (V_r^* \times \mathcal{C}) \rightarrow
\mathcal{C}$ as the overall changes made to the production context stack during the traversal of a CFG path:
\[
    f_{pCtx}(w, C)=
        \begin{cases}
            C & \text{if } w = \epsilon \\
            f_{pCtx}(w', f_{aCtx}(C, a_{node}(v))) & \text{if } w = vw'
    \end{cases}
\]

\begin{property}\label{PropfpctxComposition}
$f_{pCtx}$ can be evaluated "piecewise" on a path by composing evaluations of $f_{pCtx}$ on parts of the path: 
    \[
        \forall w_1, w_2 \in V_r^*, C \in \mathcal{C}, \quad
        f_{pCtx}(w_1w_2, C)=f_{pCtx}(w_2, f_{pCtx}(w_1, C))\]
\end{property}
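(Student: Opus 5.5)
We prove the statement by structural induction on the first path $w_1$, keeping $w_2$ and $C$ universally quantified. This follows the recursive definition of $f_{pCtx}$, which consumes the path one vertex at a time from the left. The statement is then essentially the standard fact that a left fold over a concatenated sequence is the composition of the folds over its pieces.

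\textbf{Base case} ($w_1=\epsilon$). Here $w_1w_2=w_2$. By the first clause of the definition, $f_{pCtx}(\epsilon, C)=C$. Hence both sides reduce to $f_{pCtx}(w_2, C)$.

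\textbf{Inductive step} ($w_1 = v w_1'$ with $v\in V_r$). Assume the claim holds for $w_1'$ for every $w_2$ and every context stack. Since concatenation is associative, $w_1w_2 = v(w_1'w_2)$, so the second clause of the definition gives
\[
f_{pCtx}(w_1w_2, C) = f_{pCtx}(w_1'w_2, C'), \qquad C' = f_{aCtx}(C, a_{node}(v)).
\]
Applying the inductive hypothesis with the stack $C'$ in place of $C$ rewrites the right-hand side as $f_{pCtx}(w_2, f_{pCtx}(w_1', C'))$. By the same second clause read in reverse, $f_{pCtx}(w_1', C') = f_{pCtx}(v w_1', C) = f_{pCtx}(w_1, C)$. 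This closes the induction.

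The only delicate point is well-definedness. $f_{aCtx}$ is obtained from $f_{action}$, and some cases of $f_{action}$ are partial: for instance, $pop$ on the single-element stack $c_0$, or $get$ when the attribute is absent. We handle this by reading the equality as an equality of partial functions: one side is defined exactly when the other is, and then they agree. The inductive argument carries this through unchanged, because each rewriting step uses the same definitional clause on both sides. Alternatively, we can note that the paper applies $f_{pCtx}$ only along valid paths, where the $EnterCtx$/$LeaveCtx$ bracketing keeps the stack non-empty.

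We also need $f_{aCtx}$ to be a genuine function of $(C,a)$. This holds because every case of $f_{action}$ yields a unique pair, whose first component does not depend on the violation set. No property of the CFG beyond the definition of $f_{pCtx}$ is required.
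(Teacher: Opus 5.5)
Your proof is correct and follows the same route as the paper's: induction on $w_1$, unfolding the second clause of $f_{pCtx}$, applying the inductive hypothesis at $C' = f_{aCtx}(C, a_{node}(v))$, and folding that clause back. Your partiality remark is a harmless addition the paper does not make. One correction to it: an absent attribute is not a source of partiality, because the $ReadAttr$ and $CopyAttr$ cases of $f_{action}$ handle it explicitly by returning a $MissingAttr$ violation rather than being undefined.
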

\begin{propertyproof}
    This can be proven by induction on $w_1$.
    The base case $w_1=\epsilon$ is trivial, as by definition $f_{pCtx}(\epsilon, C) = C$.
    For the inductive step, given $w_1$, we assume
    \[
        \forall w_2 \in V_r^*, C' \in \mathcal{C}, \quad f_{pCtx}(w_1w_2, C')=f_{pCtx}(w_2, f_{pCtx}(w_1, C'))
    \]
    and prove that
    \[
        \forall v \in V_r, w_2 \in V_r^*, C \in \mathcal{C}, \quad f_{pCtx}(vw_1w_2, C)=f_{pCtx}(w_2, f_{pCtx}(vw_1, C))
    \]
    using the definition of $f_{pCtx}$ and the inductive hypothesis with $C' = f_{aCtx}(C, a_{node}(v))$.
    Please refer to \sect{sec:app:proofs} for the complete, step-by-step proof.
\end{propertyproof}

\subsubsection{Action properties}\label{ContribCfgVisitActionProps}
All actions, except for $EnterCtx(i)$ or $LeaveCtx()$, only depend on and affect the contents of the two topmost productions of the stack. More specifically, it can depend on the attributes present at any position in the node at the top of the stack and on those on the node at position 0 of the context right under the top of the stack.
We refer to this information as the $head$ of the production context stack:
\[
    head(C)=
        \begin{cases}
            c_0 & \text{if } C=c_0 \\
            m_n \cup \{(0, \eta) \mapsto t: ((k_n, \eta) \mapsto t) \in m_{n-1}\} & \text{if } C=c_0 \ldots c_{n}
        \end{cases}
\]
\begin{property}\label{PropHeadDefImmediate}
The head of a production context stack only depends on its two topmost items:
\begin{gather*}
    \forall\:C_1=\langle c_0, \ldots, c_n\rangle, C_2=\langle c_0', \ldots, c_n'\rangle \in \mathcal{C}, \\
    c_{n-1} = c_{n-1}' \:\land\: c_n=c_n' \implies head(C_1) = head(C_2)
\end{gather*}
\end{property}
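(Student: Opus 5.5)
The argument is a direct unfolding of the definition of $head$, split into cases on the length of the stacks. The statement quantifies over two stacks $C_1=\langle c_0,\ldots,c_n\rangle$ and $C_2=\langle c_0',\ldots,c_n'\rangle$ of the same height $n+1$. Equal height is implicit in the common index $n$; I will read the statement that way. If the heights could differ, the case $C=c_0$ versus $C=c_0\ldots c_n$ of $head$ would have to be aligned separately, and I would simply note that the hypothesis is stated with a common index.

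\textbf{Case $n \ge 1$.} Both stacks fall in the second branch of the definition of $head$. So $head(C_1)=m_n \cup \{(0,\eta)\mapsto t : ((k_n,\eta)\mapsto t)\in m_{n-1}\}$, and $head(C_2)$ is the same expression with primes. The hypotheses $c_{n-1}=c_{n-1}'$ and $c_n=c_n'$ are equalities of tuples, so componentwise $k_{n-1}=k_{n-1}'$, $m_{n-1}=m_{n-1}'$, $k_n=k_n'$ and $m_n=m_n'$. Substituting these into the defining expression of $head(C_2)$ turns it syntactically into that of $head(C_1)$, since the expression mentions only $m_n$, $k_n$ and $m_{n-1}$. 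No component of $c_0,\ldots,c_{n-2}$ occurs anywhere, which is exactly the content of the lemma.

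\textbf{Case $n=0$.} Here the stack has a single element and the index $n-1$ does not exist, so the hypothesis reduces to $c_0=c_0'$. Then $head(C_1)=c_0=c_0'=head(C_2)$ by the first branch of the definition.

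The only point requiring care, and the step I would expect to be the main obstacle, is this degenerate case. The hypothesis refers to $c_{n-1}$, so I would state explicitly that for $n=0$ it is read as just $c_n=c_n'$, i.e.\ vacuous in its first conjunct. Beyond that, no induction or earlier lemma (such as Lemma~\ref{PropfpctxComposition}) is needed. The proof is pure substitution into the definition.
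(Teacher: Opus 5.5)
Your proof is correct and takes the same approach as the paper, whose entire proof is ``Trivial by definition of $head$''; your case split on $n \ge 1$ versus $n = 0$, followed by componentwise substitution of $k_n$, $m_n$, $m_{n-1}$, simply spells out that unfolding. Your reading of the degenerate case $n=0$, where only the conjunct $c_n = c_n'$ applies, is a sensible clarification of the statement and does not change the argument.
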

\begin{propertyproof}
    Trivial by definition of head.
\end{propertyproof}

\begin{property}\label{Propfactionhead}
$f_{aErr}$ only depends on the head of the production context stack:
\[
\forall C_1, C_2 \in \mathcal{C}, a \in A, \quad
head(C_1) = head(C_2) \implies f_{aErr}(C_1, a) = f_{aErr}(C_2, a)
\]
\end{property}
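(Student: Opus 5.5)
The plan is a case analysis on the action $a$, following the list of clauses defining $f_{action}$. The key observation is that the violation component $f_{aErr}(C,a)$ is either a constant set independent of $C$, or is determined only by queries of the form $\exists\, get(C,i,\eta)$ and $get(C,i,\eta)$. So the proof reduces to one auxiliary claim: $get(C,i,\eta)$ is a function of $head(C)$ alone. More precisely, $get(C,i,\eta)$ is defined iff $((i,\eta)\mapsto t)\in head(C)$ for some $t$, and in that case it returns that $t$.

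First I fix the (implicit) definition of $get$, consistently with $getAll$ and $put$. For $C=c_0$, $get(C,i,\eta)=m_0(i,\eta)$. For $C=c_0\ldots c_n$ with $i\ge 1$, $get(C,i,\eta)=m_n(i,\eta)$. For $i=0$, $get(C,i,\eta)=m_{n-1}(k_n,\eta)$. I then compare this with the definition of $head$. In the case $n\ge1$, $head(C)$ is the union of $m_n$, restricted to keys with index $\ge1$ since $m_n$ ranges over children indices, and the entries $(0,\eta)\mapsto t$ for $((k_n,\eta)\mapsto t)\in m_{n-1}$. These two parts have disjoint key sets, so $head(C)$ is a well-defined partial function, and $head(C)(i,\eta)=get(C,i,\eta)$ in both subcases. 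The case $C=c_0$ is immediate, since $head(c_0)=c_0$ is identified with $m_0$.

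Second, I run through the actions:
\begin{itemize}
\item For $WriteAttr$, $PropagateAttrs$, $EnterCtx$, $LeaveCtx$ and $Nop$, the violation set is $\emptyset$ for every $C$.
\item For $CheckAttrType(i,\eta,t)$, it is the constant $\{AttrTypeDynamicallyChecked(i,\eta,t)\}$.
\end{itemize}
In all these cases equality holds trivially, whatever $C_1,C_2$ are. For $ReadAttr(i,\eta,t)$, the three branches are selected by $\nexists get(C,i,\eta)$ and by $get(C,i,\eta)\nleq: t$, and the returned set mentions only $i,\eta,t$ and $get(C,i,\eta)$. By the auxiliary claim, $head(C_1)=head(C_2)$ gives identical branch selection and identical output. $CopyAttr(i_d,\eta_d,i_s,\eta_s)$ is handled the same way: its violation set is $\{MissingAttr(i_s,\eta_s)\}$ or $\emptyset$ according to whether $get(C,i_s,\eta_s)$ exists. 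The $put$ in that branch affects only $f_{aCtx}$, not $f_{aErr}$.

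The main obstacle is the auxiliary claim, because the notation is loose in three places. First, $get$ is never defined explicitly. Second, $head$ returns an element of different kind in the singleton case ($c_0$) and in the general case (a mapping). Third, when $C_1$ and $C_2$ have different heights, one could have $head(C_1)=head(C_2)$ with $C_1=c_0$ and $C_2$ longer. I would handle this by stating explicitly that $c_0$ is identified with its mapping $m_0$. With that identification, in every shape of stack $head(C)$ is exactly the partial map $(i,\eta)\mapsto get(C,i,\eta)$. The lemma then follows uniformly, without needing the stack heights to agree.
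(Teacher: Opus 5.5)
Your proof is correct and follows the paper's own argument. It uses the same case split on $a$: the $WriteAttr$, $PropagateAttrs$, $EnterCtx$, $LeaveCtx$, $Nop$ and $CheckAttrType$ cases give a violation set independent of $C$, and the $ReadAttr$/$CopyAttr$ cases reduce to $get(C,i,\eta)$ being determined by $head(C)$. Your explicit definition of $get$ and your observation that the two parts of $head(C)$ have disjoint key sets (index $0$ versus indices $\ge 1$) make precise a step the paper only states informally.
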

\begin{propertyproof}
    By definition of $f_{action}$ and $a_{aErr}$, if $a$ is one among $WriteAttr$, $PropagateAttrs$, $EnterCtx$, $LeaveCtx$, or $Nop$, then $f_{aErr}(C, a) = \emptyset \:\: \forall C$, trivially proving the property.
    Similarly, if $a=CheckAttrType(i, \eta, t)$, then $f_{err}(C, a)$ is independent of $C$.
    Finally, if $a=ReadAttr$ or $a=CopyAttr$, the contents of $f_{aErr}(C, a)$ depend on the value of a $get$ operation performed on $C$: the value of $get$ applied to a context stack $\langle c_0, \ldots, c_n \rangle$ depends on either $m_n$ or $m_{n-1}(k_n, \eta)$, both of which are equal in $C_1$ and $C_2$ under the hypothesis $head(C_1) = head(C_2)$ by definition of $head$.
\end{propertyproof}

\subsubsection{Balanced action sequences}\label{ContribCfgVisitBalActionSeq}
Just like tagged edges, actions performed on the context stack lend themselves to the same analogy with parenthesized expressions, although there are only two actions that are relevant in this regard: $EnterCtx(i)$ (corresponding to opening a parenthesis) and $LeaveCtx()$ (closing the parenthesis).
We define the \textit{set of balanced action sequences}, $\mathcal{A}_b$, with $\mathcal{A}_b \subset A^*$ as follows:
\begin{compactenum}
    \item the empty sequence is balanced ($\epsilon \in \mathcal{A}_b$);
    \item given $b \in \mathcal{A}_b$ and $a \in A \setminus (\{EnterCtx(i): i \in \mathbb{N}\} \cup \{LeaveCtx()\})$, $ab \in \mathcal{A}_b$ and $ba \in \mathcal{A}_b$;
    \item given $a_1=EnterCtx(i)$ for any $i$, $a_2=LeaveCtx()$, and $b \in \mathcal{A}_b$, $a_1ba_2 \in \mathcal{A}_b$.
\end{compactenum}
We also define the set $\mathcal{A}_{pb}$ of \textit{prefixes} of balanced action sequences as follows:
\begin{compactenum}
    \item the empty sequence is a prefix of a balanced action sequence ($\epsilon \in \mathcal{A}_{pb}$);
    \item given $b \in \mathcal{A}_{pb}$ and $a \in A \setminus (\{LeaveCtx()\})$, $ab \in \mathcal{A}_{pb}$ and $ba \in \mathcal{A}_{pb}$;
    \item given $a_1=EnterCtx(i)$ for any $i$, $a_2=LeaveCtx()$, and $b \in \mathcal{A}_{pb}$, $a_1ba_2 \in \mathcal{A}_{pb}$.
\end{compactenum}
\begin{property}
    $\mathcal{A}_b \subset \mathcal{A}_{pb} \subset A^*$
\end{property}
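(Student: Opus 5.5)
The statement asks for two inclusions, $\mathcal{A}_b \subseteq \mathcal{A}_{pb}$ and $\mathcal{A}_{pb} \subseteq A^*$, together with their strictness. I will treat them separately.

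\textbf{The easy inclusion.} The inclusion $\mathcal{A}_{pb} \subseteq A^*$ holds directly by definition: every element of $\mathcal{A}_{pb}$ is built by finitely many concatenations of actions in $A$.

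\textbf{The main inclusion.} For $\mathcal{A}_b \subseteq \mathcal{A}_{pb}$ I will use structural induction on the three rules generating $\mathcal{A}_b$.
\begin{compactitem}
    \item The base case $\epsilon \in \mathcal{A}_b$ is matched by rule~1 of $\mathcal{A}_{pb}$.
    \item For rule~2, take $b \in \mathcal{A}_b$ and an action $a$ that is neither $EnterCtx$ nor $LeaveCtx$. By the induction hypothesis $b \in \mathcal{A}_{pb}$. Since $a \neq LeaveCtx()$, rule~2 of $\mathcal{A}_{pb}$ gives both $ab$ and $ba$ in $\mathcal{A}_{pb}$. This works because the excluded set for $\mathcal{A}_{pb}$, namely $\{LeaveCtx()\}$, is contained in the excluded set for $\mathcal{A}_b$.
    \item Rule~3 has the identical form in both definitions. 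From $b \in \mathcal{A}_{pb}$ we therefore immediately get $a_1 b a_2 \in \mathcal{A}_{pb}$.
\end{compactitem}

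\textbf{Strictness.} Here I exhibit witnesses.
\begin{compactitem}
    \item The one-letter sequence $EnterCtx(0)$ lies in $\mathcal{A}_{pb}$: apply rule~2 with $b=\epsilon$, which is allowed since $EnterCtx$ is not excluded there. It does not lie in $\mathcal{A}_b$.
    \item The sequence $LeaveCtx()$ lies in $A^*$ but not in $\mathcal{A}_{pb}$.
\end{compactitem}

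\textbf{The main obstacle.} Non-membership is the step that needs an actual argument. I would prove an invariant by induction over the generating rules. Define $\#E$ as the number of $EnterCtx$ occurrences and $\#L$ as the number of $LeaveCtx$ occurrences. The invariant is:
\begin{compactitem}
    \item every element of $\mathcal{A}_b$ satisfies $\#E = \#L$;
    \item every element of $\mathcal{A}_{pb}$ satisfies $\#E \ge \#L$.
\end{compactitem}
Each rule preserves the relevant (in)equality. Rule~2 adds either a neutral action or, for $\mathcal{A}_{pb}$, an extra $EnterCtx$. Rule~3 adds one of each.

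With this invariant, $EnterCtx(0)$ has $1 \neq 0$ and so lies outside $\mathcal{A}_b$. Likewise $LeaveCtx()$ has $0 < 1$ and so lies outside $\mathcal{A}_{pb}$. The stronger prefix-balancedness property is not needed for this.
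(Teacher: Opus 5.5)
Your proof is correct and follows the paper's route. The paper just calls the statement trivial by definition of $\mathcal{A}_b$ and $\mathcal{A}_{pb}$, and your rule-by-rule induction is exactly that unfolding: the only action excluded in rule~2 of $\mathcal{A}_{pb}$, namely $LeaveCtx()$, is also excluded in rule~2 of $\mathcal{A}_b$, and rule~3 is identical. Your strictness witnesses, backed by the invariants $\#E = \#L$ on $\mathcal{A}_b$ and $\#E \ge \#L$ on $\mathcal{A}_{pb}$, go beyond what the paper argues (it does not distinguish $\subset$ from $\subseteq$ here), but they are sound and settle the strict reading too.
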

\begin{propertyproof}
    Trivial by definition of $\mathcal{A}_b$ and $\mathcal{A}_{pb}$.
\end{propertyproof}

\subsubsection{Balanced paths}

For the application of balanced action sequences in CFG paths, let us also introduce the concept of \textit{balanced paths}---\emph{i.e.}, a sequence of nodes whose associated actions form a balanced action sequence.
We call $W_b$ the set of balanced paths, with $W_b \subset V_r^*$, such that:
\[
    W_b = \{(v_1 \ldots v_n): \: (a_1 \ldots a_n) \in \mathcal{A}_b, \text{ with }
    a_i = a_{node}(v_i) \: \forall i\}
\]
We define the set of balanced path prefixes, $W_{pb}$, with $W_b \subset W_{pb} \subset V_r^*$, as
\[
    W_{pb} = \{(v_1 \ldots v_n): \: (a_1 \ldots a_n) \in \mathcal{A}_{pb},
    \text{ with } a_i = a_{node}(v_i) \: \forall i\}
\]

We define a generalization, $top_k$, of the $top$ stack operation, which allows "peeking" at the $k$ topmost items of the stack: 
\[
    top_k(\langle c_0, \ldots, c_n \rangle) = \langle c_{n-k+1}, \ldots, c_n \rangle \quad \text{with } 1 \le k \le n + 1
\]
\begin{property}
    $top$ and $top_1$ are equivalent.
\end{property}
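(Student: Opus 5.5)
We prove that $top$ and $top_1$ coincide as functions on $\mathcal{C}$ by unfolding both definitions on an arbitrary production context stack and comparing the results. No earlier lemma is needed.

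Fix $C=\langle c_0, \ldots, c_n\rangle \in \mathcal{C}$. Every production context stack contains at least the root item $c_0$, so $n \ge 0$. Hence $k=1$ satisfies the side condition $1 \le k \le n+1$, and $top_1(C)$ is defined for every $C$. This matches $top$, which is likewise always defined on $\mathcal{C}$; the clause $\nexists top(S)$ applies only to the edge tag stack $S$, not to $C$. The two functions therefore have the same domain.

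Instantiating the definition with $k=1$ gives $top_1(C)=\langle c_{n-1+1}, \ldots, c_n\rangle=\langle c_n\rangle$, the one-element sequence containing the topmost item. The standard stack operation gives $top(C)=c_n$. The two therefore agree under the usual identification of a single item with the singleton sequence containing it, and this identification is the intended reading of the statement.

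The only point needing care is this identification of a sequence of length one with its unique element. I would state it explicitly as a convention: $\langle c\rangle$ is identified with $c$. Under that convention, $top(C)=top_1(C)$ for all $C\in\mathcal{C}$, which proves the equivalence.
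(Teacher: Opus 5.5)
Your proof is correct and takes the same approach as the paper, whose proof is just ``trivial by definition of $top_k$'': instantiating $k=1$ gives $\langle c_n\rangle$, the topmost item. You additionally make explicit two points the paper leaves implicit: $k=1$ always satisfies $1 \le k \le n+1$ because every context stack contains $c_0$, and a one-element sequence is identified with its single item.
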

\begin{propertyproof}
    Trivial by definition of $top_k$.
\end{propertyproof}

\begin{property}\label{PropBalancedPath0}
    Given an integer $k>0$ and two production context stacks sharing the same top $k$ items, the evaluation context stacks obtained by evaluating a balanced path prefix that does not contain $CopyAttr$ or $PropagateAttrs$ actions share the top $k$ items:
    \begin{gather*}
        \forall w \in W_{pb}, \forall C_1, C_2 \in \mathcal{C}, k \in \mathbb{N}\\
        top_k(C_1)=top_k(C_2) \implies top_k(f_{pCtx}(w, C_1))=top_k(f_{pCtx}(w, C_2))
    \end{gather*}
\end{property}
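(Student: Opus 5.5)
Since $f_{pCtx}$ is defined by recursion on the path, the natural approach is induction on the structure of the balanced path prefix $w$, using Lemma~\ref{PropfpctxComposition} to split $w$ into pieces. The claim should also be strengthened so that the induction goes through across $EnterCtx$ and $LeaveCtx$. A naive statement for a fixed $k$ fails at an unmatched $EnterCtx(i)$: after the push, the top $k$ items of the new stack consist of the new context together with only the top $k-1$ old items. Conversely, a matched $LeaveCtx()$ pops an item and exposes one that was not among the top $k$ before. The plan is therefore to prove the statement \emph{for all $k$ simultaneously} and to track how the depth changes.

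\textbf{Step 1: single actions.} Let $a$ be any action other than $EnterCtx$, $LeaveCtx$, $CopyAttr$ and $PropagateAttrs$, i.e.\ $WriteAttr$, $ReadAttr$, $CheckAttrType$ or $Nop$. I show that $top_k(C_1)=top_k(C_2)$ implies $top_k(f_{aCtx}(C_1,a))=top_k(f_{aCtx}(C_2,a))$ for every $k\ge 1$. Only $WriteAttr$ changes the stack, through $put$. By the definitions of $c_{put1}$ and $c_{put2}$, $put$ modifies at most $c_{n-1}$ and $c_n$, using only the data $k_n$, $m_n$, $m_{n-1}$, $k_{n-1}$ and the action's own parameters. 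For $k\ge 2$, both stacks therefore undergo identical changes on identical items. For $k=1$, the modified item $c_n$ (when $i\ge1$) depends only on $c_n$ itself; when $i=0$ the top item is unchanged, and only $c_{n-1}$ is modified, which lies outside the window. The degenerate case $C=c_0$ is handled the same way. This also explains the exclusion in the hypothesis: $CopyAttr$ writes a value obtained via $get$, which may read $m_{n-1}$ at $i=0$, and $PropagateAttrs$ reads through $getAll$. With $k=1$ these could import information from below the window into the top.

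\textbf{Step 2: stack-effect lemma.} Writing $d(w)$ for the number of unmatched $EnterCtx$ in $w$, I prove the generalized claim
\[
top_k(C_1)=top_k(C_2) \implies top_{k+d(w)}(f_{pCtx}(w,C_1))=top_{k+d(w)}(f_{pCtx}(w,C_2)).
\]
I also prove that the part below the top $k+d(w)$ items is left untouched. The induction follows the inductive definition of $\mathcal{A}_{pb}$, with cases $ab$/$ba$ handled by Step~1 and Lemma~\ref{PropfpctxComposition}. In the case $a_1ba_2$, $EnterCtx(i)$ pushes an identical item $(i,\emptyset)$ onto both stacks, which raises the agreeing window to $k+1$. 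The balanced inner part is treated by induction, and $LeaveCtx$ then pops back to a window of size $k$. Finally, since $d(w)\ge 0$, agreement on the top $k+d(w)$ items implies agreement on the top $k$, which gives the statement.

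\textbf{Main obstacle.} The delicate point is the $a_1ba_2$ case when the inner $b$ is itself only a prefix. More generally, the grammar of $\mathcal{A}_{pb}$ allows $ab$ and $ba$ with $a=EnterCtx$, so unmatched pushes may occur anywhere, and the window size must be tracked exactly. I would first try to prove, as a separate lemma by the same induction, that $f_{pCtx}$ on a balanced sequence preserves the stack length and never touches items below the initial top. For prefixes, the length grows by $d(w)$. Given that lemma, the bookkeeping of the $top_{k+d(w)}$ windows becomes routine.
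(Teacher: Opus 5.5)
Your core argument is the paper's. It runs an induction along the grammar of $\mathcal{A}_{pb}$ (cases $\epsilon$, $vw'$, $w'v$, $v_1w'v_2$), with the hypothesis quantified over every $k$ and every pair of stacks. $ReadAttr$, $CheckAttrType$ and $Nop$ are trivial, and $WriteAttr$ is handled by inspecting $put$ directly. The key observation is that pushing the same $(i,\emptyset)$ onto both stacks lifts agreement from $top_k$ to $top_{k+1}$, while a pop lowers it by one. Tracking $top_{k+d(w)}$ is a true but unnecessary strengthening: after an unmatched push, the paper simply applies the hypothesis at $k+1$ and weakens the resulting $top_{k+1}$ agreement to $top_k$. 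Your explicit handling of $k=1$ for $WriteAttr$ is more careful than the paper's write-up: there $c_{n-1}$ lies outside the window and is modified only when $i=0$, with $c_n$ unchanged. The same goes for your explanation of why $CopyAttr$ and $PropagateAttrs$ must be excluded; the paper neither singles out $k=1$ nor says why the exclusion is needed.

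You should, however, drop two side claims, because both are false: that the part below the top $k+d(w)$ items is left untouched, and that a balanced sequence never touches items below the initial top. Take the one-node path whose action is $WriteAttr(0,\eta,t)$. It is balanced with $d=0$, yet by $c_{put1}$ it replaces $c_{n-1}$ with $(k_{n-1}, m_{n-1}\cup\{(k_n,\eta)\mapsto t\})$. That item lies below the window when $k=1$, so untouchedness holds only for $k\ge 2$.

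Nothing in your argument needs these claims, and the ``main obstacle'' you identify is not really one. In the $a_1ba_2$ case, the hypothesis for $b$ applies at $k+1$ whatever $d(b)$ is, giving agreement on $top_{k+1+d(b)}$. The final pop then leaves agreement on $top_{k+d(b)}=top_{k+d(a_1ba_2)}$. This holds regardless of which item the pop actually removes. That matters because the grammar is ambiguous: $EnterCtx(i)\,EnterCtx(j)\,LeaveCtx()$ can be derived with $a_1=EnterCtx(i)$ and $b=EnterCtx(j)$, in which case the pop undoes $b$'s push rather than $a_1$'s. The window argument is insensitive to this, so no auxiliary lemma about matched pushes and pops is required.
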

\begin{propertyproof}
    This property is proven by induction on each of the following four cases:
    \begin{compactenum}
        \item $w = \epsilon$;
        \item $w=vw'$, with $w' \in W_{pb}$ and $a_{node}(v) = a$ with $a \in A \setminus \{LeaveCtx()\}$;
        \item $w=w'v$, with $w' \in W_{pb}$ and $a_{node}(v) = a$ with $a \in A \setminus \{LeaveCtx()\}$;
        \item $w=v_1w'v_2$ with $w' \in W_{pb}$, $a_{node}(v_1)=a_1=EnterCtx(i)$, and $a_{node}(v_2)=a_2=LeaveCtx()$.
    \end{compactenum}
    Please refer to \sect{sec:app:proofs} for the complete proof.
\end{propertyproof}

\begin{property}\label{PropBalancedPath0b}
Given two production context stacks with the same $head$, evaluating a balanced path prefix without $CopyAttr$ or $PropagateAttrs$ actions produces results with the same $head$.
    \begin{gather*}
        \forall w \in W_{pb}, \forall C_1, C_2 \in \mathcal{C}\\
        head(C_1)=head(C_2) \implies head(f_{pCtx}(w, C_1))=head(f_{pCtx}(w, C_2))
    \end{gather*}
\end{property}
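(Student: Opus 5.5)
The plan is to mirror the structural induction used for Lemma~\ref{PropBalancedPath0}, but carrying a stronger invariant than equality of $head$ alone, since $head$ by itself is not closed under $EnterCtx$/$LeaveCtx$. Concretely, I would prove the auxiliary claim that if $C_1$ and $C_2$ agree on $m_n$ (the top map) and on the single entry set $\{(k_n,\eta)\mapsto t\} \subseteq m_{n-1}$ (the part of the second-to-top map visible through $head$), then after evaluating any $w \in W_{pb}$ without $CopyAttr$ or $PropagateAttrs$ the results again agree on these components. The statement then follows directly from the definition of $head$. The case where the stack is the single item $c_0$ is handled separately: there $head(C)=c_0$, so equal heads means equal stacks, and the claim is immediate.

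The induction follows the four generating cases of $\mathcal{A}_{pb}$. For $w=\epsilon$ the claim is trivial. For $w=vw'$ and $w=w'v$ with $a_{node}(v)\neq LeaveCtx()$, I would first check that every single action other than $EnterCtx$/$LeaveCtx$ preserves equality of heads. For $WriteAttr$ this follows because $c_{put1}$ and $c_{put2}$ touch only $m_n$ or the $k_n$-indexed entries of $m_{n-1}$, and these are exactly the data recorded by $head$. $ReadAttr$, $CheckAttrType$ and $Nop$ leave $C$ unchanged. $CopyAttr$ and $PropagateAttrs$ are excluded, although they would in fact also be fine. I would then combine this single-step check with the inductive hypothesis using Lemma~\ref{PropfpctxComposition}.

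The fourth case, $w=v_1w'v_2$ with $EnterCtx(i)$ and then $LeaveCtx()$, is the main obstacle. After $EnterCtx(i)$ the new head consists of the empty map together with the $i$-indexed entries of the old $m_n$. These agree because the old $m_n$ agree, so the inductive hypothesis applies to $w'$. After $LeaveCtx()$, however, the head reverts to the old $m_n$ together with the $k_n$-part of $m_{n-1}$, and these may have been modified inside $w'$ through writes at position $0$, which alter the $i$-indexed entries of $m_n$. To close this case I would use Lemma~\ref{PropBalancedPath0} with $k=2$, or an analogue with $k$ matching the two topmost frames restricted to the relevant entries. A balanced prefix inside the bracket never pops below the pushed frame, so the evolution of the frame underneath depends only on the pushed frame and on the $i$-indexed entries, which are equal in both runs.

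To make this rigorous, I would strengthen the inductive statement so that it tracks, for each depth, only the components reachable via $head$ at that depth. Alternatively, I would reduce to the case $top_2(C_1)=top_2(C_2)$ after replacing $m_{n-1}$ by its restriction to index $k_n$, noting that entries with other indices are never read or written at this depth. Lemma~\ref{PropBalancedPath0} then gives equality of the top two frames up to that restriction after $LeaveCtx$, and Lemma~\ref{PropHeadDefImmediate} converts this into equality of heads.
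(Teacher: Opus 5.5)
Your plan is essentially the paper's proof. The paper also adapts the structural induction of Lemma~\ref{PropBalancedPath0} with $k=2$ and $head$ in place of $top_2$, and its only new explicit check is that $put$ (through $c_{put1}$ and $c_{put2}$) preserves equality of heads; its remark that the considerations on $top_3$ carry over to the conjunction of $top_2$ and $head(pop(\cdot))$ is a compressed form of your strengthened, depth-indexed invariant for the $EnterCtx$/$LeaveCtx$ case.

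Of your two ways to close that case, use the strengthened invariant. The restriction-to-$top_2$ reduction does not reach Lemma~\ref{PropBalancedPath0} as stated, because equal heads force neither $k_n=k_n'$ nor $k_{n-1}=k_{n-1}'$, so you would first have to renormalize those indices. Similarly, in your single-item case, equal heads give equal stacks only when both stacks have exactly one item.
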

\begin{propertyproof}
    This property is similar to \Cref{PropBalancedPath0} with $k=2$.
    All considerations made on $top_3(C) \: \forall C$ also hold for the conjunction to $top_2(C)$ and $head(pop(C))$.
    When $head(C_1)=head(C_2)$, we have that
    \[head(put(C_1,i,\eta,t)) = head(put(C_2,i,\eta,t))\]
    With $C_1=\langle c_0, \ldots, c_n\rangle$ and $C_2=\langle c_0', \ldots, c_n' \rangle$, that can be expanded as
    \begin{gather*}
        head(\langle c_1,\ldots,c_{put1}(c_{n-1},k_n,i,\eta,t),c_{put2}(c_n,i,\eta,t)\rangle)=\\
        head(\langle c_1',\ldots,c_{put1}(c_{n-1}',k_n',i,\eta,t),c_{put2}(c_n',i,\eta,t)\rangle)
    \end{gather*}
    This can be proven by definition of $head$, since $m_{put2}(c_n,i,\eta,t) = m_{put2}(c_n',i,\eta,t)$, and by considering that the mappings for $k_n$ contained in $m_{put1}(c_{n-1},k_n,i,\eta,t)$ are the same as those for $k_n'$ contained in $m_{put1}(c_{n-1}',k_n',i,\eta,t)$, in which $m_{put1}$ and $m_{put2}$ are used to denote the "$m$-part" of $c_{put1}$ and $c_{put2}$, respectively. 

    Then, the proof of \Cref{PropBalancedPath0} can be adapted with $k=2$, using $head$ in place of $top_2$, proving this property.
\end{propertyproof}

\begin{property}\label{PropBalancedPath1}
Given any two production context stacks, evaluating a balanced path prefix that does not contain $CopyAttr$ or $PropagateAttrs$ actions, wrapped between two $EnterCtx$ nodes, produces context stacks with the same head.
    \[
        \forall C_1, C_2 \in \mathcal{C}, w \in W_{pb}, \quad v_{sep} \in V_r^*:\: \exists i: a_{node}(v_{sep}) = EnterCtx(i),
    \]
    \[
        head(f_{pCtx}(v_{sep}wv_{sep}, C_1))=head(f_{pCtx}(v_{sep}wv_{sep}, C_2))
    \]
\end{property}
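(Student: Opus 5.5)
The proof decomposes $f_{pCtx}(v_{sep}wv_{sep}, C)$ with \Cref{PropfpctxComposition} and follows how the two $EnterCtx$ nodes reset the top of the stack. Let $a_{node}(v_{sep}) = EnterCtx(i)$. Then
\[
f_{pCtx}(v_{sep}wv_{sep}, C) = f_{aCtx}\bigl(f_{pCtx}(w, f_{aCtx}(C, EnterCtx(i))), EnterCtx(i)\bigr).
\]
The first step is to note that $f_{aCtx}(C_j, EnterCtx(i)) = push(C_j, (i, \emptyset))$ for $j=1,2$. After the first separator, both stacks therefore have the same topmost item $(i,\emptyset)$, that is, $top_1$ agrees, whatever $C_1$ and $C_2$ were.

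The second step applies \Cref{PropBalancedPath0} with $k=1$ to the balanced path prefix $w$, which contains no $CopyAttr$ or $PropagateAttrs$. Writing $D_j = f_{pCtx}(w, push(C_j,(i,\emptyset)))$, we get $top_1(D_1) = top_1(D_2)$. Here I must check that $k=1$ is allowed, i.e.\ both stacks have at least one item, which holds since every context stack contains $c_0$. I also need the reading of \Cref{PropBalancedPath0} under which a prefix in $W_{pb}$ never pops below the pushed frame, so that $top_1$ is the frame produced inside $w$ and not something inherited from $C_j$.

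The third step applies the second separator. We have $push(D_j,(i,\emptyset)) = \ldots, top(D_j), (i,\emptyset)$, so the two topmost items of both resulting stacks coincide: $top(D_1) = top(D_2)$ underneath, and $(i,\emptyset)$ on top. Since $k_n = i$ is the same in both stacks, \Cref{PropHeadDefImmediate} then gives equal heads. Concretely, $head$ is $\emptyset \cup \{(0,\eta)\mapsto t : ((i,\eta)\mapsto t) \in m_{n-1}\}$, where $m_{n-1}$ is the mapping of $top(D_j)$.

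The main obstacle is the second step. \Cref{PropBalancedPath0} as stated needs agreement on the top $k$ items, but a $WriteAttr(0,\ldots)$ inside $w$ modifies the item \emph{below} the top, via $c_{put1}$. With only $top_1$ agreeing, I must argue that such writes touch $c_{n-1}$ only at key $k_n$ and never feed back into $top_1$. The cleanest fix, which I would try first, is to rerun the induction of \Cref{PropBalancedPath0} directly for $k=1$. In that induction, $put$ changes $top_1$ only through $c_{put2}$, which depends only on $c_n$. $EnterCtx$/$LeaveCtx$ pairs restore the frame below; when they are unmatched at the end of the prefix, the new top is freshly pushed and hence equal. No $LeaveCtx$ can be unmatched in $W_{pb}$, so the frame $(i,\emptyset)$ is never popped.
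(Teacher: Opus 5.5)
Your proof is correct and follows the paper's argument. The first $EnterCtx$ equalizes $top_1$, \Cref{PropBalancedPath0} with $k=1$ carries this through $w$, and the second $EnterCtx$ together with \Cref{PropHeadDefImmediate} gives equal heads; the paper only adds a dispensable case split over the four shapes of $w \in W_{pb}$, using $k=2$ on the inner $w'$ in the $EnterCtx\ldots LeaveCtx$ case.

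The obstacle you raise is already handled by \Cref{PropBalancedPath0} as stated, because its $EnterCtx$ case raises $k$ to $k+1$, so $WriteAttr(0,\ldots)$ updates to the frame below inside nested pairs are covered by $top_2$ agreement. That is also why your fallback of rerunning the induction with $k$ frozen at $1$ would not close: such a pair can modify the frame below, so it is not simply "restored".
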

\begin{propertyproof}
    The proof follows the same structure as that of \Cref{PropBalancedPath0}, according to the four possible definitions of a balanced path. Please refer to \sect{sec:app:proofs} for the full proof.
\end{propertyproof}

\begin{property}\label{PropBalancedPath2}
    Given an initial production context stack $C \in \mathcal{C}$, two balanced path prefixes $w_1,w_2 \in W_{pb}$, and a "production entry node" $v_{sep}$ mapped to an $EnterCtx$ action, evaluating either $v_{sep}w_1v_{sep}w_2v_{sep}$ or $v_{sep}w_2v_{sep}$ on $C$ produces resulting context stacks with the same head. 
    \begin{gather*}
        \forall C \in \mathcal{C}, \: w_1,w_2 \in W_{pb}, \quad v_{sep} \in V_r^*: \: \exists i: a_{node}(v_{sep}) = EnterCtx(i),\\
        head(f_{pCtx}(v_{sep}w_1v_{sep}w_2v_{sep}, C)) = head(f_{pCtx}(v_{sep}w_2v_{sep}, C))
    \end{gather*}
\end{property}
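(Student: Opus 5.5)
The plan is to reduce the statement to \Cref{PropBalancedPath1}, using \Cref{PropfpctxComposition} to split the longer path into pieces. Write $C' = f_{pCtx}(v_{sep}w_1, C)$. By \Cref{PropfpctxComposition},
\[
f_{pCtx}(v_{sep}w_1v_{sep}w_2v_{sep}, C) = f_{pCtx}(v_{sep}w_2v_{sep}, C'),
\]
so the left-hand side is the evaluation of $v_{sep}w_2v_{sep}$ on the stack $C'$, while the right-hand side is the evaluation of the same path on $C$. The whole claim then becomes
\[
head(f_{pCtx}(v_{sep}w_2v_{sep}, C')) = head(f_{pCtx}(v_{sep}w_2v_{sep}, C)).
\]
This is exactly \Cref{PropBalancedPath1}, instantiated with $C_1 = C'$, $C_2 = C$ and $w = w_2$. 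Since \Cref{PropBalancedPath1} quantifies over \emph{any} two stacks, no relation between $C'$ and $C$ is needed, and the contribution of $w_1$ disappears entirely.

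The key steps, in order, are:
\begin{compactenum}
    \item Split the path $v_{sep}w_1v_{sep}w_2v_{sep}$ as $(v_{sep}w_1)(v_{sep}w_2v_{sep})$ and apply \Cref{PropfpctxComposition} to rewrite the left-hand side as above.
    \item Check that \Cref{PropBalancedPath1} applies to $w_2$. Its hypotheses are that $w_2 \in W_{pb}$ and that $v_{sep}$ is mapped to an $EnterCtx$ action, and both are assumed in the present statement. \Cref{PropBalancedPath1} also silently assumes that $w_2$ contains no $CopyAttr$ or $PropagateAttrs$ actions. I would make this restriction explicit here, inheriting it from \Cref{PropBalancedPath0}--\Cref{PropBalancedPath1}; $w_1$ can be left unrestricted, since it only changes the starting stack.
    \item Conclude by equality of heads.
\end{compactenum}

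The main obstacle is well-definedness rather than the argument itself. $f_{pCtx}$ relies on $pop$ and $get$, which are partial. If $w_1$ is a prefix containing unmatched $EnterCtx$ nodes, $C'$ is simply a deeper stack and nothing goes wrong. However, \Cref{PropBalancedPath1} itself needs $v_{sep}w_2v_{sep}$ to be defined on an arbitrary stack. Because $w_2 \in W_{pb}$ never contains an unmatched $LeaveCtx$, it never pops below the frame pushed by the first $v_{sep}$. So the evaluation on $C'$ is defined whenever the evaluation on $C$ is, and vice versa. I would state this remark briefly.

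If the direct citation of \Cref{PropBalancedPath1} turns out to need a stronger hypothesis, the fallback is to reprove it inline. The first $v_{sep}$ pushes a fresh $(i, \emptyset)$ frame, and by induction on $w_2$ (the four cases of \Cref{PropBalancedPath0}) everything reachable by $head$ afterwards is determined by that frame and what $w_2$ writes. The trailing $v_{sep}$ then pushes another fresh frame whose head depends only on the frame below it, which is again independent of the initial stack.
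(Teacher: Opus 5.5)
Your proposal is correct and matches the paper's proof. Both split the path as $(v_{sep}w_1)(v_{sep}w_2v_{sep})$ via \Cref{PropfpctxComposition}, then apply \Cref{PropBalancedPath1} with $C_1 = f_{pCtx}(v_{sep}w_1, C)$, $C_2 = C$ and $w = w_2$; your side remarks (the no-$CopyAttr$/$PropagateAttrs$ restriction on $w_2$, and $w_2 \in W_{pb}$ never popping below the frame pushed by the first $v_{sep}$) are sound and only make explicit what the paper leaves implicit.
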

\begin{propertyproof}
    This property is a trivial consequence of \Cref{PropBalancedPath1}. By definition of $f_{pCtx}$, the goal can be rewritten as
    \[
        head(f_{pCtx}(v_{sep}w_2v_{sep}, f_{pCtx}(v_{sep}w_1, C))) = head(f_{pCtx}(v_{sep}w_2v_{sep}, C))
    \]
    which is proven by applying \Cref{PropBalancedPath1} with $C_1 = f_{pCtx}(v_{sep}w_1)$, $C_2=C$, and $w=w_2$.
\end{propertyproof}

\subsubsection{Following valid paths}\label{sect:ContribCfgVisitValidPaths}
In \sect{sect:contrib:roles}, we discussed the problem of correctly modeling control-flow transfers between different semantic actions in a CFG by representing transitions that can happen at runtime, as well as the concepts of \textit{tagged entry and exit edges} and \textit{valid path}.

The \textit{edge tag stack} $S$, previously introduced in this section as part of the state of a CFG visit, is used to keep track, at any given point of the visit, of the tagged entry edges for which a corresponding tagged exit edge has not yet been traversed.

We define the $f_{edge}$ function as a way to drive the algorithm through valid paths according to these concepts.
It determines if an edge can be followed and updates $S$ if needed.
Let $L^*$ be the set of all possible edge tag stacks; $f_{edge}: (L^* \times E_r) \rightarrow L^*$ is defined as
    \[
        f_{edge}(S, e)=
            \begin{cases}
                S & \text{if } e \notin (E_{entry} \cup E_{exit}) \\
                push(S, l_{edge}(e)) & \text{if } e \in E_{entry} \\
                pop(S) & \text{if } e \in E_{exit} \land l_{edge}(e) = top(S)
            \end{cases}
    \]
Therefore, $f_{edge}$ is not defined if following the edge would lead to an invalid path.
Otherwise, $f_{edge}$ returns the updated edge tag stack after the edge is traversed.

\subsubsection{Solution to the problem of infinite paths}
Most practical context-free grammars are recursive.
A parser built from a recursive grammar can parse an infinite set of input strings; therefore, the number of parse trees that can be generated is also infinite.
Even though, according to our model, different parse trees can be mapped to the same CFG visit path, the paths themselves are potentially infinite nonetheless.\footnote{Two trivial examples are when a semantic action may call itself, possibly looping any number of times, or when the semantic action itself contains a loop that may \emph{eval} another action any number of times.}

Thus, it is fundamental to determine a finite subset of all possible paths to render the static analysis feasible.

\mypar{Na\"{\i}ve solutions}
By keeping track of the set of already visited nodes, one can prevent visiting the same node more than once.
It can be trivially shown that this approach can easily cause inconsistency, leaving the violations presented in \sect{sect:contrib:violations} undetected even in small CFGs: take, for instance, an \texttt{if-then} statement in which an attribute is set within the body when the condition is true; then, the same attribute is accessed outside. This should lead to an error, as the attribute is not set unless the condition holds.
It is possible to find similar counterexamples even when keeping track of all navigated edges instead of nodes. Thus, na\"{\i}ve solutions do not suffice, and a more complex one must be adopted.

\mypar{Complete solution}
Let $W \subseteq V_r^*$ be the (potentially infinite) set of valid paths. The goal is to find a finite set $W_a \subseteq W$ such that for every node $v_e$, if there exist some paths from the entry node to $v_e$ that may cause violations, then at least one such path is contained in $W_a$.

Let us introduce a function, $f_{pErr}$, that computes the set of violations detected within a path.
Given a path $w \in V_r^*$ and a production context stack $C \in \mathcal{C}$, $f_{pErr}: (V_r^* \times \mathcal{C}) \rightarrow \mathcal{P}(\mathcal{V})$ is defined as follows:
    \[
        f_{pErr}(w, C)=
            \begin{cases}
                \emptyset & \text{if } w = \epsilon \\
                f_{aErr}(f_{pCtx}(w', C), a_{node}(v)) & \text{if } w=w'v
            \end{cases}
    \]
Notice that $f_{pErr}$ "discards" any violations detected prior to reaching the last node; as we will show with the following properties and requirements, this does not affect the validity of the solution. 

\begin{property}\label{PropfperrHead}
    If $w$ is a prefix of a balanced path, then $f_{pErr}(w, C)$ only depends on the $head$ of $C$.
    \[
        \forall w \in W_{pb}, C_1, C_2 \in \mathcal{C}, \quad head(C_1) = head(C_2) \implies f_{pErr}(w, C_1) = f_{pErr}(w, C_2)
    \]
\end{property}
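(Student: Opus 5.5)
The plan is a case split on the shape of $w$ that reduces the claim to \Cref{Propfactionhead} and \Cref{PropBalancedPath0b}. By the definition of $f_{pErr}$, the empty path is trivial: $f_{pErr}(\epsilon, C_1) = \emptyset = f_{pErr}(\epsilon, C_2)$. Otherwise I write $w = w'v$, so that
\[
f_{pErr}(w, C_j) = f_{aErr}\bigl(f_{pCtx}(w', C_j), a_{node}(v)\bigr), \qquad j \in \{1,2\}.
\]
By \Cref{Propfactionhead} it then suffices to show
\[
head\bigl(f_{pCtx}(w', C_1)\bigr) = head\bigl(f_{pCtx}(w', C_2)\bigr).
\]
Given $head(C_1)=head(C_2)$, this follows from \Cref{PropBalancedPath0b} as soon as $w' \in W_{pb}$.

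The first substantive step is to show that $W_{pb}$ is closed under removing the last node: if $w'v \in W_{pb}$ then $w' \in W_{pb}$. I would prove this by structural induction on the derivation of the action sequence in $\mathcal{A}_{pb}$, following its three formation rules.
\begin{compactitem}
\item If the sequence is $ba$ with $a \neq LeaveCtx()$, then $w'$ corresponds to $b \in \mathcal{A}_{pb}$ directly.
\item If it is $ab$, either $b=\epsilon$ (so $w'=\epsilon$) or I apply the induction hypothesis to $b$ and prepend $a$.
\item If it is $a_1 b a_2$, then dropping $a_2$ leaves $a_1 b$. This lies in $\mathcal{A}_{pb}$ by rule 2, because $a_1 = EnterCtx(i) \neq LeaveCtx()$.
\end{compactitem}
Since $W_{pb}$ is defined by mapping nodes to actions, closure on action sequences transfers immediately to paths.

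The main obstacle is that \Cref{PropBalancedPath0b}, like \Cref{PropBalancedPath0}, is stated only for prefixes without $CopyAttr$ or $PropagateAttrs$ actions, whereas the present statement carries no such restriction. I would first check whether those two actions also preserve equality of heads. $CopyAttr$ performs a $get$, which depends only on the head (as noted in the proof of \Cref{Propfactionhead}), followed by a $put$, which preserves head equality by the computation in the proof of \Cref{PropBalancedPath0b}. $PropagateAttrs$ is $putAll(C,0,getAll(C,1))$. Here $getAll(C,1)$ reads only $m_n$, and $putAll$ at index $0$ modifies only the entries of $m_{n-1}$ keyed by $k_n$, which are exactly what $head$ exposes. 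If this holds, the inductive case for non-$LeaveCtx$ actions in the proof of \Cref{PropBalancedPath0b} extends verbatim and the restriction can be dropped. If it does not hold, I would add the same side condition (no $CopyAttr$ or $PropagateAttrs$ in $w'$) to this lemma, matching the hypotheses of the preceding results. Either way, combining the truncation closure, \Cref{PropBalancedPath0b} and \Cref{Propfactionhead} yields $f_{pErr}(w, C_1) = f_{pErr}(w, C_2)$.
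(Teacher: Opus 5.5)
Your proposal is correct and follows the paper's proof. Both split into $w=\epsilon$ and $w=w'v$, observe that $w'\in W_{pb}$, then apply \Cref{PropBalancedPath0b} to get equal heads of $f_{pCtx}(w',C_1)$ and $f_{pCtx}(w',C_2)$, and conclude with \Cref{Propfactionhead}. Your structural induction for closure of $W_{pb}$ under dropping the last node, and your handling of the $CopyAttr$/$PropagateAttrs$ side condition, spell out what the paper leaves implicit: its proof inherits that restriction from \Cref{PropBalancedPath0b}, under the global assumption stated just before \Cref{wasatisfies}.
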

\begin{propertyproof}
    If $w=\epsilon$, the proof is trivial, as by definition, $f_{pErr}(\epsilon, C) = \emptyset \: \forall C$.

    If $w = w'v$, the goal is to prove that
    \[
        f_{aErr}(f_{pCtx}(w', C_1), a_{node}(v)) =f_{aErr}(f_{pCtx}(w', C_2), a_{node}(v))
    \]
    For the definition of a prefix of a balanced path, and since $w \in W_{pb}$, we can conclude that $w' \in W_{pb}$.

    Therefore, by \Cref{PropBalancedPath0b} and from the assumption $head(C_1) = head(C_2)$ follows that $head(f_{pCtx}(w', C_1)) = head(f_{pCtx}(w', C_2))$.

    Applying \Cref{Propfactionhead} with the context stacks resulting from $f_{pCtx}(w', C_1)$ and $f_{pCtx}(w', C_2)$ as $C_1$ and $C_2$, respectively, and $a_{node}(v)$ as $a$ concludes the proof.
\end{propertyproof}

For conciseness, let us also define $v_0=v_{entry}(G_r)$ and $f_e(w)=f_{pErr}(w, \langle\emptyset\rangle)$.
Then:
    \begin{equation}\label{warequirement}
    \forall v_e \in V_r, \quad \exists w_e=v_0 \ldots v_e: f_e(w_e) \neq \emptyset \implies \exists w_e'=v_0 \ldots v_e \in W_a: f_e(w_e') \neq \emptyset
\end{equation}
Note that a path in $W_a$ does not \textit{necessarily} lead to a violation.\footnote{The implication is not bidirectional.}

There are a few constraints that the paths must adhere to satisfy these requirements while keeping $W_a$ finite:
\begin{compactenum}
    \item A tagged entry edge is not followed if its tag is already present twice on the current edge tag stack.\footnote{Notice that this condition is less strict than that of only following any given tagged entry edge at most twice in a path: if $e_1 \in E_{entry}$ is followed, and then in the same path $e_2 \in E_{exit}$ such that $l_{edge}(e_2) = l_{edge}(e_1)$ is also followed, $e_1$ can be followed again twice.} Formally, we define a function $f_{stack}: (V_r^* \times L^*) \rightarrow L^*$ representing how traversing a path mutates the edge tag stack.
    \[
        f_{stack}(w, S)=
        \begin{cases}
            S & \text{if } |w| \le 1 \\
            f_{stack}(v_2w', f_{edge}(S, (v_1, v_2))) & \text{if } w = v_1v_2w'
        \end{cases}
    \]
    Thus, the requirement is expressed as:
    \begin{equation}\label{pathcond1}
        W_a'=\{w \in V_r^*: \quad \nexists l_1,l_2,l_3 \in f_{stack}(w, \epsilon): l_1=l_2=l_3 \}
    \end{equation}
    Intuitively, a tag being present once on the stack tests invocation, a tag being present twice tests recursion, whereas three or more occurrences do not provide additional information.
    \item A node is not visited again if it was previously visited with the same production context stack and edge tag stack as the current ones.
    Formally:
    \[
        W_a=\{w=v_0 \ldots v_n \in W_a': \quad \forall w'=v_0' \ldots v_n' \in W_a' \text{ such that } w' \neq w \land v_n'=v_n,
    \]
    \begin{equation}\label{pathcond2}
        \quad f_{pCtx}(w, \langle\emptyset\rangle) \neq f_{pCtx}(w', \langle\emptyset\rangle) \lor f_{stack}(w, \epsilon) \neq f_{stack}(w', \epsilon)\}
    \end{equation}
\end{compactenum}
It can be observed that $W_a \subseteq W_a'$.

We now prove two lemmas demonstrating that under the assumption that no $CopyAttr$ or $PropagateAttrs$ nodes are present in the graph (which is a precondition to several properties proven thus far), visiting all the paths contained in the set $W_a$ determined by the constraints above is sufficient to find a violation within a path, if any exists.

A limitation of this approach is that we cannot prove that the same property holds when $CopyAttr$ or $PropagateAttrs$ actions are used. However, we deemed this limitation acceptable for the following reasons.
\begin{compactenum}
    \item While lacking a formal proof, experimentation (see \sect{sec:evaluation}) demonstrated that even when such actions are used, the software is able to find most violations anyway in practice.
    \item All $PropagateAttrs$ actions can always be replaced by copying relevant attributes from child notes to parent nodes explicitly, instead of relying on this \tool{Neverlang}-specific quality-of-life feature. Similarly, $CopyAttrs$ actions can also be replaced by other, more explicit, actions: semantic instructions of the form \texttt{\$0.attrA = \$1.attrB} can be replaced with \texttt{\$0.attrA = (T) \$1.attrB}, where $T$ is the expected type of the attribute, thus modifying the CFG accordingly, substituting the single $CopyAttr(0, attrA, 1, attrB)$ node with the two nodes $ReadAttr(1, attrB, T)$ and $WriteAttr(0, attrA, T)$ in succession. In other words, the problem is solved by rendering the type of the attribute explicit. By taking these precautions, the approach is formally sound and can always find any violations.
\end{compactenum}

\begin{lemma}\label{wasatisfies}
    $W_a$ as defined by~\eqref{pathcond1} satisfies~\eqref{warequirement}.
\end{lemma}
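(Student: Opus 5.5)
Assume $f_e(w_e)\neq\emptyset$ for some valid path $w_e=v_0\ldots v_e$ that violates \eqref{pathcond1}. The plan is to shorten $w_e$, keeping it valid, keeping its final node $v_e$, and preserving a nonempty $f_e$, until no tag occurs three times on its edge tag stack. Induction on path length then yields the required path in $W_a'$ (the statement refers to \eqref{pathcond1}, i.e.\ to $W_a'$). The second constraint \eqref{pathcond2} is not needed for the statement as written; at most one remark would cover it, namely that among paths with equal final node, context stack and tag stack, $f_e$ coincides because $f_{aErr}$ depends only on the resulting context (\Cref{Propfactionhead}).

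\textbf{Locating a removable segment.} If $f_{stack}(w_e,\epsilon)$ contains a tag $l$ three times, look at the three still-open tagged entry edges with tag $l$ on $w_e$. All of them enter the same semantic action $G_j$ at the same call site. Write $w_e=u\,v_{sep}\,w_1\,v_{sep}\,w_2\,v_{sep}\,x$, where each $v_{sep}$ is the entry node reached through an $l$-tagged edge. By the construction of \Cref{AlgBuildRoleCFG}, I assume this entry node carries the $EnterCtx$ prologue; if it does not, I take $v_{sep}$ to be the $EnterCtx$ node that immediately follows it. Because the tags are still open and the edge structure is parenthesised, $w_1$ and $w_2$ are valid paths whose unmatched entry edges are exactly those opened along them. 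Their action sequences therefore lie in $\mathcal{A}_{pb}$, since every $LeaveCtx$ is matched by an earlier $EnterCtx$ within the segment. The idea is to delete the first repetition, $v_{sep}w_1$, producing $w_e'=u\,v_{sep}\,w_2\,v_{sep}\,x$.

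\textbf{Checking that the shortened path works.} First, $w_e'$ is still a valid path: the removed portion starts and ends at the same node $v_{sep}$ and contains a balanced push of $l$ together with whatever it opens, so gluing is legal and the tag stack seen by $x$ loses exactly one copy of $l$ and the tags opened inside $w_1$. Next, applying \Cref{PropBalancedPath2} with context $f_{pCtx}(u,\langle\emptyset\rangle)$, and composing via \Cref{PropfpctxComposition}, shows that the contexts after $u v_{sep}w_1v_{sep}w_2v_{sep}$ and after $u v_{sep}w_2v_{sep}$ have the same head. The suffix $x$ is a balanced path prefix relative to that point, so \Cref{PropfperrHead} gives $f_e(w_e')=f_e(w_e)\neq\emptyset$. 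Here we use that $f_{pErr}$ only inspects the last action, that $x$ ends at $v_e$, and that if $x=\epsilon$ the final node is $v_{sep}$ in both paths, which still works. Since $w_e'$ is strictly shorter, repeating the procedure terminates with a path satisfying \eqref{pathcond1}.

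\textbf{Main obstacle.} The delicate step is justifying that the suffix $x$ belongs to $W_{pb}$, since it may contain $LeaveCtx$ actions that close contexts opened in $u$. \Cref{PropfperrHead} only covers prefixes of balanced sequences. I would handle this by choosing the three occurrences of $l$ to be the last three open ones and noting that the stack contains at least two copies of the relevant context frames below. I would then strengthen \Cref{PropBalancedPath0} from $head$ to $top_k$ equality for large enough $k$, so that popping frames in $x$ still exposes equal frames. This uses the fact that the stacks below the segment, $f_{pCtx}(u,\cdot)$, are literally identical in both paths, so only the top frames differ, and those are reset identically by the final $v_{sep}$ through \Cref{PropBalancedPath1}. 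A smaller concern is that $w_1$ and $w_2$ contain no $CopyAttr$ or $PropagateAttrs$ actions; this is the standing assumption stated just before the lemma.
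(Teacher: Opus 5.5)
Your argument is the paper's: find three still-open pushes of one tag $l$, delete the first repetition, get equal heads from \Cref{PropBalancedPath2} together with \Cref{PropfpctxComposition}, and carry the violation through the suffix with \Cref{PropfperrHead}. Iterating on path length (equivalently, starting from a shortest violating path to $v_e$) is slightly tighter than the paper's one-shot contradiction. As written, that contradiction only brings $l$ down to two occurrences and does not rule out other tags that may still occur three times.

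Two points need correcting.

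\textbf{The choice of $v_{sep}$.} The $EnterCtx$ node sits on the caller side. \Cref{AlgBuildRoleCFG} relabels the caller's $v_{beginEval}$ as $EnterCtx(k)$, and that node is the \emph{source} of the $l$-tagged entry edge. The callee's $v_{entry}(G_j)$ is not an $EnterCtx$ node, and the $EnterCtx$ node you need immediately precedes it rather than follows it. Take $v_{sep}$ to be that source node (the paper's $v_1$). Then all segments begin at $v_{entry}(G_j)$, and $x$ is automatically nonempty.

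\textbf{The ``main obstacle''.} It does not arise, and your proposed workaround would fail if it did. The third push of $l$ is never popped, so no exit edge in $x$ can pop a tag pushed before $x$. Every edge into a $LeaveCtx$ node is an exit edge, and every edge out of an $EnterCtx$ node is an entry edge. Hence each $LeaveCtx$ in $x$ is matched inside $x$; this is the same argument you already used for $w_1,w_2$. In addition, $a_{node}(v_e)\neq LeaveCtx()$, because $f_{aErr}(C,LeaveCtx())=\emptyset$ while there is a violation at $v_e$. Together these give $x\in W_{pb}$, which is exactly how the paper concludes.

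A $top_k$ strengthening could not handle pops below the segment anyway. You claim only the top frames differ, but the longer stack still holds the frames opened and not closed by $v_{sep}w_1$. So the two stacks have different depths, and the frames below their tops are not aligned.
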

\begin{proof}
    Let us consider a generic node $v_e \in V_r$; we must prove that if paths from the entry node to $v_e$ with a violation exist, then at least one of them is contained within $W_a'$, so that it satisfies~\eqref{pathcond1}. So, let's assume that
    \begin{equation}\label{wasatisfies_hyp}
        \exists w_e=v_0 \ldots v_e: f_e(w_e) \neq \emptyset
    \end{equation}
    and proceed by proving that
    \[
        \exists w_e'=v_0 \ldots v_e: \quad f_e(w_e') \neq \emptyset \: \land \: w_e' \in W_a'
    \]
    which, due to~\eqref{pathcond1}, can be rewritten as
    \[
        \exists w_e'=v_0 \ldots v_e: \quad f_e(w_e') \neq \emptyset \: \land \: \nexists l_1,l_2,l_3 \in f_{stack}(w_e', \epsilon): l_1=l_2=l_3
    \]
    Proceeding by contradiction, assume the goal is false. Without loss of generality, we consider that evaluating \textit{any} path from the entry node $v_0$ to $v_e$ that leads to the detection of a violation at $v_e$ requires pushing the same edge tag on the stack at least three times. Then:
    \begin{equation}\label{wasatisfies_contra}
        \forall w_e'=v_0 \ldots v_e: \quad f_e(w_e') = \emptyset \: \lor \: \exists l_1,l_2,l_3 \in f_{stack}(w_e', \epsilon): l_1=l_2=l_3
    \end{equation}
    With $w_e'=w_e$ and due to~\eqref{wasatisfies_hyp}
    \[
        \exists l_1,l_2,l_3 \in f_{stack}(w_e, \epsilon): l_1=l_2=l_3
    \]
    or, in words, the edge tag stack necessarily contains the same tag $l$ such that $l=l_1=l_2=l_3$ at least three times after traversing $w_e$.

    We show the contradiction by proving that if $l$ is present on the final edge tag stack exactly three or more times, then part of the path can be removed.

    By construction of the CFG and by definition of $f_{edge}$ and $f_{stack}$, condition~\eqref{wasatisfies_contra} occurs when an entry edge $(v_1, v_2) \in E_{entry}$, such that $l_{edge}((v_1,v_2))=l$, is traversed exactly three times in $w_e$ without the corresponding exit edge $(v_3, v_4) \in E_{exit}$ such that $l_{edge}((v_3, v_4))=l$ ever being traversed.
    Therefore, the (possibly empty) $w_1,w_2,w_3,w_4$ paths exist, such that
    \[
        w_e=v_0w_1v_1v_2w_2v_1v_2w_3v_1v_2w_4v_e
    \]
    and such that none between $w_2, w_3$, and $w_4$ contain the vertex pair $v_3v_4$. Recall that only valid paths (as by definition provided in \sect{sect:contrib:roles}) are considered; thus, $w_e$ is a valid path. It follows that $w_2$, $w_3$, and $w_4$ must also be valid paths, \emph{i.e.}, they cannot contain exit edges without a corresponding entry edge. Upon traversing each of them, the top of the edge tag stack is $l$; therefore, if an exit edge had to be followed, its tag would also need to be $l$, but this would mean that the edge is $(v_3, v_4)$, which is not allowed under the current hypothesis.

    It is true by CFG construction that each entry edge always starts from a node mapped to an $EnterCtx$ action, and also that each exit edge always points to a node mapped to a $LeaveCtx$ action. The inverse is also true, \emph{i.e.}, all outgoing edges from an $EnterCtx$ node are entry edges, and all incoming edges to a $LeaveCtx$ node are exit edges.

    We can therefore conclude that neither $w_2$, $w_3$, nor $w_4$ can contain $LeaveCtx$ nodes unless they also contain a matching $EnterCtx$ node, as by definition of \textit{balanced path prefix} given in \sect{ContribCfgVisitBalActionSeq}; therefore, $w_2, w_3, w_4 \in W_{pb}$.

    We expand $f_e(w_e)$ as follows:
    \begin{align*}
        &f_e(w_e)  = &(\text{by definition of } f_e) \\
        &=f_{pErr}(w_e, \langle\emptyset\rangle)= & (\text{by definition of } f_{pErr})\\
        &=f_{pErr}(v_0w_1v_1v_2w_2v_1v_2w_3v_1v_2w_4v_e, \langle\emptyset\rangle)= & (\text{by \Cref{PropfpctxComposition}})\\
        &=f_{pErr}(v_2w_4v_e, f_{pCtx}(v_0w_1v_1v_2w_2v_1v_2w_3v_1, \langle\emptyset\rangle))= & (\text{by \Cref{PropfpctxComposition}})\\
        &=f_{pErr}(v_2w_4v_e, f_{pCtx}(v_1v_2w_2v_1v_2w_3v_1, f_{pCtx}(v_0w_1, \langle\emptyset\rangle))) &
    \end{align*}
    Recall that $(v_1, v_2)$ is an entry edge and thus:
    \begin{compactenum}
        \item $v_1$ is mapped to an $EnterCtx$ action and is therefore compatible with the requirements $v_{sep}$ in \Cref{PropBalancedPath2};
        \item $v_2$ cannot be a $LeaveCtx$ action---as it is the entry node of a semantic action---and therefore $v_2w_2, v_2w_3 \in W_{pb}$.
    \end{compactenum}
    We can then apply \Cref{PropBalancedPath2} with $C=f_{pCtx}(v_0w_1, \langle\emptyset\rangle)$, $v_{sep}=v_1$, $w_1=v_2w_2$, and $w_2=v_2w_3$ to obtain
    \[
        head(f_{pCtx}(v_1v_2w_2v_1v_2w_3v_1, f_{pCtx}(v_0w_1,\langle\emptyset\rangle)))=head(f_{pCtx}(v_1v_2w_3v_1, f_{pCtx}(v_0w_1,\langle\emptyset\rangle)))
    \]
    Since $v_2$ is not a $LeaveCtx$ action, we know also that $v_2w_4 \in W_{pb}$, whereas from the hypothesis $f_e(w_e)\neq\emptyset$, we know that $a_{node}(v_e) \neq LeaveCtx()$, as $f_{aErr}(C, LeaveCtx())=\emptyset \:\: \forall C$; thus, $v_2w_4v_e \in W_{pb}$.
    We can then apply \Cref{PropfperrHead} to obtain
    \[
        f_e(w_e)=f_{pErr}(v_2w_4v_e, f_{pCtx}(v_1v_2w_3v_1, f_{pCtx}(v_0w_1, \langle\emptyset\rangle)))
    \]
    from which, by \Cref{PropfpctxComposition} and definition of $f_{pErr}$, follows that
    \begin{gather*}
        f_e(w_e)=f_{pErr}(v_2w_4v_e, f_{pCtx}(v_0w_1v_1v_2w_3v_1, \langle\emptyset\rangle))=f_{pErr}(v_0w_1v_1v_2w_3v_1v_2w_4v_e, \langle\emptyset\rangle)
    \end{gather*}
    $f_e(w_e)\neq\emptyset$ by hypothesis~\eqref{wasatisfies_hyp}, from which it follows that
    \[
        f_{pErr}(v_0w_1v_1v_2w_3v_1v_2w_4v_e, \langle\emptyset\rangle)\neq\emptyset
    \]
    Let $w_e' \coloneq v_0w_1v_1v_2w_3v_1v_2w_4v_e$ for brevity. The previous steps proved that $f_e(w_e')\neq \emptyset$, and $w_e'$ only contains the edge $(v_1, v_2)$ twice; therefore, after traversing $w_e'$, the tag $l$ is only present on the edge tag stack twice.
    This conclusion contradicts the hypothesis of the \emph{reductio ad absurdum}~\eqref{wasatisfies_contra}, proving that a generic path leading to the detection of a violation can be reduced to a path satisfying the requirement for inclusion in $W_a'$ as specified by~\eqref{pathcond1}.
\end{proof}

\begin{lemma}\label{wasatisfies2}
    With the additional restriction specified by~\eqref{pathcond2}, $W_a$ still satisfies~\eqref{warequirement}.
\end{lemma}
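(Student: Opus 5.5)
The plan is to argue, as in \Cref{wasatisfies}, by taking a violating path and shrinking it until it lies in $W_a$. Fix $v_e$ and assume some path $v_0\ldots v_e$ has $f_e\neq\emptyset$. By \Cref{wasatisfies} we may pick such a path $w_e\in W_a'$, and among all violating paths in $W_a'$ ending at $v_e$ we choose one of \emph{minimal length}. The claim is that this minimal path already satisfies~\eqref{pathcond2}, in the sense that every one of its prefixes reaches its last node with a state (context stack and edge tag stack) not already produced by an earlier, distinct prefix.

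The core step is a cut-out argument. Suppose $w_e$ violates~\eqref{pathcond2}. Then it has two prefixes $u$ and $uu'$, with $u'$ nonempty, that end at the same node $v$ and satisfy
\[
f_{pCtx}(u,\langle\emptyset\rangle)=f_{pCtx}(uu',\langle\emptyset\rangle)
\quad\text{and}\quad
f_{stack}(u,\epsilon)=f_{stack}(uu',\epsilon).
\]
Write $w_e=uu'z$, where $z$ starts right after the repeated occurrence of $v$, and consider the shorter path $w_e''=uz$. Three facts then need to be checked in turn.
\begin{compactenum}
\item $w_e''$ is a valid path. Since the edge tag stacks agree at both occurrences of $v$, every entry or exit edge in $z$ is matched against the same stack, so $f_{edge}$ stays defined along $z$.
\item $f_e(w_e'')=f_e(w_e)$. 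By \Cref{PropfpctxComposition}, $f_{pCtx}(uz',\langle\emptyset\rangle)=f_{pCtx}(z',f_{pCtx}(u,\langle\emptyset\rangle))$ for every prefix $z'$ of $z$. The context stacks coincide, so all contexts along $z$ are identical in both paths, and in particular the context before $v_e$ is the same; $f_{aErr}$ at $v_e$ therefore agrees. This step needs no head argument, because the whole context stack is equal.
\item $w_e''\in W_a'$. The tag stack after $uz$ equals the tag stack after $uu'z$, because $f_{stack}$ composes piecewise exactly as $f_{pCtx}$ does, and that composition follows from its recursive definition by the same induction as \Cref{PropfpctxComposition}. The tag-count condition~\eqref{pathcond1} on the final stack is therefore preserved.
\end{compactenum}
Together these produce a strictly shorter violating path in $W_a'$, contradicting minimality.

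The remaining care is in reading the definition~\eqref{pathcond2}. As written, it quantifies over arbitrary paths $w'\in W_a'$ with the same endpoint, not only prefixes, so taken literally it would exclude both members of any pair with equal states. I would interpret it operationally, as the visit does: a path is kept if it is the first one to reach $(v,C,S)$, which in the proof means the canonical shortest one. Two points must then be argued. First, the kept paths are the repetition-free paths. Second, for the minimal violating path, every state along it is reached first via its own prefix, because any shorter route to an intermediate state could be spliced in by the same composition argument, again contradicting minimality. The main obstacle is making this tie-breaking precise so that exactly one representative per final state survives. A second point to watch is ensuring that~\eqref{pathcond1} is checked on intermediate stacks as well; this is needed if $W_a'$ is meant to be prefix-closed, and it holds because removing a loop with equal endpoint stacks never increases any intermediate stack's multiplicities beyond those already present in $w_e$.
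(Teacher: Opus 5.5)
There is a genuine gap, and it sits exactly where you put your ``main obstacle''. You are right that \eqref{pathcond2}, read literally, discards both members of any pair with equal states; the paper's proof silently re-reads it as keeping exactly one path per class. Your excision step is also sound whenever the repeated state occurs strictly before the end of the path. But the proof does not close, because everything hinges on the minimal violating path being the one that survives deduplication, and this is never shown. Under the paper's reading (among paths of $W_a'$ with the same endpoint, the same $f_{pCtx}(\cdot,\langle\emptyset\rangle)$ and the same $f_{stack}(\cdot,\epsilon)$, exactly one is kept), the kept representative is arbitrary, so your repetition-free minimal path need not be it. Under your operational reading, ``first to reach $(v,C,S)$'' is not ``shortest'': Algorithm~\ref{AlgVisitRoleCFG} picks from $\Phi$ in arbitrary order, so a longer path can claim an intermediate state first and prune your minimal one. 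Minimality only excludes strictly shorter competitors, never equal-length ones or longer paths explored earlier.

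The missing idea, which is the whole of the paper's proof, is that the choice of representative is irrelevant. $f_e$ is constant on each such class, so if any member violates at $v_e$, the surviving one does too. With that, minimality, excision and tie-breaking all become unnecessary; the same fact is also what makes the visited-set pruning harmless in whatever order states are picked.

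This invariance, which the paper asserts ``by definition of $f_e$'', needs a short check, and so does your step 2 when $z=\epsilon$. The reason is that $f_e(w'v)$ is computed from the context \emph{before} the action of $v$, whereas \eqref{pathcond2} compares contexts \emph{after} it. So your claim that the context before $v_e$ coincides is justified only when $z$ is nonempty. The check goes through. Only $ReadAttr$ and $CopyAttr$ produce context-dependent violations, and $ReadAttr$ leaves the stack unchanged. For $CopyAttr$, a missing source leaves a stack lacking the source attribute, while a present source yields, through $put$ (which only adds mappings), a stack that still contains it. Hence the post-action stack determines the violation at the last node.
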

\begin{proof}
    The goal is to prove that, given a path $w_e \in W_a'$ from the entry node $v_0$ to a node $v_e$ with a detected violation ($f_e(w_e)\neq\emptyset$), then a $w_e' \in W_a$ such that $f_e(w_e')\neq\emptyset$ also exists.

    The proof is trivial when considering that the requirement for inclusion of paths in $W_a$~\eqref{pathcond2} can be formulated as follows: given a set of paths that all share the same results of evaluating $f_{pCtx}$ and $f_{stack}$, \emph{i.e.}, a set $W_x$ such that
    \[
        \exists C \in \mathcal{C}, S \in L^*, \quad \forall w_x \in W_x \:\: f_{pCtx}(w_x, \langle\emptyset\rangle)=C \:\land\: f_{stack}(w_x, \epsilon)=S
    \]
    then exactly one of them is included in $W_a$.

    In fact, a direct consequence of such a definition is that all the paths $w_x \in W_x$ also share the same value of $f_e(w_x)$, by definition of $f_e$. Therefore, if $\exists w_e \in W_x$ such that $f_e(w_e)\neq\emptyset$, then regardless of which of the other items is picked from $W_x$ to fill the role of $w_e'$, it satisfies $f_e(w_e')=f_e(w_x)\neq\emptyset$.
\end{proof}

\subsubsection{Visit algorithm}
We can finally define a theoretically sound algorithm that traverses the CFG according to the properties defined thus far to detect any of the relevant violations that may occur when composing separately compiled language fragments.

Previously defined functions $f_{action}$ and $f_{edge}$ are used, respectively, to evolve the state and to drive the visit through valid paths only. The implementation keeps a set of states to visit as storage for paths that need to be explored further, and one of all states that have been visited to enforce path condition~\eqref{pathcond2}. Condition~\eqref{pathcond1} is instead
enforced by inspecting the edge tag stack before following an edge.
\begin{algorithm}
\caption{Visit of a role's CFG to find violations}\label{AlgVisitRoleCFG}
\begin{algorithmic}
\State$\Phi := \{(v_{entry}(G_r), v_{entry}(G_r), \epsilon, \langle\emptyset\rangle)\}$
\Comment{Set of states to visit}
\State$\Gamma := \emptyset$\Comment{Set of visited states}
\State$U := \emptyset$\Comment{Found violations}
\While{$\Phi \neq \emptyset$}
\State{} pick any $(v, w, S, C) \in \Phi$
  \State$\Phi := \Phi \setminus \{(v_, w, S, C)\}$
  \If{$(v, C, S) \notin\Gamma$} \Comment{Path condition~\eqref{pathcond2}}
    \State$\Gamma := \Gamma \cup (v, C, S)$
    \State$(C', U') := f_{action}(C, a_{node}(v))$
    \State$U := U \cup U'$
    \ForAll{$(v, v') \in E_r$}
      \State$S' := f_{edge}(S, (v, v'))$
      \If{$\exists S'$ \textbf{and} $\nexists l_1,l_2,l_3 \in S: l_1=l_2=l_3$}
        \Comment{Path validity and~\eqref{pathcond1}}
        \State$\Phi := \Phi \cup \{(v', wv', S', C')\}$
      \EndIf%
    \EndFor%
  \EndIf%
\EndWhile%
\If{$U = \emptyset$}
\State{} verification succeeded
\Else%
\State{} violation(s) have been detected and collected in $U$ 
\EndIf%
\end{algorithmic}
\end{algorithm}

\subsection{Optimization}\label{ContribOptimization}

While sound, \Cref{AlgVisitRoleCFG} needs performance that allows for static analysis to be performed as part of the language development process with ease.
Therefore, significant effort went toward reducing the time complexity of the visit algorithm, with the goal of making the execution of \nlgcheck{} adequate to be used with non-trivial languages.
We will not discuss such optimizations in detail, so please refer to \sect{sec:evaluation} for a performance evaluation. This section is limited to providing a brief overview of both optimizations and their rationale. Please refer to \sect{sec:app:algoopt} for the optimized visit algorithm.

\subsubsection{Graph deduplication}\label{ContribOpt1}

The execution time of \Cref{AlgVisitRoleCFG} largely depends on the size of the CFG\@.
A reasonable optimization is therefore to reduce the size of the graph whenever possible, which in turn depends on the number of semantic actions in the language.
The optimization leverages a few key empirical observations.
\begin{compactitem}
    \item It is not uncommon for a language grammar to contain productions that differ from each other only in their terminal symbols and are therefore syntactically equivalent in this analysis context. Consider, for example, pairs of productions for operators that have the same precedence, such as \texttt{Expr $\leftarrow$ Expr + Term} and \texttt{Expr $\leftarrow$ Expr - Term}; the same can be said for other binary or unary operators. 
    \item It is also not uncommon for semantic actions associated with such productions to have very similar implementations; with respect to the prior example, the implementations of addition and subtraction are extremely similar for all intents and purposes, differing only in the operator used to implement the expression. Most notably, such "similar" semantic actions might as well be \textit{identical} from \nlgcheck{}'s standpoint because the actual operator does not affect any of the attribute types, nor is it represented in any way within the CFG\@.
    \item Modularity is one of the strongest points of language workbenches, and thus language developers are encouraged to keep semantic actions as small as possible, with the goal of improving their reusability. By adhering to this guideline, semantic actions can be as small as a single or a few lines of code, without any control flow. In this case, the CFGs simply consist of a chain of nodes, each of which has exactly one entry node and one exit node.
\end{compactitem}
In summary, the actual output of converting semantic actions into CFG fragments may contain several duplicates.
Therefore, the CFG building algorithm can be optimized by discarding any CFG of these duplicates. Notably, determining graph isomorphism is a well-known problem.

\subsubsection{Path deduplication}\label{ContribOpt2}

Consider a generic nonterminal symbol $A$ of the grammar. Every time the semantics of a production having $A$ in its right-hand side---\emph{i.e.}, with the form $B \leftarrow \ldots A \ldots$---is analyzed, the semantics of all the productions with $A$ on the left-hand side must also be analyzed.

The language potentially contains several such productions.
We empirically observed that it is frequent for the $head$ of the production context stack to be the same when entering one of the semantic actions among the set of those associated with productions generating the same nonterminal.
There are two possible reasons for this fact.
\begin{compactenum}
    \item We found it to be more common to pass attributes from child to parent in the AST than from parent to child; therefore, when descending into the AST, when encountering $EnterCtx$ nodes in a CFG path, the $head$ of the context stack often contains no attributes.
    \item In the cases in which attributes are passed from parent to child, we can expect the semantic actions associated with the child nodes to actually read these attributes; therefore, for all the paths without a violation\footnote{As is commonly done in error theory, we expect the absence of errors to be more likely than their presence.} the $head$ of the context when entering a semantic action must contain all the attributes that the semantic action will read.
\end{compactenum}
The idea for this optimization leverages these observations to review the structure of the CFG\@: for each nonterminal $A$, we add an "entry node" and an "exit node" that are shared among all the semantic actions attached to productions of the form $A \leftarrow \ldots$.
Such nodes serve as "anchor points" to determine when to access and when to update a cache storing all the production context stacks.
By caching all the possible heads of the production context stack at the corresponding nonterminal exit node for each possible head of the production context stack at each nonterminal entry node, the algorithm can "jump" from the nonterminal entry node to the exit node. The goal is to avoid repeating the whole visit when a particular node is reached more than once with the same context stack.

\section{Implementation}\label{sect:implementation}
The implementation of \nlgcheck{} closely matches the description provided in \sect{sect:contribution}; therefore, we do not discuss it in detail. This section briefly overviews any relevant technology. Please refer to the replication package available at \url{https://zenodo.org/records/18352007} for the full implementation.

The \nlgcheck{} tool was implemented in \tool{Scala} to leverage the features of a functional programming language while being able to interact with the \tool{Neverlang} API\@. In fact, most of the relevant information about language fragments is extracted directly from the pre-compiled class files; therefore, the choice of a \tool{JVM}-based language was natural. Moreover, this made it possible to implement \nlgcheck{} both as a command-line tool and as a \tool{Gradle} plugin that can be directly included in the build file of each \tool{Neverlang} project with minimum boilerplate. The analysis itself leveraged both the \tool{Java} reflection API and \tool{Soot}~\cite{Lam11} to inspect the necessary class files and retrieve the following pieces of information.
\begin{compactitem}
    \item The list of roles, each consisting of its name and an execution mode (manual, postorder, or preorder). Each role represents a parse-tree traversal and is therefore analyzed separately by \nlgcheck{}.
    \item The list of \textit{modules} the language is comprised of. Each module holds information such as a set of grammar productions and any associated semantic actions.\ \tool{Neverlang} slices and bundles within the language are treated similarly, by retrieving their syntax and semantics from the respective modules.
    \item The CFG for each semantic action. This piece of information cannot be retrieved directly from the class file; instead, we use \tool{Soot} to extract a rough version of the CFG from the \textsf{apply} method of classes implementing the \textsf{SemanticAction} interface, then using a custom implementation to translate the \tool{Soot} CFG to one containing nodes compliant with the actions presented in \sect{sect:contribution}.
\end{compactitem}

The tool also extracts other information, such as nonterminal renames, attribute mappings, and role interleaving, each representing \tool{Neverlang}-specific syntactic sugar and quality-of-life features and therefore not directly relevant to the algorithm.

\newcommand{\xmark}{\ding{55}}

\section{Evaluation}\label{sec:evaluation}
We performed a quantitative evaluation to test the applicability of \nlgcheck{} beyond its theoretical soundness.
In this regard, we opted for an evaluation of the results using source code mutations on pre-existing \tool{Neverlang} languages.
Traditionally, the goal of mutation testing is to assess the quality of test suites; therefore, it fits our requirement of systematically introducing structural errors within a language to verify whether \nlgcheck{} is capable of detecting them. Moreover, mutation testing was part of a prior study with \tool{Neverlang}~\cite{Cazzola22d} and was therefore deemed an adequate choice, especially because the \tool{Neverlang} reflection API allows developers to easily mimic mutation operators without encompassing source code recompilation: in mutation testing, a mutation operator is a function that takes the original program and produces the mutated one. Concerning the errors that can happen when composing language fragments, a mutation operator that is extremely easy to implement and that can mimic most real-world programming errors is the random change of one or more characters within the identifier of an attribute. This is a simple yet relevant example because, due to separate compilation, \tool{Neverlang} cannot detect such an error statically---not knowing in advance if a typo in an attribute name was indeed an error or simply an intended design choice foreseeing more language fragments that may be implemented and compiled in the future. The software and data used to perform this experiment are available at \url{https://zenodo.org/records/18352007}.

\mypar{Experimental setup}
The experimental \nlgcheck{} execution flow can be summarized as follows.
\begin{compactenum}
    \item Compilation of unmutated \tool{Neverlang} sources into \tool{Java} code.
    \item Compilation of generated \tool{Java} into bytecode code.
    \item If either of the previous steps fails (\emph{i.e.}, it produces compilation errors), the entire process is aborted.
    \item Execution of \nlgcheck{} on the compiled classes. If this fails (i.e., reports one or more critical violations), execution is aborted, as \nlgcheck{} was already capable of detecting a real error.
    \item Configuration of the \tool{Neverlang} compiler to instrument it with the mutation operator.
    \item\label{step:redo} Perform a random mutation.
    \item Execution of \nlgcheck{} on the mutated classes.
    \item Repetition from step~\ref{step:redo} while keeping track of how many mutants are generated in total and how many were detected by \nlgcheck{}.
\end{compactenum}

\mypar{Data Setup}
At the time of writing, a number of \tool{Neverlang}-based languages exist that can be used to evaluate \nlgcheck{}.
Among these, we chose a subset comprised of simple, reusable sublanguages (\textit{expr} and \textit{numericUnitExpr}), domain-specific languages (\textit{stateMachines} and \textit{logLang}), and a general-purpose programming language (\textit{neverlangJS}).

Please consider that each of the chosen types of languages has a rationale:
\begin{compactitem}
    \item sublanguages are arguably the most fit to be analyzed with \nlgcheck{}, as they are intended to be reused across several languages and therefore are pre-compiled separately from the host language;
    \item domain-specific languages are the main application scenario of language workbenches in general, and they can also be embedded in other languages, while also using syntactic constructs usually different from those used by general-purpose languages, leading to potential errors;
    \item a full-fledged general-purpose language represents a large-scale project, often developed by several developers concurrently, each possibly adopting different choices with respect to the attribute grammar; as a result, keeping track of the entire project without relaxing the separate compilation constraint may prove challenging.
\end{compactitem}

\mypar{Hardware setup}
All tests were conducted on an Intel Core i7-4720HQ machine with a maximum clock speed of 3.6GHz and 16GB of RAM\@. The software consists of the GNU/Linux 6.1.12 (x86\_64) operating system and OpenJDK 17.0.6. Please note that a different setup may yield different results.

\subsection{Results}\label{EvaluationMutationResults}

\begin{table}
    \centering
        \begin{tabular}{lcccccc}
            \toprule
                                     &                    & \multicolumn{2}{c}{\textbf{w/o \texttt{nlgcheck}}} & \multicolumn{2}{c}{\textbf{with \texttt{nlgcheck}}} &                     \\
            \textbf{Language}        & \textbf{Generated} & \textbf{Detected}   & \textbf{Ratio}               & \textbf{Detected} &  \textbf{Ratio}                 & \textbf{Total time} \\\midrule
            \textit{expr}            & 14                 & 0                   & 0\%                          & 14                &  100\%                          & 12s                 \\
            \textit{numericUnitExpr} & 77                 & 14                  & 18\%                         & 55                &  71\%                           & 1m 6s               \\
            \textit{stateMachines}   & 19                 & 0                   & 0\%                          & 5                 &  26\%                           & 16s                 \\
            \textit{logLang}         & 25                 & 0                   & 0\%                          & 22                &  88\%                           & 24s                 \\
            \textit{neverlangJS}     & 596                & 0                   & 0\%                          & 441               &  74\%                           & 17m 54s             \\\midrule
            \textbf{Total}           & 731                & 14                  & 2\%                          & 537               &  73\%                           & 19m 52s             \\
            \bottomrule
        \end{tabular}

        \vspace{10pt}

        \caption{Results of running \textit{nlgmutator} on various preexisting \tool{Neverlang}-based projects, showing how many mutants are generated, how many are detected in the regular compilation process, and how many are detected by \nlgcheck{}. Percentages are rounded to the nearest integer. The last column contains the execution time of \nlgcheck{}.}\label{MutTestingResults}
\end{table}

\Cref{MutTestingResults} summarizes the results of the experiment.
For each test case, the number of generated mutants depends on the total number of attribute access expressions across all the language's semantic actions.
The total number and the percentage of mutants that cause the compilation process to fail are then compared to those of mutants detected by \nlgcheck{}.

Notice how, in the vast majority of cases, mutants survive the compilation process fully undetected because the \tool{Neverlang} compiler does not perform any verification of attribute name matching across multiple modules. There is, however, one notable exception, \emph{i.e.}, when an attribute appears both within the semantic action body and the type guard associated with that action.
In such cases, the \tool{Neverlang} compiler injects type casts within the generated \tool{Java} code, and therefore, if the attribute is mutated while the guard is not, the cast is no longer injected correctly, possibly leading to \tool{Java} compilation errors.\footnote{Recall that in the \tool{Java} code generated by \tool{Neverlang}, attributes are simply assumed to be \texttt{Object}s unless stated otherwise.}
As reported in \Cref{MutTestingResults}, such errors amount to only 18\% of errors in the \textit{numericUnitExpr} code base and only 2\% overall (14 out of 731 across all experiments).

It is therefore apparent that the \tool{Neverlang} compiler alone does not suffice to effectively capture errors in the definition of grammar attributes.
In contrast, the total number of mutants detected by \nlgcheck{} is 537 out of 731, thus about 73\%; a significant improvement when compared to the baseline. However, one may wonder about the remaining 27\% of mutants that are still undetected even when using \nlgcheck{}, especially given the theoretical soundness of the approach. There are a few considerations to be made in this regard.

\mypar{Read without write}
First, while the mutation operator has the goal of simulating programming errors that would result in a crash at runtime, not all mutations lead to errors. Consider an attribute that is written but never read: this may indicate either a design flaw or the implementation of a sublanguage that exposes attributes to be used by modules from other sublanguages.

\mypar{Unnecessary copy}
Another source of undetected mutants occurs when semantic actions associated with a production of the form \texttt{A $\leftarrow$ B} copy an attribute from their only AST child node: as previously mentioned, attribute access to a node having exactly one child triggers a quality-of-life feature so that the AST chains are navigated from parent to child when an attribute is not found within the parent. In this particular case, copying the attribute with an incorrect name may not result in a runtime error. After performing the experiment, manual inspection of \tool{Neverlang} code revealed such cases of unnecessary copy operations to be not uncommon, especially in the \textit{neverlangJS} code base.

\mypar{Dead code}
Mutants may be undetected because the mutated code can never be reached within the CFG\@.
We found such an example in the \textit{numericUnitExpr} code base, which is articulated in several sublanguages, each implementing one concern of the language, such as integer arithmetic, floating-point arithmetic, unit checking, and so on. Each of these libraries also includes portions of so-called "glue code" needed to run them in stand-alone mode---mainly for testing purposes, such as a language only implementing binary operations between integer numbers. Such pieces of code consist of a definition of the \texttt{Program} nonterminal axiom, associated with a semantic action.
Since mutation operators mutate all modules, they eventually also mutate semantic actions that are not actually used in the final language.
Pieces of dead code were discovered in \textit{neverlangJS} as well.

\mypar{Non-idiomatic attribute access}
Under the hood, \tool{Neverlang} can simply be seen as a \tool{Java} API for the development of programming languages.
Therefore, developers can either approach \tool{Neverlang} development in an "idiomatic" way, by writing semantic actions in \tool{Neverlang} source files, or in a "non-idiomatic" way, by invoking \tool{Java} methods and interacting with the AST through the \tool{Neverlang} runtime API, instead of using the \tool{Neverlang} syntax. Currently, there is no support in \nlgcheck{} to translate arbitrary \tool{Java} code into a CFG compatible with the requirements presented in \sect{sect:contribution}; therefore, any "non-idiomatic" attribute access remains undetected even if the corresponding attribute was mutated upon it being written in a different semantic action.

\mypar{Summary}
We summarized the main reasons for which mutants may be undetected by \nlgcheck{}.
For the "read without write," "unnecessary copy," and "dead code" mutants, reporting them as errors should be considered a false positive in the analysis performed by \nlgcheck{}.
Neither case strictly constitutes an error, as the compiler/interpreter would not crash, but we may consider extending \nlgcheck{} to output a warning in these occurrences.
Instead, "non-idiomatic attribute accesses" are actual errors that are beyond the analysis capabilities of \nlgcheck{}; this is the only true limitation of our approach, which should therefore be considered fully sound only when writing "idiomatic" \tool{Neverlang} code, at least until the tool is extended to capture all kinds of attribute accesses.

\subsection{Performance evaluation}\label{ChapEvaluationPerf}

\begin{table}
    \centering
    \begin{tabular}{lccccccc}
        \toprule
        & \multicolumn{3}{c}{\textbf{Unoptimized}} & \multicolumn{3}{c}{\textbf{Opt. 1 only}} & \multicolumn{1}{c}{\textbf{Opts. 1\&2}} \\
        \textbf{Language} & \textbf{Prods} & \textbf{Nodes} & \textbf{Time} & \textbf{Prods} & \textbf{Nodes} & \textbf{Time} & \textbf{Time} \\\midrule
        \textit{expr} & 10 & 61 & \xmark{} & 8 & 45 & 3s & 3s \\
        \textit{numericUnitExpr} & 14 & 159 & \xmark{} & 13 & 157 & \xmark{} & 7s \\
        \textit{stateMachines} & 14 & 65 & 3s & 12 & 59 & 3s & 3s \\
        \textit{logLang} & 17 & 112 & 4s & 17 & 112 & 4s & 4s \\
        \textit{neverlangJS} & 230 & 1728 & \xmark{} & 197 & 1584 & \xmark{} & 20s \\
        \textit{neverlangExpressions} & 80 & 3537 & \xmark{} & 72 & 3489 & \xmark{} & 3m 25s \\\bottomrule
    \end{tabular}

    \vspace{10pt}

    \caption{Results of running \nlgcheck{} on various pre-existing \tool{Neverlang}-based projects, with and without optimizations, respectively. An \xmark~indicates that the process either crashed with an out-of-memory error or was manually aborted after 10 minutes. Optimization 1 (\Cref{ContribOpt1}) slightly reduces the number of productions and CFG nodes; Optimization 2 (\Cref{ContribOpt2}) further improves execution time without further modifications to the graph.}\label{PerfResults}
\end{table}

Along with an evaluation of the ability of \nlgcheck{} to detect errors, we are also interested in its performance: static program analysis is notably intensive, and if the tool does not perform fast enough, it would be infeasible to use as part of the day-to-day development process. In fact, during the development of \textit{nlgcheck}, it quickly became apparent that the base algorithm was only usable with very small languages.
This led to the optimization presented in \sect{ContribOptimization}.
In particular, in this section, we use "Optimization 1" to refer to graph deduplication (\Cref{ContribOpt1}) and "Optimization 2" to refer to path deduplication (\Cref{ContribOpt2}).

To evaluate the final optimized implementation and the efficacy of the optimizations themselves, \textit{nlgcheck} was tested with and without optimization on each language.

\mypar{Results}
\Cref{PerfResults} summarizes the results of the performance evaluation. For each language, the second column reports the performance with both optimizations disabled.
The third column yields the results with only Optimization 1 enabled, showing how the number of productions and of CFG nodes is reduced, as well as the execution time, when applicable. Finally, the last column tool is run with both optimizations enabled; since Optimization 2 only affects the visit algorithm and not the construction of the CFG itself, its structure and size remain the same as in the previous step.

We set a time limit of 10 minutes for each task, a time frame that we deemed reasonable to perform a full analysis of the language
(for instance, before a daily commit).
In most instances, the unoptimized or partially optimized attempts either kept running indefinitely (thus surpassing the time limit) or failed with an out-of-memory error. Experiments that failed for either reason were marked with an \xmark~symbol in \Cref{PerfResults}.

In summary, \Cref{PerfResults} shows that while Optimization 1 slightly reduces the size of the CFG, in turn enabling the analysis on the previously failing \textit{expr} case, it is not enough to make the tool suitable for use on real-world languages.
The problem is eventually solved by adding Optimization 2, which makes the analysis run in a reasonable time for all the considered cases.

\subsection{Threats to validity}\label{sect:evaluation:threats}

\mypar{Construct validity~\cite{Wohlin03}}
It is debatable whether the experiments performed within this evaluation accurately measure \nlgcheck{}'s capability of detecting real errors (first experiment) and its usability (second experiment), respectively.
For instance, we used only one kind of mutation operator---\emph{i.e.}, changing an attribute name---whereas actual developer errors may vary. To counterbalance this limitation, we performed an extensive mutation testing experiment, thus mutating each attribute in the grammar at least once. Moreover, please consider that among the many errors that can occur while developing a language, \nlgcheck{} is concerned with only a very limited subset. For instance, type errors are mostly irrelevant because they can be captured by the \tool{Java} compiler without needing additional tooling. Similarly, errors such as forgetting an attribute or swapping two attributes can eventually be remapped to a combination of attribute renames, as with separate compilation, there is no way to determine if two similar identifiers were meant to represent two different attributes or a typing error made by the developer while writing the same attribute twice.
For the second experiment, the problem of construct validity is less objective, as we arbitrarily determined a threshold of 10 minutes that some may consider either too large or too strict depending on their personal preference and development schedule. Eventually, we deemed this choice acceptable for two reasons:
\begin{compactitem}
    \item \nlgcheck{} is not intended to be used on each compilation, but rather once or a few times a day, to ensure the correctness of the attribute grammar before performing a commit;
    \item the time limit was only relevant when the optimizations were not in place, whereas all experiments made with both optimizations were able to keep distant from the target by a large margin. Notice how this is true even for a rather complex language such as \textit{neverlangJS}, and the only outlier is \textit{neverlangExpressions}, which was originally developed in a deliberately complex manner---\emph{e.g.}, by using several abstractions and attribute mappings---to stress the modularization capabilities offered by \tool{Neverlang}. We believe that for most real development scenarios, such as domain-specific languages, the 10-minute target should never be reached, and possibly \nlgcheck{} could even be included as a standard compilation phase---at least for languages in the 3 to 4-second range.
\end{compactitem}

\mypar{Internal validity~\cite{Wohlin03}}
Some of the experiments may be affected by our specific expertise with \tool{Neverlang} and with languages developed by its means. To avoid this threat as much as possible, we included several different languages in the experimentation, making sure to include domain-specific languages, general-purpose languages, and sublanguages, each created by different developers and each preceding the existence of \nlgcheck{}. This way, we should have avoided designing languages in such a way that specifically fits \nlgcheck{} and its requirements. We also included results both with and without optimization to avoid any concerns about bias in the results.

\mypar{External validity~\cite{Wohlin03}}
Even though we tried to design the static analysis approach and its proof in a general manner that should be suitable for any language workbench, it is undeniable that the idea for this work originated from a \tool{Neverlang}-specific need, and the solution was then designed according to these needs. For instance, while we believe we adequately justified the advantages of separate compilation in \sect{sec:intro}, it must be said that separate compilation is far from common in language workbenches: \tool{Neverlang} is, to the best of our knowledge, the only one that fully supports the compilation of each individual artifact separately. This should not be intended as a limitation of other language workbenches, but rather a design choice: most language workbenches reasonably opt for relaxing the requirement of separate compilation in favor of the guarantees provided by joint compilation. We can only hope that this contribution encourages other language workbenches trying to support separate compilation to achieve the best of both worlds.

\section{Related Work}\label{sec:related-work}

The challenge of ensuring static correctness in AGs has been a cornerstone of language engineering since Knuth’s seminal work~\cite{Knuth68}. However, as language workbenches have evolved toward modularity and separate compilation, a significant gap has emerged between structural completeness and operational safety. This section situates \nlgcheck{} within the landscape of modular AG verification, advanced AG formalisms, and static analysis techniques.

\mypar{Structural Verification vs. Operational Path-Sensitivity}
The primary focus of traditional AG verification is Well-Definedness—ensuring that for any valid syntax tree, every attribute instance has a unique definition and no circular dependencies exist~\cite{Deransart88}.

Modern systems like Silver~\cite{VanWyk10, Kaminski12} and JastAdd~\cite{Ekman07} have pioneered \textit{modular well-definedness analysis}. This allows independent modules to be verified in isolation by relying on explicit interfaces. While these tools are highly effective at ensuring structural integrity (i.e., that a definition exists in the codebase), they treat the internal logic of semantic actions as "black boxes."

In a framework like Neverlang, where semantic actions are implemented via imperative Java code and attribute access is mediated by dynamic maps, structural existence does not guarantee runtime safety. As noted in the evaluation of modular systems~\cite{Kaminski12}, there is a documented absence of tools that track fine-grained, path-dependent definedness.\ \nlgcheck{} addresses this by shifting the focus from ``is the attribute defined in the module?'' to "is the attribute initialized before use along every possible execution path?" This path-sensitive data-flow analysis is what prevents the MissingAttr runtime error, which structural checks fundamentally overlook.

\mypar{The Modularity Paradox: Flexibility vs. Static Safety}
A recurring theme in the literature is the trade-off between the expressiveness of AGs and the ease of their verification.

Extensions such as reference attribute grammars~\cite{Hedin00} and higher-order attribute grammars~\cite{Vogt89} significantly increase the power of AGs by allowing the tree structure to be computed dynamically. However, systems implementing these (e.g., JastAdd) typically enforce strict static typing and require global knowledge of the tree structure to ensure safety~\cite{Ekman07}.

Neverlang adopts a "feature-oriented" architecture designed for true separate compilation and binary distribution. This requires a loose coupling where attributes are not hard-coded offsets but dynamic entries in a map. While this maximizes modularity—allowing users to swap components without recompilation—it circumvents the safety nets of traditional compilers.

\citet{Mernik05} suggest that the power of domain-specific tools often necessitates a reliance on strict static discipline.\ \nlgcheck{} serves as the essential compensatory mechanism for Neverlang’s dynamic nature. By applying static analysis to the generated bytecode, it restores the safety guarantees of a strongly typed system to a highly decoupled, dynamic architecture.

\mypar{Evolution of Modularity: From CAGs to Forwarding}
The quest for reusability has led to several modular composition techniques:

Early work on CAGs~\cite{Farrow92} enabled the combination of smaller, independent AGs to solve complex subproblems (e.g., name analysis). Similarly, Generic Attribute Grammars~\cite{Kastens94} focused on decomposing AGs into reusable components.

This technique provides a mechanism for a node to "forward" attribute queries to a replacement subtree, aiding in modular extensibility~\cite{VanWyk02}.

While these approaches facilitate design-time modularity, they often struggle with deployment-time modularity (binary distribution). They typically require the entire grammar to be re-analyzed or re-compiled when modules are composed.\ \nlgcheck{} is unique in that it validates the interaction of separately compiled binaries by analyzing the underlying data-flow of their semantic actions, ensuring that modular composition does not lead to runtime failures.

\mypar{Data-Flow Analysis and Fixed-Point Iteration}
The methodology of \nlgcheck{} draws from classic compiler optimization and verification techniques~\cite{Aho06}.

Tools like the Extended Static Checker for Java (ESC/Java)~\cite{Flanagan02} demonstrated that data-flow analysis could effectively catch null pointer exceptions and array-out-of-bounds errors in general-purpose code.\ \nlgcheck{} applies these principles specifically to the domain of AGs.

Historically, Farrow used fixed-point evaluators to ensure value convergence in circular AGs~\cite{Farrow86}.\ \nlgcheck{} repurposes fixed-point principles for state-space bounding. By utilizing the Soot framework~\cite{Vallee-Rai99} to analyze Java bytecode, \nlgcheck{} performs an iterative analysis over the Control Flow Graph (CFG). To ensure termination in the presence of recursive semantic actions or complex loops, it implements a recursion depth bound, making the analysis of infinite execution paths computationally tractable.

\section{Conclusions and Future Work}\label{sec:conclusion}
In this work, we presented a novel approach to overcome the limitations of separate compilation in language workbenches. Separate compilation offers several advantages regarding modularity and the distribution of pre-compiled artifacts through public repositories. However, this requirement is often relaxed to comply with other correctness guarantees and non-functional properties~\cite{Leduc20}. With this contribution, we aim to bridge the gap so that future language workbenches can adopt separate compilation without sacrificing safety guarantees. The static analysis approach presented in this paper is sound and effective in practice; moreover, additional optimizations make it usable on a daily basis during the development process.

Several improvements could expand this research. First, the approach should be extended beyond a single AST visit. The problem becomes significantly more complex when dealing with multiple traversals, as the state from the first traversal affects the second. Currently, the tool cannot handle this and resets the attribute set for each traversal. Other minor improvements include better error reporting, such as warnings for issues currently undetected by \nlgcheck{}. In this regard, so-called ``non-idiomatic'' attribute accesses should either trigger a warning or, even better, be included as part of the analysis. Finally, this approach should be tested with other language workbenches to verify the degree to which the solutions proposed in this work are limited to \tool{Neverlang} and its ecosystem.

\begin{acks}
   A special thanks to Gianluca Nitti for his support in the development of \nlgcheck{}.
\end{acks}

\bibliographystyle{ACM-Reference-Format}
\bibliography{local,strings,metrics,compilers,programming,foundations,software_engineering,tools,software_architecture,dsl,pl,splc,oolanguages,my_work,grammars,security,roles,learning,cop,testing,dsu,distributed_systems,reflection,aosd,miscellaneous,pattern}

\appendix
\section{Simplified verification model: postorder semantic actions}\label{ssect:postorder}

\begin{algorithm}
\caption{Verification of attribute accesses in postorder semantic actions
execution}\label{AlgVerifyPostorder}
\begin{algorithmic}
  \State{} $E := \emptyset$
\ForAll{$p \in P$}
  \For{$i := 1$ \textbf{to} $|p|$}
    \ForAll{$p' \in P$}
      \If{$p'[0] = p[i]$}
   \ForAll{$a \in p[i]$}
    \If{$\tau(p'[0].a)$ is not defined}
    \State{} $E := E \cup$ \{``$p'$ does not provide $a$ required by $p[i]$''\}
    \ElsIf{\textbf{not} $\tau(p'[0].a) \le: \tau(p[i].a)$}
            \State{} $E := E \cup$ \{``type of provided $p'[0].a$ is incompatible
      with $p[i].a$''\}
          \EndIf%
        \EndFor%
      \EndIf%
    \EndFor%
  \EndFor%
\EndFor%
\If{$E = \emptyset$}
  \State{} verification succeeded
\Else%
  \State{} incorrect attribute access(es) have been detected and collected in $E$
\EndIf%
\end{algorithmic}
\end{algorithm}

Before facing the complete verification model presented in this paper, we attempted solving the problem with a simplified model. We hereby show the limitations that ultimately lead us opting for a more complex solution.
Let us consider the problem of verifying the correctness of attributes within a single semantic role in which the tree visit is performed in postorder: for each node, each of its child nodes is visited first by executing the semantic action associated with the production which generated the child node, then the same operation is repeated on the parent node itself. With this visit strategy, data can only be passed bottom-up in the tree (from child to parent node).
Given this premise a semantic action attached to a production $p: X_0 \leftarrow X_1 \cdots X_n $ will read attributes from $X_1 \cdots X_n$, and then write attributes to $X_0$.
Although more complex behaviors are possible, we will consider semantic actions in which there is no reason to:
\begin{compactitem}
    \item read attributes on $X_0$, as no semantic action which could have written them has been executed yet;
    \item write attributes to $X_1 \cdots X_n$ as no further semantic action will read them.
\end{compactitem}

Given these restrictions, we can define that $\tau(p[0].a) = t$ if the semantic action attached to $p$ \textit{provides} an attribute $a$ of type $t$, and $\tau(p[i].a) = t$ with $1 \le i \le |p|$ if the semantic action attached to $p$ \textit{requires} an attribute $a$ of a type $t'$ assignable from $t$ ($t' \le: t$) on $X_i$.

Now, we can define the concept of \textit{safe attribute access} in the scope of postorder visits.
Let us consider a production $p: X_0 \leftarrow X_1 \cdots X_n$ and a semantic action attached to it which accesses the attribute $p[i].a$ (with $1 \le i \le n$), and with type requirement $t = \tau(p[i].a)$.
We say this access is type safe iff.
\[\forall p' \in P:\;X_i \leftarrow Y_1 \cdots Y_n,\quad\tau(p'[0].a) \le: t\]
or, in other words, if for each production $p'$ with $X_i$ as its left-hand side, an attribute with the same name is provided by the semantic action attached to $p'$ on its left-hand side nonterminal with the same type or a subtype of $t$.

This lead to a first, albeit limited, algorithm to verify the safety of attribute accesses, that works on semantic actions within a visit executed in postorder: the full pseudocode is shown in \Cref{AlgVerifyPostorder}.

\subsection{Implementation aspects and limitations}\label{FirstApproachLimitations}
An implementation of \Cref{AlgVerifyPostorder} needs a way to gather information about attributes required and provided by the semantic action attached to each production for the role being analyzed.
For \tool{Neverlang}, there are two stages of the build process in which it is practical to access this information: during the DSL compilation, while \texttt{Neverlang} is performing a translation from the semantic action implementation DSL to the backend code, or after the entire compilation process by inspecting the bytecode through a bytecode analysis tool of choice. Both have been implemented in \tool{Neverlang}. The former adds helper methods to the semantic action classes, returning information about required and provided attributes, in the form of index-name-type triplets; \nlgcheck{} queries these methods to implement \Cref{AlgVerifyPostorder}.

It is important to note that this text-based approach has limitations due to it lacking any type information which is later generated by the \tool{Java} compiler, and thus it requires additional syntax to properly generate the correct methods---\emph{e.g.} \texttt{\$0.value: Integer = \ldots} denotes the semantic action generating an attribute of \texttt{Integer} type.
However, usage of this syntax extensions when writing semantic actions is optional, and when the developer does not provide any type hints, the attribute type is assumed to be \texttt{Object}. As we will discuss later in this section, a more ``hands on'' approach, inspecting the generated semantic actions bytecode directly, is preferable in general.

This approach represents a solution that was used in the early stages of this work and was successfully tested against simple languages such as the \tool{Java} expression language\cite{Cazzola21b}, modified by adding the necessary type annotations. This allows for the language to be verified successfully, thus introducing errors in the form of incorrect attribute names and type casts or annotations.

However, such an approach has several limitations that ultimately caused us to deem it inadequate to be used in more concrete scenarios.
First, the approach relies on the presence of type hints that the language developers need to manually write in the semantic actions: this requires modification and recompilation of existing modules to make them properly verifiable, and makes the code less readable due to redundant information (the \tool{Java} compiler is aware of the type of each expression without needing type hints).
This specific limitation does not regard the algorithm itself but rather its naive original implementation.
A much more pressing matter regards the way information about required and provided attributes is extracted from semantic actions: the mere textual presence of an expression accessing an attribute is an imprecise indicator of these events actually happening at runtime. If the statement is, for example, inside a conditional block, its actual execution depends on runtime value of the condition, an aspect that cannot be captured by a simplified textual analysis.
Extracting the information regarding required and provided attributes from the bytecode enables the possibility of considering all possible execution paths. More specifically, if different paths provide the same attribute with different types, the resulting type of the attribute can be inferred as the closest common supertype. Similarly, the presence of one branch in which the attribute is not may cause indicate that the attribute is not provided in specific execution scenarios, even though the textual representation indicates otherwise. Finally, if different paths require the same attribute, but with different types, the closest common subtype must be considered.
It is arguable that some of these scenarios are unlikely or may suggest improper language implementation, however please recall that this contribution aims at verifying that the composition among \textbf{precompiled} language features is feasible, even when they where not originally intended to be used jointly; therefore we believe unlikely scenarios should be taken into consideration.

Last but not least, limiting the solution to postorder visits is not a fair assumption: in any non-trivial language, semantic actions need full control with regards of how the visit must be performed. As a notable example, please consider a \texttt{while} statement in an interpreter for a GPL\@: a strictly postorder (or preorder) visit is not viable, because:
\begin{compactitem}
    \item the node representing the condition must checked (and thus visited) on each iteration;
    \item the node representing the body may or may not be visited depending on the runtime value of the condition.
\end{compactitem}

\section{Algorithms}
This appendix contains any algorithms that were omitted from the main body of this manuscript for brevity.

\subsection{Full CFG construction algorithm}\label{sec:app:algofull}
\Cref{AlgBuildRoleCFG} shows the entire CFG construction algorithm for the complete model not limited to postorder visit.

For the sake of conciseness, this algorithm omits an implementation aspect involving semantic action guards: if multiple semantic actions with different guards are attached to the same production, their respective CFGs are merged in a single CFG representing all the possible branches which can be taken before running \Cref{AlgBuildRoleCFG}.
In case of type guards, edges connecting the entry node of the production CFGs to the entry nodes of the individual semantic actions are tagged with the required attribute types: this allows the visit algorithm to exclude paths which would never be followed at runtime because of the guards.
Please refer to~\cite{Cazzola16e} for a full examination of guards and type guards in \tool{Neverlang}.

\begin{algorithm}
\caption{Construction of a role's CFG\@. $G_i=(V_i, E_i)$ denotes the CFG of the
semantic action attached to the production $p_i$.}\label{AlgBuildRoleCFG}
\begin{algorithmic}
\State{} create two nodes $v_{begin}, v_{end}$
\State$a_{node}(v_{begin}) := a_{node}(v_{end}) := Nop()$
\State$G_r=(V_r,E_r) := (\{v_{begin}, v_{end}\}, \emptyset)$
\State$v_{entry}(G_r) := v_{begin}$
\State$v_{exit}(G_r) := v_{end}$
\ForAll{$p_i \in P$}
  \State$V_r := V_r \cup V_i$
  \State$E_r := E_r \cup E_i$
  \If{$p_i[0] = Program$}
    \State$E_r := E_r \cup \{(v_{begin}, v_{entry}(G_i)), (v_{exit}(G_i), v_{end})\}$
  \EndIf%
\EndFor%
\State$E_{entry} := \emptyset$
\State$E_{exit} := \emptyset$
\State$L := \emptyset$
\ForAll{$p_i \in P$}
  \ForAll{$v_{beginEval} \in V_i: \exists k: a_{node}(v_{beginEval}) = BeginEvalMetaAction(k)$}
  \State{} find $v_{endEval} \in V_i$ such that $(v_{beginEval}, v_{endEval}) \in E_i$
    \State$E_r := E_r \setminus \{(v_{beginEval}, v_{endEval})\}$
    \ForAll{$p_j \in P: p_j[0] = p_i[k]$}
    \State{} generate an unique tag $l \notin L$
      \State$L := L \cup \{l\}$
      \State$e_{entry} := (v_{beginEval}, v_{entry}(G_j))$
      \State$e_{exit} := (v_{exit}(G_j), v_{endEval})$
      \State$E_{entry} := E_{entry} \cup \{e_{entry}\}$
      \State$E_{exit} := E_{exit} \cup \{e_{exit}\}$
      \State$l_{edge}(e_{entry}) := l_{edge}(e_{exit}) := l$
    \EndFor%
    \State$a_{node}(v_{beginEval}) := EnterCtx(k)$
    \State$a_{node}(v_{endEval}) := LeaveCtx()$
  \EndFor%
\EndFor%
\State$E_r := E_r \cup E_{entry} \cup E_{exit}$
\end{algorithmic}
\end{algorithm}

\subsection{Optimized CFG visit algorithm}\label{sec:app:algoopt}
\Cref{AlgVisitRoleCFG} shows the CFG visit algorithm, including the optimizations introduced in \sect{ContribOptimization}.

\begin{algorithm}
\caption{Optimized visit of a role's CFG, with cache and jump logic (replaces \Cref{AlgVisitRoleCFG})}\label{AlgVisitRoleCFGOpt}
\begin{algorithmic}
\State$\Phi := \{(v_{entry}(G_r), v_{entry}(G_r), \epsilon, \langle\emptyset\rangle, \epsilon)\}$
\Comment{Set of states to visit}
\State$\Gamma := \emptyset$\Comment{Set of visited states}
\State$U := \emptyset$\Comment{Found violations}
\State$f_c := \emptyset$\Comment{Cache for jumps}
\While{$\Phi \neq \emptyset$}
\State{} pick any $(v, w, S, C, S_c) \in \Phi$ and remove it from $\Phi$
  \If{$(v, C, S) \notin\Gamma$} \Comment{Path condition~\eqref{pathcond2}}
    \State$\Gamma := \Gamma \cup (v, C, S)$
    \State$(C', U') := f_{action}(C, a_{node}(v))$
    \State$U := U \cup U'$
    \ForAll{$(v, v') \in E_r$}
      \State$S' := f_{edge}(S, (v, v'))$
      \If{$\exists S'$ \textbf{and} $\nexists l_1,l_2,l_3 \in S: l_1=l_2=l_3$}
        \Comment{Path validity and~\eqref{pathcond1}}
        \If{$(v, v') \in E_{entry}$}\Comment{Then $v'$ is a nonterminal entry}
          \If{$\exists f_c(v', head(C))$}
            \State$(v_{out}, C_{outs}, C_{taken}) := f_c(v', head(C))$
            \ForAll{$C_{out} \in C_{outs}$}
              \State$\Phi := \Phi \cup \{(v_{out}, wv', S', merge_h(C, C_{out}), push(S_c, head(C)))\}$
            \EndFor%
            \State$f_c(v', head(C)) := (v_{out}, C_{outs}, C_{taken} \cup C)$
          \Else%
            \State$\Phi := \Phi \cup \{(v', wv', S', C', push(S_c, head(C)))\}$
          \EndIf%
        \ElsIf{$(v, v') \in E_{exit}$}\Comment{Then $v$ is a nonterminal exit}
        \State{} find $v_{in}$ such that $\exists v_x:\:l_{edge}((v_x, v_{in}))=l_{edge}(v, v')$
          \If{$\exists f_c(v_{in}, top(S_c))$}
            \State$(v_{out}, C_{outs}, C_{taken}) := f_c(v', head(C))$
            \If{$head(C) \notin C_{outs}$}
              \ForAll{$C_t \in C_{taken}$}
                \State$\Phi := \Phi \cup \{(v', wv', S', merge_h(C_t, head(C)))\}$
              \EndFor%
            \EndIf%
            \State$f_c(v', head(C)) := (v_{out}, C_{outs} \cup \{head(C)\}, C_{taken})$
          \Else%
            \State$f_c(v_{in}, top(S_c)) := (v, \{head(C)\}, \emptyset)$
          \EndIf%
          \State$\Phi := \Phi \cup \{(v', wv', S', C', pop(S_c))\}$
        \Else%
          \State$\Phi := \Phi \cup \{(v', wv', S', C', S_c)\}$
        \EndIf%
      \EndIf%
    \EndFor%
  \EndIf%
\EndWhile%
\end{algorithmic}
\end{algorithm}

\section{Proofs}\label{sec:app:proofs}
This appendix contains any proofs that were omitted from the main body of this manuscript for brevity.

\setcounter{property}{0}

\begin{property}\label{PropertyPieceWise}
$f_{pCtx}$ can be evaluated ``piecewise'' on a path by composing evaluations of $f_{pCtx}$ on parts of the path:
    \[
        \forall w_1, w_2 \in V_r^*, C \in \mathcal{C}, \quad
        f_{pCtx}(w_1w_2, C)=f_{pCtx}(w_2, f_{pCtx}(w_1, C))\]
\end{property}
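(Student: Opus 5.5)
The plan is a structural induction on the first path $w_1$, following the recursive definition of $f_{pCtx}$. That definition consumes a path from the left: $f_{pCtx}(vw', C) = f_{pCtx}(w', f_{aCtx}(C, a_{node}(v)))$. Induction on the length of $w_1$, peeling its first vertex, therefore lines up with the unfolding of both sides. The statement to prove is universally quantified over $w_2$ and $C$. I will keep both quantifiers \emph{inside} the induction hypothesis, because the context stack changes at every step.

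\textbf{Base case:} $w_1 = \epsilon$. Then $w_1w_2 = w_2$, and by the first clause of the definition, $f_{pCtx}(\epsilon, C) = C$. So the right-hand side is $f_{pCtx}(w_2, C)$, which is exactly the left-hand side.

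\textbf{Inductive step.} Write $w_1 = v w_1'$ and assume, for all $w_2 \in V_r^*$ and all $C' \in \mathcal{C}$,
\[
f_{pCtx}(w_1'w_2, C') = f_{pCtx}(w_2, f_{pCtx}(w_1', C')).
\]
Fix $w_2$ and $C$, and let $C' := f_{aCtx}(C, a_{node}(v))$.
\begin{itemize}
\item Since $w_1w_2 = v(w_1'w_2)$ is nonempty, the second clause of the definition gives $f_{pCtx}(w_1w_2, C) = f_{pCtx}(w_1'w_2, C')$.
\item The induction hypothesis instantiated at $C'$ turns this into $f_{pCtx}(w_2, f_{pCtx}(w_1', C'))$.
\item Unfolding the definition once more, $f_{pCtx}(w_1', C') = f_{pCtx}(vw_1', C) = f_{pCtx}(w_1, C)$.
\end{itemize}
Substituting the last equality into the previous one yields the claim.

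The only point that needs care is generalizing the hypothesis over $C$ (and $w_2$). Without that, the step from $C$ to $C'$ is not available, so I will state the induction predicate as $P(w_1) :\equiv \forall w_2, C.\ \ldots$ explicitly.

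There is one potential wrinkle. Some actions, such as $LeaveCtx$ applied to a singleton stack, may be partial, so $f_{aCtx}$ need not always be defined. I will handle this by reading the equation as an equality of partial values: both sides are defined iff the same sequence of actions is applicable. The rewriting above preserves definedness at every step, so the argument is unaffected. No earlier lemma is needed; the proof rests only on the definition of $f_{pCtx}$ and the associativity of path concatenation.
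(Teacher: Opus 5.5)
Your proof is correct and matches the paper's argument. Both proceed by induction on $w_1$, peeling off the first vertex. The hypothesis is generalized over $w_2$ and the context stack, so it can be instantiated at $C' = f_{aCtx}(C, a_{node}(v))$, and the definition of $f_{pCtx}$ is then folded back for $vw_1'$. Your remark about reading the equation as an equality of partial values when $f_{aCtx}$ is undefined (e.g., $LeaveCtx$ on a singleton stack) is a small refinement the paper leaves implicit.
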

\begin{propertyproof}
This can be proven by induction on $w_1$.
The base case $w_1=\epsilon$ is trivial, as by definition $f_{pCtx}(\epsilon, C) = C$ and the goal is reduced to an identity.
For the inductive step, given $w_1$, we assume
\begin{equation} \label{PropfpctxCompositionIndHyp}
    \forall w_2 \in V_r^*, C' \in \mathcal{C}, \quad f_{pCtx}(w_1w_2, C')=f_{pCtx}(w_2, f_{pCtx}(w_1, C'))
\end{equation}
and we must prove that
    \[
        \forall v \in V_r, w_2 \in V_r^*, C \in \mathcal{C}, \quad f_{pCtx}(vw_1w_2, C)=f_{pCtx}(w_2, f_{pCtx}(vw_1, C))
    \]
By definition, \[f_{pCtx}(vw_1w_2, C)=f_{pCtx}(w_1w_2, f_{aCtx}(C, a_{node}(v)))=\]
applying the inductive hypothesis~\eqref{PropfpctxCompositionIndHyp} with $C' = f_{aCtx}(C, a_{node}(v))$,
\[=f_{pCtx}(w_2, f_{pCtx}(w_1, f_{aCtx}(C, a_{node}(v))))=\]
and finally, applying once more the definition of $f_{pCtx}$, this time in
``reverse'' for the $vw_1$ path,
\[=f_{pCtx}(w_2, f_{pCtx}(vw_1, C))\]
thus proving the property.
\end{propertyproof}

\setcounter{property}{5}

\begin{property}
    Given an integer $k>0$ and two production context stacks sharing the same top $k$ items, the evaluation context stacks obtained by evaluating a balanced path prefix which does not contain $CopyAttr$ nor $PropagateAttrs$ actions share the top $k$ items:
    \begin{gather*}
        \forall w \in W_{pb}, \forall C_1, C_2 \in \mathcal{C}, k \in \mathbb{N}\\
        top_k(C_1)=top_k(C_2) \implies top_k(f_{pCtx}(w, C_1))=top_k(f_{pCtx}(w, C_2))
    \end{gather*}
\end{property}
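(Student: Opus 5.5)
We argue by structural induction on the derivation of $w \in W_{pb}$, following the four generating cases listed in the main-text sketch, and we prove the statement simultaneously for all $k \ge 1$ and all pairs $C_1, C_2$. Quantifying over every $k$ in the induction hypothesis is what lets us handle the bracketed case.

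The base case $w=\epsilon$ is immediate, since $f_{pCtx}(\epsilon, C)=C$. Before the inductive cases we isolate a one-step auxiliary fact. For every single action $a \notin \{LeaveCtx()\} \cup \{CopyAttr, PropagateAttrs\}$ and every $k \ge 1$, the hypothesis $top_k(C_1)=top_k(C_2)$ implies
\[top_k(f_{aCtx}(C_1,a))=top_k(f_{aCtx}(C_2,a)).\]
We check this action by action. For $Nop$, $ReadAttr$ and $CheckAttrType$ the stack is unchanged. For $EnterCtx(i)$ the operation pushes the same item $(i,\emptyset)$ onto both stacks. The new top $k$ items are therefore that item plus the old top $k-1$ items, and these agree because $top_{k-1}$ is contained in $top_k$.

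For $WriteAttr(i,\eta,t)$ we use the decomposition $put = \ldots c_{put1}(c_{n-1},\ldots)\,c_{put2}(c_n,\ldots)$. Only the two topmost items change, each by a function of items that already lie in $top_k$ when $k\ge 2$. When $k=1$, the item $c_{put2}$ depends only on $c_n$. Note that $c_{put1}$ also reads $k_n$, which is part of $c_n$. Two degenerate situations have to be treated separately: the singleton stack $C=c_0$, and stacks whose length is less than $k$. In the latter, $top_k$ equal forces the entire stacks to be equal, so the claim is trivial. This auxiliary fact, combined with Lemma~\ref{PropfpctxComposition}, settles cases (2) and (3).

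In case (2), $w=vw'$. We first apply the one-step fact to $v$, with the same $k$, and then the induction hypothesis for $w'$ on the resulting stacks. In case (3), $w=w'v$. We first apply the induction hypothesis for $w'$ and then the one-step fact to $v$.

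The main obstacle is case (4), $w=v_1w'v_2$ with $a_{node}(v_1)=EnterCtx(i)$ and $a_{node}(v_2)=LeaveCtx()$. Here $LeaveCtx$ pops an item, so after the pop the top $k$ items include an item that was at depth $k+1$ during the evaluation of $w'$. Our plan is to apply the induction hypothesis for $w'$ with $k+1$ in place of $k$; this is why the induction is over all $k$. After $EnterCtx(i)$, the stacks $f_{aCtx}(C_j,EnterCtx(i))$ share their top $k+1$ items: the common pushed item together with the old common $top_k$. By the induction hypothesis at $k+1$, the stacks after $w'$ still share their top $k+1$ items. Popping then leaves them sharing their top $k$ items, which is the goal.

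One subtlety remains: we must confirm that $w'$ never pops below the pushed frame. This holds because $w' \in W_{pb}$ contains only matched $LeaveCtx$ occurrences, and it is built into the fact that the induction hypothesis at $k+1$ is applied to the whole of $w'$. Finally, we note that the exclusion of $CopyAttr$ and $PropagateAttrs$ is used only in the one-step fact. In fact, those two actions would also preserve $top_k$ for $k\ge2$, but we do not need that for this proof.
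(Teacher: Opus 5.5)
Your proof is correct and follows the paper's own argument. Both use the same four-case induction over the generation of $W_{pb}$, with the hypothesis quantified over all $k$, so that the bracketed case $v_1w'v_2$ is handled by pushing, invoking the hypothesis at $k+1$, and popping. The differences are cosmetic: you isolate per-action preservation as a one-step fact, treating $EnterCtx$ at the same $k$ where the paper passes through $k+1$ in case (2), and you are more explicit than the paper about the $k=1$ subcase of $WriteAttr$ and about degenerate short stacks.
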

\begin{propertyproof}
Due to the hypothesis $w \in W_{pb}$, the action sequence associated with $w$ is either one of the following cases.

\begin{compactitem}
    \item If $w=\epsilon$, the implication is trivial by definition of $f_{pCtx}$, as $\forall C, \: f_{pCtx}(\epsilon, C) = C$.
    \item If $w=vw'$, with $w' \in W_{pb}$ and $a_{node}(v) = a$ with $a \in A \setminus \{LeaveCtx()\}$ then the implication is proved by induction, assuming
    \begin{equation}\label{PropBalancedPath0IndHyp}
        \forall C_1, C_2, k, \quad top_k(C_1)=top_k(C_2) \implies top_k(f_{pCtx}(w', C_1))=top_k(f_{pCtx}(w', C_2))
    \end{equation}
    for a fixed $w'$ and proving that
    \[
        \forall C_1, C_2, k, \quad top_k(C_1)=top_k(C_2) \implies top_k(f_{pCtx}(vw',C_1))=top_k(f_{pCtx}(vw',C_2))
    \]
    Assuming $top(C_1)=top(C_2)$ and by definition of $f_{pCtx}$:
    \begin{equation}\label{PropBalancedPath0IndGoal}
        top_k(f_{pCtx}(w', f_{aCtx}(C_1, a))) = top_k(f_{pCtx}(w', f_{aCtx}(C_2, a)))
    \end{equation}
    For $a \in \left\{ ReadAttr, CheckAttrType, Nop \right\}$, it is true that $f_{aCtx}(C', a) = C'\:\forall C'$, thus by substituting  $f_{aCtx}(C_1, a) = C_1$ and $f_{aCtx}(C_2, a) = C_2$ we obtain the inductive hypothesis~\eqref{PropBalancedPath0IndHyp}.

    Otherwise either $a=WriteAttr(i, \eta, t)$ or $a=EnterCtx(i)$.

    In the former case, applying the definition of $f_{action}$ for $WriteAttr$,~\eqref{PropBalancedPath0IndGoal} becomes
    \[
        top_k(f_{pCtx}(w',put(C_1,i,\eta,t)))=top_k(f_{pCtx}(w',put(C_2,i,\eta,t)))
    \]
    Then, applying the definition of $put$ on $C_1=\langle c_1, \ldots, c_n \rangle$ and $C_2=\langle c_1', \ldots, c_n' \rangle$:
    \begin{gather*}
        top_k(f_{pCtx}(w',\langle c_1,\ldots,c_{put1}(c_{n-1},k_n,i,\eta,t),c_{put2}(c_n,i,\eta,t)\rangle))=\\
        top_k(f_{pCtx}(w',\langle c_1',\ldots,c_{put1}(c_{n-1}',k_n',i,\eta,t),c_{put2}(c_n',i,\eta,t)\rangle))
    \end{gather*}
    Since $top_k(C_1)=top_k(C_2)$, $\forall i \in \mathbb{N}: \: k \le i \le n, \quad c_i=c_i'$, therefore, the arguments of the $f_{pCtx}(w', \ldots)$ in the two members of~\eqref{PropBalancedPath0IndGoal} share the same $top_k$, proving the inductive step.

    Finally, when $a=EnterCtx(i)$,~\eqref{PropBalancedPath0IndGoal} can be rewritten as
    \[
        top_k(f_{pCtx}(w', push(C_1, (i, \emptyset)))) = top_k(f_{pCtx}(w', push(C_2, (i, \emptyset))))
    \]
    By the definitions of $push$ and $top_k$, and the hypothesis $top_k(C_1) = top_k(C_2)$, then $top_{k+1}(push(C_1, (i, \emptyset))) = top_{k+1}(push(C_1, (i, \emptyset)))$; by applying the inductive hypothesis~\eqref{PropBalancedPath0IndHyp} to $push(C_1, (i, \emptyset))$ and $push(C_2, (i, \emptyset))$:
    \[
        top_{k+1}(f_{pCtx}(w', push(C_1, (i, \emptyset)))) = top_{k+1}(f_{pCtx}(w', push(C_2, (i, \emptyset))))
    \]
    The equality of $top_{k+1}$ trivially implies the equality of $top_k$, thus proving~\eqref{PropBalancedPath0IndGoal}.

    \item If $w=w'v$, the proof is similar. The same inductive hypothesis holds.
        According to the definition of $f_{pCtx}$ and \Cref{PropfpctxComposition} the goal $top_k(C_1) = top_k(C_2)$ can be rewritten as:
    \begin{gather*}
        top_k(f_{pCtx}(w'v, C_1))=top_k(f_{pCtx}(w'v, C_2))\\
        top_k(f_{aCtx}(f_{pCtx}(w', C_1), a))=top_k(f_{aCtx}(f_{pCtx}(w', C_2), a))
    \end{gather*}
    Let $C_3=f_{pCtx}(w', C_1)$ and $C_4=f_{pCtx}(w', C_2)$; from the inductive hypothesis~\eqref{PropBalancedPath0IndHyp} follows that $top_k(C_3)=top_k(C_4)$. As seen in the previous case, only $WriteAttr$ and $EnterCtx$ can modify the context stack and both actions preserve the equality of the $top_k$, thus from the definition of $f_{action}$ follows that
    \[top_k(f_{aCtx}(C_3, a)) = top_k(f_{aCtx}(C_4, a))\]
    \item If $w=v_1w'v_2$ with $w' \in W_{pb}$, $a_{node}(v_1)=a_1=EnterCtx(i)$, and $a_{node}(v_2)=a_2=LeaveCtx()$. The inductive hypothesis~\eqref{PropBalancedPath0IndHyp} still holds and the goal is to prove that
    \[top_k(f_{pCtx}(v_1w'v_2, C_1))=top_k(f_{pCtx}(v_1w'v_2, C_2))\]
    for any two $C_1=\langle c_0, \ldots, c_n \rangle$ and $C_2=\langle c_0', \ldots, c_n'\rangle$ such that $top_k(C_1)=top_k(C_2)$. This goal can be rewritten as:
    \begin{gather*}
        top_k(f_{aCtx}(f_{pCtx}(w', f_{aCtx}(C_1, a_1)), a_2))= top_k(f_{aCtx}(f_{pCtx}(w', f_{aCtx}(C_2, a_1)), a_2))\\
        top_k(pop(f_{pCtx}(w', push(C_1, (i, \emptyset)))))= top_k(pop(f_{pCtx}(w', push(C_2, (i, \emptyset)))))\\
        top_k(pop(f_{pCtx}(w', \langle c_0, \ldots, c_n, (i, \emptyset)\rangle)))= top_k(pop(f_{pCtx}(w', \langle c_0', \ldots, c_n', (i, \emptyset)\rangle)))
    \end{gather*}
    Since $top_k(C_1) = top_k(C_2)$, $c_i=c_i'$ $\forall i \in \left\{k, \dots, n \right\}$, and thus
    \[
        top_{k+1}(\langle c_0, \ldots, c_n, (i, \emptyset)\rangle) = top_{k+1}(\langle c_0', \ldots, c_n', (i, \emptyset)\rangle)
    \]
    This satisfies the premise of the inductive hypothesis~\eqref{PropBalancedPath0IndHyp} (notice that it holds for any $k$, and $k+1$ in particular), therefore
    \[
        top_{k+1}(f_{pCtx}(\langle c_0, \ldots, c_n, (i, \emptyset)\rangle))= top_{k+1}(f_{pCtx}(\langle c_0', \ldots, c_n', (i, \emptyset)\rangle))
    \]
    Applying $pop$ to each of the two stacks removes the topmost item preserving the rest; therefore even though the equality no longer holds for $top_{k+1}$, it still holds for $top_k$, thus proving the goal.
\end{compactitem}
\end{propertyproof}

\setcounter{property}{7}

\begin{property}
Given any two production context stacks, evaluating a balanced path prefix which does not contain $CopyAttr$ or $PropagateAttrs$ actions, wrapped between two $EnterCtx$ nodes produces context stacks with the same head.
    \[
        \forall C_1, C_2 \in \mathcal{C}, w \in W_{pb}, \quad v_{sep} \in V_r^*:\: \exists i: a_{node}(v_{sep}) = EnterCtx(i),
    \]
    \begin{equation}\label{PropBalancedPath1Goal}
        head(f_{pCtx}(v_{sep}wv_{sep}, C_1))=head(f_{pCtx}(v_{sep}wv_{sep}, C_2))
    \end{equation}
\end{property}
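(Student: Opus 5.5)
The plan is to reduce the statement to \Cref{PropBalancedPath0} with $k=1$, followed by \Cref{PropHeadDefImmediate}. The observation behind this is that the first $v_{sep}$ pushes a \emph{fresh} production context $(i,\emptyset)$. That item is identical whatever the initial stack was, so from that point on the two evaluations agree on their topmost item even when $C_1$ and $C_2$ are completely unrelated.

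\textbf{Step 1: unfold the path.} Let $a_{node}(v_{sep}) = EnterCtx(i)$. By \Cref{PropfpctxComposition} and the definition of $f_{aCtx}$ for $EnterCtx$,
\[
f_{pCtx}(v_{sep}wv_{sep}, C_j) = push\bigl(f_{pCtx}(w, push(C_j,(i,\emptyset))),\,(i,\emptyset)\bigr), \qquad j=1,2 .
\]
Set $D_j = push(C_j,(i,\emptyset))$. By the definition of $top_k$ we have $top_1(D_1) = \langle (i,\emptyset)\rangle = top_1(D_2)$. Note that $D_j$ always has at least two items, so $top_1$ and $top_2$ are defined throughout.

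\textbf{Step 2: apply \Cref{PropBalancedPath0}.} Since $w \in W_{pb}$ contains no $CopyAttr$ or $PropagateAttrs$ nodes, \Cref{PropBalancedPath0} with $k=1$ gives $top_1(E_1) = top_1(E_2)$, where $E_j = f_{pCtx}(w, D_j)$. The second $v_{sep}$ then pushes the same fresh item on both sides, so
\[
top_2(push(E_1,(i,\emptyset))) = top_2(push(E_2,(i,\emptyset))).
\]

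\textbf{Step 3: conclude.} \Cref{PropHeadDefImmediate} states that $head$ depends only on the two topmost items. It therefore yields \eqref{PropBalancedPath1Goal} directly. Concretely, $head$ consists of $m_n = \emptyset$ together with the entries of $m_{n-1}$ at position $k_n = i$, and both of these are determined by $top_2$.

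\textbf{Expected obstacle.} The delicate point is the $k=1$ instance of \Cref{PropBalancedPath0}, and specifically $WriteAttr(0,\eta,t)$ actions executed at the outermost nesting level of $w$. Such an action writes into the item \emph{below} the top, which here means into $C_j$'s own top, and that item may differ between the two runs.

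The argument that this is harmless has two parts. First, $c_{put2}$ leaves the top item unchanged when $i=0$, so $top_1$ is preserved. Second, $w$ is a balanced prefix, so it never pops below its starting level: every $LeaveCtx$ in $w$ is matched by an earlier $EnterCtx$ inside $w$. Hence the differing lower items are never exposed at the top.

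If the general-$k$ proof of \Cref{PropBalancedPath0} proves fragile on exactly this case, I would instead re-prove the needed fact directly. The induction would follow the four cases of the definition of $W_{pb}$, carrying the invariant ``the items from the pushed $(i,\emptyset)$ upward coincide in the two runs.''
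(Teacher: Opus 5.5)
Your proof is correct and uses the same ingredients as the paper's. The fresh $(i,\emptyset)$ pushed by the first $v_{sep}$ makes the two tops coincide, \Cref{PropBalancedPath0} with $k=1$ carries this through $w$, and the second push together with \Cref{PropHeadDefImmediate} gives equal heads. The paper instead splits on the four structural cases of $W_{pb}$, using $k=2$ on the inner $w'$ followed by a $pop$ in the $v_1w'v_2$ case. Since \Cref{PropBalancedPath0} already covers every $w \in W_{pb}$, that split is redundant, so your single application is a legitimate streamlining. Your remark that $c_{put2}$ leaves the top unchanged for $WriteAttr(0,\eta,t)$ also makes explicit a $k=1$ subcase that the paper's proof of \Cref{PropBalancedPath0} leaves implicit.
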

\begin{propertyproof}
    The lemma 
    is proven by proving each of the following cases.

    \begin{compactitem}
        \item If $w = \epsilon$, then the goal~\eqref{PropBalancedPath1Goal} becomes
            \[head(f_{pCtx}(v_{sep}v_{sep}, C_1))=head(f_{pCtx}(v_{sep}v_{sep}, C_2))\]
            By definition of $f_{pCtx}$:
            \[
                head(f_{aCtx}(f_{aCtx}(C_1, a_s), a_s)) = head(f_{aCtx}(f_{aCtx}(C_2, a_s), a_s)) \quad\text{ with } a_s=EnterCtx(i)
            \]
            By definition of $f_{aCtx}$ and $f_{action}$ for $EnterCtx$,\footnote{Notice that for brevity, we will keep using the equivalence $a_s=EnterCtx(i)$ throughout this proof.}
            \[
                head(push(push(C_1, (i, \emptyset)), (i, \emptyset))) =
                head(push(push(C_2, (i, \emptyset)), (i, \emptyset)))
            \]
            Let $C_1=\langle c_0, \ldots, c_n\rangle$ and $C_1=\langle c_0', \ldots, c_n' \rangle$; by definition of $push$,
            \[
                head(\langle c_0,\ldots c_n,(i, \emptyset),(i, \emptyset)\rangle) =
                head(\langle c_0',\ldots c_n',(i, \emptyset),(i, \emptyset)\rangle) =
            \]
            which is proven by \Cref{PropHeadDefImmediate}.

        \item If $w =vw'$, with $w' \in W_{pb}$ and $a_{node}(v) = a$ with $a \in A \setminus \{LeaveCtx()\}$ then we must prove that, $\forall C_1, C_2 \in \mathcal{C}$,
            \begin{equation}\label{PropBalancedPath1IndGoal}
                head(f_{pCtx}(v_{sep}vw'v_{sep}, C_1))=head(f_{pCtx}(v_{sep}vw'v_{sep}, C_2))
            \end{equation}
            Let's expand both sides of the equality using a generic $C$ instead of $C_1$ and $C_2$ by applying \Cref{PropfpctxComposition}:
            \begin{gather*}
                head(f_{pCtx}(v_{sep}vw'v_{sep}, C))=\\
                head(f_{pCtx}(vw'v_{sep}, f_{pCtx}(v_{sep}, C)))=\\
                head(f_{pCtx}(v_{sep}, f_{pCtx}(vw', f_{pCtx}(v_{sep}, C))))=
            \end{gather*}
            by definition of $f_{pCtx}$ and then of $f_{aCtx}$ for $EnterCtx(i)$,
            \begin{gather*}
                =head(f_{aCtx}(f_{pCtx}(vw', f_{aCtx}(C, a_s)), a_s))=\\
                head(push(f_{pCtx}(vw', push(C, (i, \emptyset))), (i, \emptyset)))
            \end{gather*}
            The goal~\eqref{PropBalancedPath1IndGoal} can therefore be rewritten as
            \begin{gather*}
                head(push(f_{pCtx}(vw', push(C_1, (i, \emptyset))), (i, \emptyset)))=\\
                head(push(f_{pCtx}(vw', push(C_2, (i, \emptyset))), (i, \emptyset)))
            \end{gather*}
            This can be proven through \Cref{PropHeadDefImmediate}, showing that the last two items of the context stacks passed to $head$ in the two members of the equality are equal. By definition of $push$, the last item is $(i, \emptyset)$ at both members; also by definition of $push$ we know the second-last item is the $top$ of the argument of $push$, therefore we must prove that
            \[
                top(f_{pCtx}(vw', push(C_1, (i, \emptyset))))= top(f_{pCtx}(vw', push(C_2, (i, \emptyset))))
            \]
            By definition, $vw'$ is a balanced path prefix because it is formed by the concatenation of a balanced path prefix $w'$ and a node $v$ such that $a_{node}(v) \neq LeaveCtx()$. Moreover, $top(push(C_1, (i, \emptyset))) = top(push(C_2, (i, \emptyset)))$ by definition of $top$ and $push$. Therefore, the goal is proven by applying \Cref{PropBalancedPath0} with $k=1$, path $vw'$ and context stacks $push(C_1, (i, \emptyset))$ and $push(C_2, (i, \emptyset))$.

        \item If $w=w'v$, the proof is identical to the previous case, as the considerations made on $vw'$ can be made on $w'v$ as well.

        \item If $w=v_1w'v_2$ with $w' \in W_{pb}$, $a_{node}(v_1)=a_1=EnterCtx(j)$, and $a_{node}(v_2)=a_2=LeaveCtx()$:
            \begin{equation}\label{PropBalancedPath1IndGoal4}
                head(f_{pCtx}(v_{sep}v_1w'v_2v_{sep}, C_1)) =
                head(f_{pCtx}(v_{sep}v_1w'v_2v_{sep}, C_2))
            \end{equation}
            Let us expand both members of the equality, using $C$ in place of $C_1$ and $C_2$ respectively:
            \begin{gather*}
                head(f_{pCtx}(v_{sep}v_1w'v_2v_{sep}, C))=\\
                head(f_{aCtx}(f_{aCtx}(f_{pCtx}(w', f_{aCtx}(f_{aCtx}(C, a_s), a_1)), a_2), a_s))=
            \end{gather*}
            recalling that $a_s=EnterCtx(i)$ and thus $f_{aCtx}(C',a_s)=push(C',(i,\emptyset))$,
            \[
                =head(push(f_{aCtx}(f_{pCtx}(w', f_{aCtx}(push(C, (i, \emptyset)), a_1)), a_2),(i,\emptyset)))=
            \]
            recalling $a_1=EnterCtx(j)$ and $a_2=LeaveCtx()$, thus $f_{aCtx}(C',a_1)=push(C',(j,\emptyset))$ and $f_{aCtx}(C',a_2)=pop(C')$,
            \[
                =head(push(pop(f_{pCtx}(w', push(push(C, (i, \emptyset)), (j, \emptyset)))),(i,\emptyset)))
            \]
            With $C_1=\langle c_0, \ldots, c_n \rangle$ and $C_2=\langle c_0', \ldots, c_n' \rangle$, the goal~\eqref{PropBalancedPath1IndGoal4} can be rewritten as
            \begin{gather*}
                head(push(pop(f_{pCtx}(w', \langle c_0, \ldots, c_n, (i, \emptyset), (j, \emptyset)\rangle)),(i,\emptyset)))=\\
                head(push(pop(f_{pCtx}(w', \langle c_0', \ldots, c_n', (i, \emptyset), (j, \emptyset)\rangle)),(i,\emptyset)))
            \end{gather*}
            Let $C_3=\langle c_0, \ldots, c_n, (i, \emptyset), (j, \emptyset)\rangle$ and $C_4=\langle c_0', \ldots, c_n', (i, \emptyset), (j, \emptyset)\rangle$. Then, by definition of $top_k$, $top_2(C_3)=top_2(C_4)$. Applying \Cref{PropBalancedPath0}:
            \[
                top_2(f_{pCtx}(w', C_3))=top_2(f_{pCtx}(w', C_4))
            \]
            Thus, we must prove that
            \[
                head(push(pop(f_{pCtx}(w', C_3)), (i, \emptyset))) =
                head(push(pop(f_{pCtx}(w', C_4)), (i, \emptyset)))
            \]
            Applying $pop$ to $f_{pCtx}(w', C_3)$ and $f_{pCtx}(w', C_4)$ produces $C_5$ and $C_6$ respectively with $top(C_5)=top(C_6)$, reducing the goal to
            \[
                head(push(C_5, (i, \emptyset))) = head(push(C_6, (i, \emptyset)))
            \]
            which is trivially proven by definition of $push$ and $head$ and by the equality $top(C_5)=top(C_6)$.
    \end{compactitem}
\end{propertyproof}

\end{document}